\documentclass[11pt]{article}

\usepackage[T1]{fontenc}
\usepackage{libertinus}
\usepackage{libertinust1math}
\usepackage{microtype}
\microtypesetup{expansion=false}
\usepackage[margin=1.06in]{geometry}
\usepackage{amsmath,amssymb,amsthm,mathtools}
\usepackage{graphicx}
\usepackage{float}
\usepackage{array}
\usepackage{xcolor}
\usepackage[colorlinks=true,
            linkcolor=blue!45!black,
            citecolor=blue!45!black,
            urlcolor=blue!45!black,
            pdftitle={On Trace Averages and Sharp Asymptotics for Powers of Haar-Unitary Matrices}]{hyperref}
\usepackage[nameinlink,noabbrev]{cleveref}

\theoremstyle{plain}
\newtheorem{theorem}{Theorem}
\newtheorem{proposition}{Proposition}
\newtheorem{lemma}{Lemma}
\newtheorem{corollary}{Corollary}

\theoremstyle{remark}
\newtheorem{remark}{Remark}

\DeclareMathOperator{\Tr}{Tr}
\DeclareMathOperator{\wt}{wt}
\newcommand{\U}{\mathrm U}
\newcommand{\F}{\mathcal F}
\newcommand{\TV}{\mathrm{TV}}
\newcommand{\dd}{\,\mathrm d}
\newcommand{\e}{\mathrm e}
\newcommand{\ii}{\mathrm i}
\newcommand{\C}{\mathbb C}
\newcommand{\R}{\mathbb R}
\newcommand{\N}{\mathbb N}
\newcommand{\1}{\mathbf 1}
\newcommand{\norm}[1]{\lVert #1\rVert}

\newcommand{\Ch}{\mathrm{Ch}}
\newcommand{\starD}{\mathbin{\star_D}}

\usepackage{titling}
\title{On Trace Averages and Sharp Asymptotics for Powers of Haar-Unitary Matrices}
\author{
  Ayesha Irfan\\[0.1em]
  \small\href{mailto:ayesha.irfan@bristol.ac.uk}
             {\nolinkurl{ayesha.irfan@bristol.ac.uk}}
}
\date{}

\begin{document}

\maketitle
\vspace{-3.65em}

\begin{abstract}
For a Haar-distributed matrix $U \in \U(N)$, Rains's power-map theorem represents the spectrum of $U^m$, $m \in \mathbb{N}$, as a superposition
of independent lower-dimensional CUE spectra.  We determine how this exact
decomposition appears in trace moments and in sharp Gaussian and
Poisson limit laws. For averages of mixed trace products, a
Charlier--Ewens convolution separates the stable Wick term from its full
finite-$N$ correction. A Fredholm factorisation gives sharp
total-variation asymptotics for
the normalised trace in four regimes, from fixed powers to $m\geq N$.
For a fixed trace power $p$, coherent summation of the leading Laguerre correction yields the
exact factor $\e^{2/p}$. At the mean-spacing scale, we obtain a sharp Poisson
limit for the unfolded spectrum of $U^m$ on a fixed interval as $N, m \rightarrow \infty$; its first-order correction, obtained via the second Charlier polynomial, quantifies the residual under-dispersion of local eigenvalue counts.
\end{abstract}

\setcounter{tocdepth}{2}
\vspace{-1em}
\tableofcontents
\vspace{-1.5em}

\ifdefined\PedagogicalVersion\PedPreface\fi

\section{Introduction}
\label{sec:introduction}

\subsection{Power maps and the emergence of independence}

Let $U_N$ be an $N \times N$ Haar-distributed unitary matrix; it has eigenvalues $e^{i\theta_1}, \ldots, e^{i\theta_N}$ that lie on the unit circle with $\theta_1, \ldots, \theta_N \in [0, 2\pi)$. Each eigenvalue has a uniform marginal distribution, but jointly the eigenvalues exhibit strong repulsion, which produces a configuration far more evenly spaced than independent uniform points. The power map
\begin{equation*}
 U_N\longmapsto U_N^m
\end{equation*}
acts on the eigenangles by $\theta\mapsto m\theta\pmod{2\pi}$ with $m \in \mathbb{N}$.  It folds the
circle over itself $m$ times, so that $m$ distinct points share a common image. After folding, one might expect a complicated rearrangement of CUE correlations. Instead, Rains showed that the spectrum decomposes cleanly into independent unitary spectra \cite{kn:rains03}. In particular, once $m \geq N$, the eigenvalues of $U_N^m$ are distributed exactly as $N$
independent uniform points on the unit circle \cite{kn:rains97}. 

Let
\begin{equation}\label{eq:euclidiean-division}
N=mq+r,\qquad 0\leq r<m,
\end{equation}
be the Euclidean division of $N$ by $m$. Rains proved the distributional identity
\begin{equation}\label{eq:rains-decomposition}
\operatorname{spec}(U_N^m)\ \overset d=
\bigsqcup_{a=1}^{r}\operatorname{spec}(V_{q+1,a})
\ \sqcup\!
\bigsqcup_{a=r+1}^{m}\operatorname{spec}(V_{q,a}),
\end{equation}
where the matrices on the right-hand side are mutually independent and Haar distributed in the indicated unitary groups. Thus, in law, the spectrum of $U_N^m$ is the superposition of $r$ independent CUE spectra of size $q+1$ and $(m-r)$ of size $q$. We refer to the spectrum of each such independent matrix as a block. After zero-dimensional blocks are discarded, there are $\min(m,N)$ such blocks, each containing at most $\lceil N/m\rceil$ eigenvalues. 

The decomposition identifies exactly which correlations the power map removes. Eigenvalues belonging to different blocks are independent, whereas those within the same block retain the characteristic repulsion of a CUE spectrum. When $m=o(N)$, the blocks remain large and each block retains CUE local statistics. When $m\asymp N$, their dimensions remain bounded while their number grows proportionally to $N$. At the endpoint $m\geq N$, precisely $N$ one-dimensional blocks remain and exact independence follows.
\begin{figure}[H]
    \centering
\includegraphics[width=13.7cm]{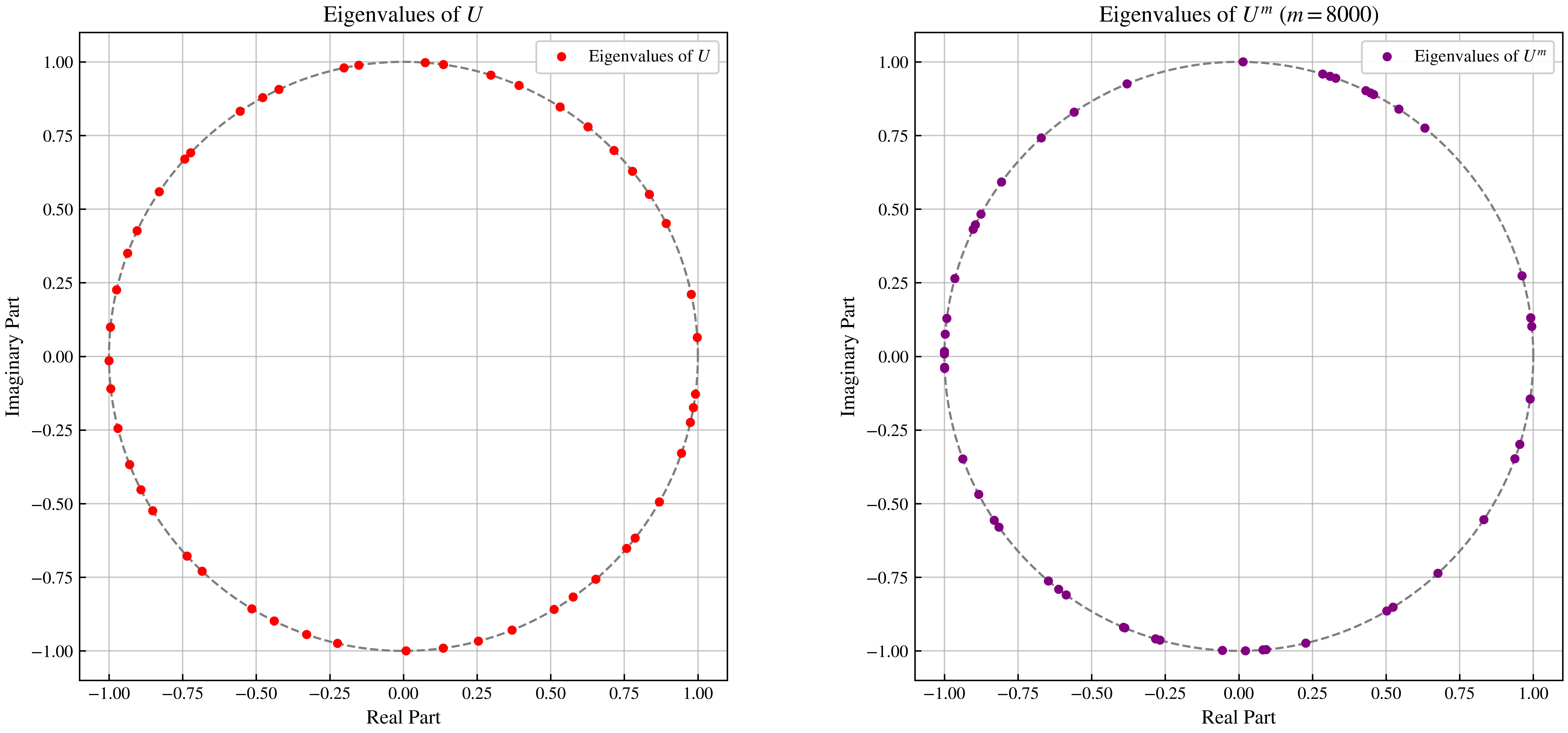}
    \caption{A comparison of the  eigenvalue distribution for $U$ and $U^{m}$ with $m \geq N$ and $U \in U(N)$.}
    \label{fig:galaxy}
\end{figure}
\newpage
\subsection{Trace moments: exact identities and finite-$N$ corrections}
\label{subsec:trace-moments-intro}

To convert this spectral picture into a global scalar statistic, consider the empirical spectral measure and its Fourier transform. For eigenvalues
$\e^{\ii\theta_1},\ldots,\e^{\ii\theta_N}$, set
\begin{equation}\label{eq:intro-trace-fourier}
 \mu_{U_N}:=\frac1N\sum_{a=1}^N\delta_{\theta_a},
 \qquad
 \hat\mu_{U_N}(m)
 =\frac1N\sum_{a=1}^N\e^{\ii m\theta_a}
 =\frac1N\Tr(U_N^m).
\end{equation}
Thus $N^{-1}\Tr(U_N^m)$ is the $m$th Fourier coefficient of $\mu_{U_N}$; it averages the unit complex numbers
$\e^{\ii m\theta_a}$ and its magnitude measures the cancellation in their
complex-vector sum. The trace in \eqref{eq:intro-trace-fourier} belongs to a classical theory of
exact CUE averages. In 1962, Dyson's analysis of the CUE kernel
\cite{kn:dyson62iii} gave the covariance identity
\begin{equation}\label{eq:intro-trace-covariance}
 \mathbb E\!\left[\Tr(U_N^j)\overline{\Tr(U_N^k)}\right]
 =\mathbf 1_{\{j=k\}}( \, j\wedge N),
 \qquad j,k\geq1,
\end{equation}
where $a\wedge b:=\min(a,b)$. 

Later, representation theory provided a natural language for arbitrary mixed trace averages.
A product of traces is a power-sum symmetric function of the eigenvalues, and
Frobenius's formula decomposes it into Schur characters \cite{kn:macdonald1995} indexed by partitions of length at most $N$. As these
characters are orthonormal under Haar measure on $\U(N)$, the unitary integral
can be evaluated exactly as a finite character sum. Explicitly, let
$\mathbf p=(p_1,\ldots,p_{k_1})$ and
$\mathbf q=(q_1,\ldots,q_{k_2})$ be positive integer vectors and let
\begin{equation}\label{eq:intro-exponent-types}
 \mu=(1^{a_1}2^{a_2}\cdots),\qquad
 \nu=(1^{b_1}2^{b_2}\cdots)
\end{equation}
be their exponent types; thus $a_j=\#\{i:p_i=j\}$ and
$b_j=\#\{i:q_i=j\}$. Write
\begin{equation}\label{eq:intro-CN-definition}
 C_N(\mathbf p,\mathbf q)
 :=\mathbb E\!\left[\prod_{i=1}^{k_1}\Tr(U_N^{p_i})
 \prod_{j=1}^{k_2}\overline{\Tr(U_N^{q_j})}\right],
\end{equation}
and note that multiplication of $U_N$ by a scalar phase shows immediately that $ C_N(\mathbf p,\mathbf q)=0$ unless $\sum_{i=1}^{k_1}p_i
 =\sum_{j=1}^{k_2}q_j=:D$. In this balanced case, Frobenius's
character formula and Schur orthogonality give the exact finite character
sum
\begin{equation}\label{eq:intro-character-trace-sum}
 C_N(\mathbf p,\mathbf q)
 =\sum_{\substack{\lambda\vdash D \\\ell(\lambda)\leq N}}
 \chi^\lambda_\mu\chi^\lambda_\nu.
\end{equation}
Here, $\chi^\lambda_\mu$ is the irreducible character of the symmetric group $S_D$ labelled by the partition $\lambda$, evaluated on any permutation of cycle type $\mu$. While explicit, the form \eqref{eq:intro-character-trace-sum} can be difficult to work with for large $D$ due to the algebraic complexity of the characters and the growing number of partitions. 

A separate approach to Haar integration originates in Weingarten’s 1978 formula, which operates directly at the level of matrix entries \cite{kn:weingarten78}. It expresses integrals of products of unitary matrix entries and their complex conjugates as sums over permutations; this approach was later developed into a systematic integration calculus by Collins and \'{S}niady \cite{kn:collinssniady06}, providing new expressions for \eqref{eq:intro-CN-definition}. In 1994, Diaconis and Shahshahani \cite{kn:diasha} identified the decisive simplification of the character sum above, with the range later corrected in \cite{kn:diaconisevans01}. If
$\mathbf a=(a_j)_{j\geq1}$ and $\mathbf b=(b_j)_{j\geq1}$ are finitely
supported and
$\sum_{j\geq1}j(a_j+b_j)\leq2N$, then
\begin{equation}\label{eq:intro-DS-history}
  C_N(\mathbf p,\mathbf q)
 =\1_{\{\mathbf a=\mathbf b\}}
 \prod_{j\geq1}j^{a_j}a_j! . 
\end{equation}
The condition $\sum_{j\geq1}j(a_j+b_j)\leq2N$ implies every partition $\lambda \vdash D$ automatically satisfies $\ell(\lambda)\leq N$. The length restriction in
\eqref{eq:intro-character-trace-sum} therefore discards no contributions, and 
the full character orthogonality relation gives
$\sum_{\lambda\vdash D}\chi^\lambda_\mu\chi^\lambda_\nu
= \mathbf1_{\{\mu=\nu\}} \prod_{j\geq1}j^{a_j}a_j! \,$ recovering \eqref{eq:intro-DS-history}. We refer to $D \leq N$ as the stable range, i.e. the range in which the simplified form \eqref{eq:intro-DS-history} holds; here, the non-vanishing condition $\mathbf{a} = \mathbf{b}$ implies that $\mathbf{p}$ and $\mathbf{q}$ coincide up to permutation and $k_1=k_2:=k$. Outside this range, $k_1$ and $k_2$ may differ in contributing terms and $C_{N}(\mathbf{p}, \mathbf{q})$ takes a correspondingly more complex form. The identity \eqref{eq:intro-DS-history} has a natural Gaussian interpretation, which is made transparent by Wick's rule. Let
$(G_j)_{j\geq1}$ be independent centred complex Gaussians satisfying
\begin{equation}\label{eq:intro-gaussian-trace-family}
 \mathbb E[G_j\overline{G_k}]=j\,\mathbf1_{\{j=k\}}, \qquad \mathbb E[G_j^2]=0.
\end{equation}
Then, by Wick's rule, one has
\begin{equation}\label{eq:intro-DS-as-wick}
 \mathbb E\!\left[
   \prod_{i=1}^{k}G_{p_i}
   \prod_{j=1}^{k}\overline{G_{q_j}}
 \right]
 =
 \sum_{\sigma\in S_k}
 \prod_{i=1}^{k}
 p_i\,\mathbf 1_{\{p_i=q_{\sigma(i)}\}}
 =
 \mathbf 1_{\{\mathbf a=\mathbf b\}}
 \prod_{j\geq1}j^{a_j}a_j!,
\end{equation}
and thus \eqref{eq:intro-DS-history} states that
$C_N(\mathbf p,\mathbf q)$ agrees exactly with the corresponding Gaussian
moment throughout the stable range. This range must not, however, be confused with exact Gaussianity of the traces themselves.  For fixed $N$, we have 
$|\Tr(U_N^m)|\leq N$, whereas a non-degenerate complex Gaussian has
unbounded support. The assertion instead concerns a growing initial collection of
moments. In particular, for fixed $m$,
\begin{equation}\label{eq:intro-single-power-stable-moments}
 \mathbb E|\Tr(U_N^m)|^{2k}=m^k k!
 \qquad\text{whenever}\qquad
 k\leq\left\lfloor\frac Nm\right\rfloor.
\end{equation}

Cumulants provide a complementary description of the transition beyond the stable range. For non-zero integers $r_1,\ldots,r_s$, let $\kappa_s^{(N)}(r_1,\ldots,r_s)$ denote the joint cumulant of
$\Tr(U_N^{r_1}),\ldots,\Tr(U_N^{r_s})$, with
$\Tr(U_N^{-r}):=\overline{\Tr(U_N^{r})}$ for $r \geq 1$; as with the moments,
it vanishes unless $\sum_a r_a=0$. Any cumulants of order at least three vanish identically for a jointly Gaussian family, so their survival measures a departure from Gaussian behaviour. Soshnikov expressed $\kappa_s^{(N)}(r_1,\ldots,r_s)$ as an alternating sum over ordered set partitions \cite{kn:soshnikov00}. When
\begin{equation*}
 \sum_a|r_a|\leq2N,
\end{equation*}
every such higher cumulant vanishes, leaving only the covariance and hence Wick's rule. Beyond this range, Soshnikov and Wu analysed the surviving terms from higher cumulants for powers of order $N$ \cite{kn:soshnikovwu23}. Recovering a full mixed moment from the cumulants, however, still requires summing over set partitions via the
moment--cumulant relation.

Other approaches capture trace averages through different structures.
Pastur and Vasilchuk \cite{kn:pasturvasilchuk04} derived integration-by-parts recurrences for trace
moments in the classical compact groups, while Adler and van Moerbeke \cite{kn:adlervanmoerbeke01}
related generating functions for mixed trace moments to the Toeplitz lattice, a reduction of the $2$-Toda hierarchy. Jiang and Matsumoto \cite{kn:jiangmatsumoto15} obtained related
power-sum moment estimates with fixed powers for the circular $\beta$-ensembles, including an exact finite-$N$ expansion in Jack-polynomial coefficients. We separate the Wick term from the finite-$N$ correction as follows:
\begin{equation}\label{eq:intro-wick-correction}
 C_N(\mathbf p,\mathbf q)
 =
 \mathbf1_{\{\mathbf a=\mathbf b\}}
 \prod_{j\geq1}j^{a_j}a_j!
 +\mathcal R_N(\mathbf p,\mathbf q),
\end{equation}
where $\mathcal R_N(\mathbf p,\mathbf q)=0$ for $D\leq N$. Here, the Wick term is zero if $k_1 \neq k_2$. Theorem~\ref{thm:charlier-correction} resolves $\mathcal R_N$ into
finitely many Charlier--Ewens correction bands; the bands are indexed by $\ell=1,2,\ldots$, and only those satisfying
\begin{equation}\label{eq:intro-exterior-cutoff}
 \ell(N+\ell)\leq D
\end{equation}
can contribute, which identifies both the onset and the finite depth of
the correction. Theorem~\ref{thm:displacement} evaluates $C_N(\mathbf p,\mathbf q)$ by a finite sum of determinants that bypasses the partition sum.

\subsection{Global trace fluctuations and total variation}
\label{subsec:global-TV}

As Haar invariance gives
$\mathbb E\Tr(U_N^m)=0$, \eqref{eq:intro-trace-covariance} yields
\begin{equation}\label{eq:intro-rains-trace-sum}
 X_{N,m}:=\frac{\Tr(U_N^m)}{\sqrt{m\wedge N}},
 \qquad \mathbb E|X_{N,m}|^2=1.
\end{equation}
This is the natural unit-variance normalisation of the $m$th trace.

Although Rains's power-map theorem \eqref{eq:rains-decomposition} was used above to motivate the
spectral transition, it also has an immediate consequence for traces.
Taking traces in \eqref{eq:rains-decomposition} gives
\begin{equation}\label{eq:intro-rains-trace-decomposition}
 X_{N,m}\ \overset d=\ \frac{1}{\sqrt{m\wedge N}}
 \left(
   \sum_{a=1}^{r}\Tr(V_{q+1,a})+
   \sum_{a=r+1}^{m}\Tr(V_{q,a})
 \right).
\end{equation}
Equation~\eqref{eq:intro-rains-trace-decomposition} reveals several routes to
Gaussian behaviour. If $m=p$ is fixed, each block trace converges to a complex Gaussian and the first
non-Gaussian moment of $X_{N,p}$ is pushed to an order proportional to $N$. If $m\asymp N$ and $q=\lfloor N/m\rfloor$ is fixed,
$X_{N,m}$ is a normalised sum of increasingly many independent traces of bounded-size
Haar matrices, so a central limit theorem for triangular arrays applies. If both
$m$ and $q$ diverge, the two mechanisms coexist. Finally, for $m\geq N$, it reduces to the classical central limit theorem for a sum of $N$ independent uniform phases. Let $Z$ have
density
\begin{equation}\label{eq:gaussian-density}
 g(x)=\pi^{-1}\e^{-|x|^2},\qquad x\in\C\simeq\R^2.
\end{equation}
In every case $X_{N,m}$ tends to
$Z$ as $N\to\infty$, but the leading total-variation correction remembers
which mechanism produced the Gaussian limit. Duits and Johansson \cite{kn:duitsjohansson10} proved that this Gaussian
limit holds for arbitrary positive integer sequences $m=m(N)$, with joint
convergence to independent standard complex Gaussians for any fixed
number of distinct, separately normalised powers. 

This leads to our first question: how close is the distribution of $X_{N,m}$ to the standard complex Gaussian distribution? We measure the error in total variation,
\begin{equation}\label{eq:intro-TV-definition}
 d_{\TV}(\mu,\nu):=\sup_A|\mu(A)-\nu(A)|
 =\frac12\int|f_\mu-f_\nu|\dd x,
\end{equation}
where the second equality holds if $\mu$ and $\nu$ have probability densities $f_\mu$ and $f_\nu$. For a random variable or vector $Y$, we write $\mathcal L(Y)$ for its probability distribution, or law, and use the shorthand
\begin{equation}
d_{\TV}(Y,W):=d_{\TV}\left(\mathcal L(Y),\mathcal L(W)\right).
\end{equation}
This distance controls the
error for every measurable event; for a comprehensive background on Gaussian approximation of traces in total variation, see Meckes \cite{kn:meckes19}. Johansson \cite{kn:johansson97} proved that, for any fixed coefficients
$\mathbf c=(c_1,\ldots,c_k)\in\C^k$, there exist
$C_{\mathbf c},\delta_{\mathbf c}>0$ such that
\begin{equation}
d_{\TV} \left(
\Re\sum_{j=1}^k c_j\Tr(U_N^j),
\Re\sum_{j=1}^k c_jG_j
\right)
\leq C_{\mathbf c}N^{-\delta_{\mathbf c}N}
\end{equation}
for all sufficiently large $N$. Thus fixed real linear combinations
of the real and imaginary parts of the traces admit super-exponentially
accurate Gaussian approximations. For growing powers, Fulman \cite{kn:fulman12} used Stein’s method and heat-kernel techniques to obtain an $O(\tfrac m N)$ Kolmogorov bound for a normalised real component. Döbler and Stolz \cite{kn:doblerstolz11} extended this approach to multivariate Wasserstein bounds for vectors of traces of matrix powers.

Furthermore, Johansson and Lambert \cite{kn:johanssonlambert21} obtained total variation bounds between $(\Tr(U_N^j))_{j=1}^k$ and the independent Gaussian vector $(G_j)_{j=1}^k$. For fixed $k$, their bounds give super-exponential convergence of the joint trace distribution to the independent Gaussian law, with the factor $\Gamma(N/k+1)^{-1}$ multiplied by explicit dimension-dependent factors. Their control of the dependence on $k$ also allows the number of traces to grow with the matrix size: the joint total-variation error still tends to zero when $k=\lfloor N^\alpha\rfloor$, for every fixed $0<\alpha<1/2$. Courteaut, Johansson and Lambert \cite{kn:courteautjohanssonlambert24} subsequently studied a single trace power with $m=m(N)$. Combining \eqref{eq:intro-rains-trace-decomposition} with Fredholm determinant estimates, they obtained, for $1\leq m\leq N$ and all sufficiently large $N$,
\begin{equation}
d_{\TV}\left(X_{N,m}, Z\right)
\lesssim
\frac{(q+1)^{1/4}\sqrt{\log N}}
{m^q\Gamma(q+2)},
\qquad q=\left\lfloor\frac Nm\right\rfloor,
\end{equation}
with an absolute implicit constant. This bound describes the transition from factorial decay for fixed powers to polynomial decay when the block dimensions remain bounded. In the independent-phase regime $m\geq N$, equation 1.2 in \cite{kn:courteautjohanssonlambert24} states
\begin{equation}
d_{\TV}(X_{N,m},Z)\leq\frac{C}{N},
\qquad m\geq N,
\end{equation}
with an absolute constant $C>0$, where the order $N^{-1}$ is sharp. For $m=1$ and all sufficiently large $N$, they also proved the two-sided bounds
\begin{equation}
\label{eq:intro-p1-fixed-case}
\frac{c }{\Gamma(N+2)\sqrt{N+1}}
\leq d_{\TV}(X_{N,1},Z)
\leq \frac{\tilde{C} N^{1/4}\sqrt{\log N}}{\Gamma(N+2)},
\end{equation}
where $c,\tilde{C} >0$ are absolute constants. Both bounds contain the factorial factor $\Gamma(N+2)^{-1}$, but do not determine the remaining dependence on $N$ or the leading total-variation constant. For the $L^2$ distance between the trace and Gaussian densities when $m=1$, they obtained the full leading asymptotic, including its constant, with decay of order $\Gamma(N+2)^{-1}N^{-1/4}$.

Fix $p\in\N$ and write $N=pq+r$, with $0\leq r<p$. By building on the methods of Courteaut, Johansson and Lambert, our fixed-power theorem determines the complete leading term:
\begin{equation}\label{eq:intro-fixed-power-theorem}
d_{\TV}(X_{N,p},Z)
\sim c_* \frac{(p-r) \, \e^{2/p}}
{p^{q+1}\Gamma(q+2)(q+1)^{1/4}},
\qquad
c_* :=\frac{2^{3/4}\Gamma(3/4)}{\pi^{3/2}},
\end{equation}
as $N\to\infty$, which is consistent with \eqref{eq:intro-p1-fixed-case} when $p=1$. The first altered moment gives the correct factorial scale, but not the
leading constant: the factor $\e^{2/p}$ is produced by coherent summation
over the complete first correction band. The four global theorems in
Section~\ref{sec:main-results} give sharp asymptotics in the regimes above.

\subsection{Local spectral statistics and Poisson approximation}

The first question concerned the global statistic $\Tr(U^m)$ and asked how close
it is to a complex Gaussian. The second unfolds the eigenvalues of $U^m$ at
mean spacing and asks whether the resulting point process is close to a
Poisson process on a fixed interval. Both questions measure how the power map changes spectral statistics, but their limiting mechanisms are different.

For the local problem, we write the eigenvalues of $U_N^m$ as
$\e^{\ii\theta_1},\ldots,\e^{\ii\theta_N}$ and unfold them to mean spacing
one:
\begin{equation}\label{eq:intro-unfolded-process}
 \Xi_{N,m}:=\sum_{j=1}^N\delta_{N\theta_j/(2\pi)}
 \quad\text{on }\mathbb T_N=\mathbb R/N\mathbb Z.
\end{equation}
For a fixed interval $B$ of length $L$ and a unit-rate Poisson process
$\Pi_B$ on $B$, we study
\begin{equation}\label{eq:intro-local-TV}
 d_{\TV}\bigl(\Xi_{N,m}\restriction _B, \Pi_B \bigr).
\end{equation}

Here $\Xi_{N,m}\restriction_B$ denotes the point process obtained by
retaining only those unfolded eigenvalues whose positions lie in $B$. After unfolding, a block of size $d$ in \eqref{eq:rains-decomposition}
has constant intensity $d/N$, and hence contributes $Ld/N$ eigenvalues
to $B$ on average. The bound $d\leq\lceil N/m\rceil$ implies that these
intensities vanish uniformly as $N,m\to\infty$. At this scale, repulsion
within each CUE block suppresses the probability that two or more
eigenvalues fall in $B$; a block is therefore primarily a zero-or-one source, and their superposition approaches a unit-rate Poisson
process on $B$, even though the
points within one block retain determinantal repulsion.

Decreusefond and Vasseur give quantitative Poisson approximation bounds
for superpositions of independent repulsive point processes
\cite[Theorem~5.7]{kn:decreusefondvasseur18}. Applied to Rains's blocks,
their bound yields
\begin{equation}
 d_{\TV}\bigl(\Xi_{N,m}\restriction _B,\Pi_B\bigr)
 =O_L\bigl((m\wedge N)^{-1}\bigr).
\end{equation}
Our theorem identifies the exact leading term within this scale, including its constant and its dependence on the block dimensions. If $N=mq+r$, the block intensity fractions are $(q+1)/N$ for $r$ blocks and
$q/N$ for the remaining $m-r$. Their quadratic mass is
\begin{equation}\label{eq:intro-beta}
 \beta_{N,m}=\frac{r(q+1)^2+(m-r)q^2}{N^2}.
\end{equation}
Theorem~\ref{thm:local-poisson} proves, uniformly whenever $N,m\to\infty$,
\begin{equation}\label{eq:intro-poisson-summary}
 d_{\TV}\bigl(\Xi_{N,m}\restriction _B,\Pi_B\bigr)
 \sim A(L)\beta_{N,m},\qquad
 A(L):=\frac14\,\mathbb E\bigl|(Z_L-L)^2-Z_L\bigr|,
\end{equation}
where $Z_L$ is Poisson with mean $L$. This determines the leading
total-variation error for the full eigenvalue configuration in $B$, with no
restriction on the relative growth rates of $N$ and $m$. The local eigenvalue count
$\Xi_{N,m}(B)$ has the same leading term and is slightly under-dispersed
relative to $Z_L$.

\subsection*{Brief note on results}

The main contributions fall into three categories.  The first concerns
exact averages of traces beyond the stable range; the other two determine
sharp probability distances at the global and local scales.
\begin{center}
\begingroup
\small
\renewcommand{\arraystretch}{1.18}
\begin{tabular}{@{}
>{\raggedright\arraybackslash}p{0.20\textwidth}
>{\raggedright\arraybackslash}p{0.29\textwidth}
>{\raggedright\arraybackslash}p{0.41\textwidth}@{}}
\hline
\textbf{Problem} & \textbf{Principal representation} & \textbf{Consequence}\\
\hline
Mixed trace moments
& Charlier--Ewens convolution and displacement determinant
& Exact evaluation for arbitrary mixed exponents and an explicit separation
  of the Wick term from $\mathcal R_N$ and its correction bands.\\[0.35em]
Global Gaussian law
& Fredholm factorisation and Fourier--Laguerre transfer
& Sharp total-variation asymptotics in four power regimes, including the
  coherent factor $\e^{2/p}$ for fixed powers.\\[0.35em]
Local Poisson law
& Rains blocks and a Poisson--Charlier expansion
& A sharp process-level and count-level asymptotic, uniform whenever
  $N,m\to\infty$.\\
\hline
\end{tabular}
\endgroup
\end{center}

Section~\ref{sec:main-results} states the main probability results and
the exact mixed trace identities. Section~\ref{sec:exact-identities-tools}
proves the identities using Frobenius's formula, Bernstein operators,
and Heine--Szeg\H{o} \cite{kn:macdonald1995,kn:bumdia02}. It also develops the
Bessel--Fredholm representations and Fourier--Laguerre tools used in the probability results. The
Fredholm formulas use the Borodin--Okounkov--Geronimo--Case identity
\cite{kn:borodinokounkov00, kn:basorwidom00}, while the determinant
estimates come from Courteaut, Johansson and Lambert
\cite{kn:courteautjohanssonlambert24}. Sections~\ref{sec:fixed-power}--\ref{sec:high-power} prove the four global
Gaussian asymptotics, with the fixed-power analysis also using
Szeg\H{o}'s Laguerre asymptotics \cite{kn:szego75}.
Section~\ref{sec:local-poisson} proves the local Poisson theorem by
combining Rains's decomposition \cite{kn:rains03}, a Poisson--Charlier
expansion, and local CUE kernel estimates controlling the point
configuration.
\section{Main results}
\label{sec:main-results}

We identify $\C$ with $\R^2$ and retain the notation $Z$, $g$, and
$d_{\TV}$ introduced in Section~\ref{sec:introduction}. Throughout this
section, $U_N$ is Haar distributed on $\U(N)$. For $N,m\in\N$, write
\begin{equation}
\label{eq:quotient-remainder}
 N=mq+r,\qquad 0\leq r<m,\qquad
 \nu=q+1,\qquad s=m-r.
\end{equation}
In the fixed-power regime we write $m=p$. We use
$L_k=L_k^{(0)}$ for the ordinary Laguerre polynomial. Subscripts in our
asymptotic notation indicate dependence on fixed parameters: constants
in $O_L(\cdot)$ may depend on the interval length $L$, and constants in
$O_q(\cdot)$ may depend on the fixed quotient $q$.

\subsection*{Gaussian and Poisson approximation}

\begin{theorem}[Fixed powers]\label{thm:fixed-power}
Fix $p\in\N$ and take $m=p$ in the notation above. Then, as $N\to\infty$,
\begin{align}
 d_{\TV}\!\left(\frac{\Tr(U_N^p)}{\sqrt p},Z\right)
 &=
 \frac{(p-r)\e^{2/p}}{2p^\nu\Gamma(\nu+1)}
 \int_0^\infty \e^{-u}|L_\nu(u)|\dd u
 +o\!\left(\frac1{p^\nu\Gamma(\nu+1)\nu^{1/4}}\right)
 \label{eq:fixed-laguerre-form}\\
 &\sim
 \frac{2^{3/4}\Gamma(3/4)}{\pi^{3/2}}\,
 \frac{(p-r)\e^{2/p}}
 {p^\nu\Gamma(\nu+1)\nu^{1/4}}.
 \label{eq:fixed-explicit-form}
\end{align}
The error terms are uniform in the finitely many residues
$r\in\{0,\ldots,p-1\}$.
\end{theorem}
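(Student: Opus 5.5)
The plan is to pass to characteristic functions through Rains's decomposition \eqref{eq:intro-rains-trace-decomposition}, extract the leading correction exactly as a radial Laguerre function, and then convert Fourier-side control into an $L^1$ (total-variation) statement. With $m=p$, $N=pq+r$, $\nu=q+1$, we have $\Tr(U_N^p)\overset d=\sum_{a\le r}\Tr V_{\nu,a}+\sum_{a>r}\Tr V_{q,a}$ with independent Haar blocks. Hence for $\xi\in\C$ the characteristic function of $X_{N,p}=\Tr(U_N^p)/\sqrt p$ factorises as $\phi_{\nu}(\xi/\sqrt p)^{r}\,\phi_q(\xi/\sqrt p)^{p-r}$. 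Here $\phi_d(\xi)=\mathbb E\exp(\ii\Re(\bar\xi\Tr V_d))$ is a Toeplitz determinant $D_d$ with symbol $\exp(\ii(\bar\xi z+\xi\bar z)/2)$. By Borodin--Okounkov--Geronimo--Case, $\phi_d(\xi)=\e^{-|\xi|^2/4}\det(1-K_{d,\xi})$, where $K_{d,\xi}$ is the discrete Bessel kernel on $\ell^2(\{d,d+1,\dots\})$. The Gaussian factor $\e^{-|\xi|^2/4}$ is the characteristic function of $Z$, and the block exponents combine to exactly this factor because the variances add to $1$.

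The first step is the leading Fourier correction. Expand $\det(1-K)=1-\Tr K+O(\norm{K}_1^2)$. Here $\Tr K_{q,\xi}$ is of order $(|\xi|^2/4)^{q+1}/((q+1)!)^2$ up to a convergent series, while the size-$\nu$ blocks only enter at order $(|\xi|^2)^{\nu+1}$, one band deeper. So to leading order
\[
\phi_{X}(\xi)-\e^{-|\xi|^2/4}\approx -(p-r)\,\e^{-|\xi|^2/4}\,\Tr K_{q,\xi/\sqrt p},
\]
and only the $p-r$ blocks of size $q$ contribute, which explains the factor $(p-r)$. I will keep the \emph{entire} series $\Tr K_{q,\xi/\sqrt p}=\sum_{j\ge0}c_j(|\xi|^2/4p)^{\nu+j}$ (the complete first correction band) rather than only its first term.

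The second step is Fourier inversion. I will use the radial identity that the inverse Fourier transform of $\e^{-|\xi|^2/4}|\xi|^{2k}$ is a multiple of $g(x)L_k(|x|^2)$. The density difference then becomes $g(x)\sum_j \tilde c_j L_{\nu+j}(|x|^2)/p^{\nu+j}$. In the bulk region $u=|x|^2\asymp\nu$, the relevant Laguerre functions satisfy $L_{\nu+j}(u)\approx(\text{explicit }j\text{-dependent factor})\,L_\nu(u)$ to leading order. Summing these coherently should give $\sum_j 2^j/(p^j j!)=\e^{2/p}$, and hence the leading density difference
\[
\frac{(p-r)\e^{2/p}}{p^\nu\Gamma(\nu+1)}\,g(x)L_\nu(|x|^2).
\]
Taking half the $L^1$ norm in polar coordinates ($\dd x=\pi\,\dd u$ against $g=\pi^{-1}\e^{-u}$) gives \eqref{eq:fixed-laguerre-form}. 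Then Szeg\H{o}'s Plancherel--Rotach/Bessel asymptotics for $\e^{-u/2}L_\nu(u)$ in the oscillatory range, together with averaging $|\cos|$, should give $\int_0^\infty\e^{-u}|L_\nu(u)|\dd u\sim 2c_*\nu^{-1/4}$, which yields \eqref{eq:fixed-explicit-form}.

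The main obstacle is showing that everything else is $o(p^{-\nu}\Gamma(\nu+1)^{-1}\nu^{-1/4})$ in $L^1(\dd x)$, not merely in $L^2$ or pointwise. The error sources are the second-order Fredholm terms, the size-$\nu$ blocks, the cross terms in the product of $p$ factors, the large-$|\xi|$ region (where I will use the Courteaut--Johansson--Lambert determinant bounds), and the non-bulk ranges of $u$ in the Laguerre transfer. The Fourier-side bounds already give $L^2$ control with a gain in $\nu$, and I will upgrade this to $L^1$ by Cauchy--Schwarz with weight $(1+|x|^2)$. That requires bounding $\partial_\xi$-derivatives of the remainder characteristic function in $L^2$, which follows from the Fredholm representation since the kernels are entire in $\xi$. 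The delicate point is that the true main term has size only $\nu^{-1/4}$ relative to the naive scale because of the cancellation in $|L_\nu|$, so each error must beat that scale. Uniformity in the residue $r$ is automatic, since $r$ takes only finitely many values.
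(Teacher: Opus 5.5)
Your architecture matches the paper's: block factorisation, Borodin--Okounkov, linearising $\det(1-K)$, keeping the whole first band, Laguerre inversion, and the $L^1$ asymptotics of $L_\nu$. However, two steps fail as you describe them.

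\emph{The origin of $\e^{2/p}$.} The $L^1(\e^{-u}\dd u)$ mass of $L_\nu$ does not sit in the bulk $u\asymp\nu$. There $\e^{-u}|L_\nu(u)|\leq C\nu^{-1/4}u^{-1/4}\e^{-u/2}$ is exponentially small, and the constant $C_L=\frac{2}{\pi^{3/2}}\int_0^\infty\e^{-u/2}u^{-1/4}\dd u$ comes entirely from $u=O(1)$, the Hilb regime. Nor is there any proportionality $L_{\nu+j}\approx c_jL_\nu$ in the bulk: shifting the degree by $j$ moves the Plancherel--Rotach phase by $j(\pi-2\phi)$, an $O(1)$ amount that depends on $u$. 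In the Hilb regime the phase shift $2\sqrt{(\nu+j)u}-2\sqrt{\nu u}=O(\sqrt{u/\nu})$ vanishes, so the factor is exactly $1$: $\nu^{1/4}\int_0^\infty\e^{-u}|L_{\nu+j}(u)-L_\nu(u)|\dd u\to0$ for each fixed $j$ (Lemma~\ref{lem:laguerre-L1}). Hence all of $(2/p)^j/j!$ must come from the coefficients $\tilde c_j$, which you never compute. A further $j$-dependent Laguerre factor on top of them would corrupt the constant. What is needed:
\begin{itemize}
\item the Vandermonde expansion $I_n(2\sqrt x)^2=\sum_{j}\frac{(2n+j+1)_j}{j!\,(n+j)!^2}x^{n+j}$ of the $h=1$ term in $T_q=\sum_{h\geq1}h(-1)^{q+h}I_{q+h}^2$;
\item combined with the factor $(\nu+j)!$ produced by $\F\bigl(gL_k(|\cdot|^2)\bigr)=\e^{-a}a^k/k!$, this gives coefficient ratios $c_{q,j}p^{-j}$ with $c_{q,j}\to2^j/j!$, while the terms with $h\geq2$ are relatively $O(1/\nu)$;
\item to pass the limit through the $j$-sum, the uniform bounds $c_{q,j}\leq4^j/j!$ and $\int_0^\infty\e^{-u}|L_n(u)|\dd u\leq C(n+1)^{-1/4}$.
\end{itemize}

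\emph{The passage from $L^2$ to $L^1$.} The weight $(1+|x|^2)$ requires $\norm{\Delta_t\mathcal E}_2$ over all frequencies, including $\rho\geq2c_0\nu$. There the Fredholm determinant is not a small perturbation of $1$. The only available input, the Courteaut--Johansson--Lambert bounds \eqref{eq:CJL-plateau-exact}--\eqref{eq:CJL-power-exact}, controls $|\Psi_d(\rho)|$ but not its derivatives. That the kernels are entire gives no quantitative decay. Cauchy estimates would need super-exponential smallness of $\mathsf H_d(z)$ off the negative real axis, where only trivial bounds are available. For $p\geq3$, Leibniz's rule leaves an undifferentiated decaying factor. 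For $p=1$, however, nothing controls $\nabla\Psi_N$ or $\Delta\Psi_N$ at large frequency, and the same holds for the cross term $\nabla\Psi_a\cdot\nabla\Psi_b$ when $p=2$. The theorem covers both cases.

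The paper avoids derivatives entirely:
\begin{itemize}
\item it bounds $f_{N,p}-g-\Lambda_{N,p}$ on $|x|\leq R_\nu=(8\log A_{N,p}^{-1})^{1/2}\asymp(\nu\log\nu)^{1/2}$ by Cauchy--Schwarz;
\item this is affordable precisely because the $L^2$ error is $O(A_{N,p}\nu^{-5/4})$, giving $O(A_{N,p}\nu^{-3/4}\sqrt{\log\nu})=o(A_{N,p}\nu^{-1/4})$;
\item it disposes of the three tails with the sub-Gaussian bound $\mathbb P(|X_{N,p}|>R)\leq4\e^{-R^2/2}$ (from $\mathsf H_d(z)\leq\e^{z}$ for $z\geq0$), with $\mathbb P(|Z|>R)=\e^{-R^2}$, and with Laguerre orthogonality for $\Lambda_{N,p}$.
\end{itemize}
You should either adopt this truncation or prove genuine far-frequency derivative bounds.
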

The first altered radial moment is \(\mathbb E|X_{N,p}|^{2\nu}\), but it does not by itself give the leading total-variation constant. Every fixed shift
$L_{\nu+j}$ in the first correction band has the same leading oscillatory
profile. Relative to the coefficient of $L_\nu$, their coefficients
converge to $(2/p)^j/j!$, and hence add up to $\e^{2/p}$. 

\begin{theorem}[Independent-phase regime]\label{thm:high-power}
Let $m=m_N\geq N$. Then, as $N\to\infty$,
\begin{equation}\label{eq:high-power-main}
 d_{\TV}\!\left(\frac{\Tr(U_N^m)}{\sqrt N},Z\right)
 =
 \frac{\e^{-2}\bigl(\sqrt2\cosh(\sqrt2)-\sinh(\sqrt2)\bigr)}{N}
 +O(N^{-2}).
\end{equation}
The estimate is uniform over all choices of $m_N\geq N$.
\end{theorem}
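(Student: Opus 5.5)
Since $m\geq N$, Rains's decomposition \eqref{eq:rains-decomposition} has $q=0$, $r=N$: the spectrum of $U_N^m$ is $N$ independent one-dimensional blocks. Hence $\Tr(U_N^m)\overset d=\sum_{j=1}^N\e^{\ii\theta_j}$ with $\theta_j$ i.i.d.\ uniform, independently of $m$. This already gives the claimed uniformity in $m_N$. The problem therefore reduces to a sharp local Edgeworth expansion, in $L^1$, for the normalised sum $X_N=N^{-1/2}\sum_j\e^{\ii\theta_j}$ of i.i.d.\ uniform unit vectors in $\R^2$.

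The characteristic function of one summand is $J_0(|t|)$. So for $N\geq3$ the law of $X_N$ has a radial density
\[
f_N(x)=\frac1{2\pi}\int_0^\infty J_0\!\left(\rho/\sqrt N\right)^N J_0(\rho|x|)\,\rho\dd\rho .
\]
I will use $\log J_0(s)=-s^2/4-s^4/64+O(s^6)$ near $0$. This gives
\[
J_0(\rho/\sqrt N)^N=\e^{-\rho^2/4}\Bigl(1-\tfrac{\rho^4}{64N}+R_N(\rho)\Bigr),
\]
with $|R_N(\rho)|\lesssim N^{-2}(\rho^6+\rho^8)$ for $\rho\leq\varepsilon\sqrt N$. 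No $N^{-1/2}$ or $N^{-3/2}$ terms appear, because the summands are rotation invariant, so all odd-order cumulants vanish. The error after the first correction is therefore genuinely $O(N^{-2})$.

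Inverting the Fourier transform termwise, the $\rho^4$ term corresponds to $\Delta^2$ acting on $g$. Writing $u=|x|^2$, the radial Laplacian is $4(u\partial_u^2+\partial_u)$, and a short computation gives $\Delta^2\e^{-u}=16(u^2-4u+2)\e^{-u}=32L_2(u)\e^{-u}$. Hence
\[
f_N(x)=g(x)-\frac{1}{2\pi N}L_2(|x|^2)\e^{-|x|^2}+E_N(x).
\]
Using $\dd x=\pi\dd u$ for radial functions, this yields
\[
d_{\TV}(X_N,Z)=\frac1{4N}\int_0^\infty\e^{-u}|L_2(u)|\dd u+O\bigl(\norm{E_N}_{L^1}\bigr).
\]
The integral is explicit: $L_2(u)=1-2u+u^2/2$ has roots $2\pm\sqrt2$. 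Integrating $\e^{-u}L_2$ with sign changes at these roots produces factors $\e^{-2}\e^{\pm\sqrt2}$, and should reproduce the constant $\e^{-2}(\sqrt2\cosh\sqrt2-\sinh\sqrt2)$ in \eqref{eq:high-power-main}. This routine check fixes the normalisation.

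The main obstacle is the bound $\norm{E_N}_{L^1}=O(N^{-2})$. A pointwise bound on $E_N$ is not enough, because $L^1$ requires control of the tails in $x$. I would bound $\norm{E_N}_{L^1}\leq C\norm{(1+|x|^2)E_N}_{L^2}$ by Cauchy--Schwarz. Then by Plancherel I would pass to $\hat E_N$ and $\Delta_t\hat E_N$, where $\hat E_N(t)=J_0(|t|/\sqrt N)^N-\e^{-|t|^2/4}(1-|t|^4/(64N))$. I would split the integration in $\rho=|t|$ into three ranges:
\begin{itemize}
\item For $\rho\leq\varepsilon\sqrt N$, the cumulant expansion together with its two derivatives gives $|\hat E_N|+|\Delta\hat E_N|\lesssim N^{-2}(1+\rho)^{10}\e^{-c\rho^2}$.
\item For $\varepsilon\sqrt N<\rho\leq C\sqrt N$, I use $|J_0(s)|\leq1-c_\varepsilon$, which gives exponentially small contributions.
\item For $\rho>C\sqrt N$, I use $|J_0(s)|\lesssim s^{-1/2}$, so $|J_0|^N\lesssim(\rho/\sqrt N)^{-N/2}$. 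This is square-integrable with weight $\rho$, with exponentially small mass, once $N$ is moderately large.
\end{itemize}
The derivatives of $J_0^N$ are handled the same way using $J_0'=-J_1$. Finitely many small $N$ are absorbed into the constant. This gives the $O(N^{-2})$ remainder uniformly in $m_N\geq N$.
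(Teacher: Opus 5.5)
Your proposal is correct and follows essentially the same route as the paper: Rains's reduction to $N$ independent uniform phases, the expansion $J_0(|t|/\sqrt N)^N=\e^{-|t|^2/4}\bigl(1-|t|^4/(64N)\bigr)+O(N^{-2})$, identification of the correction with $-\frac{1}{2N}L_2(|x|^2)g(x)$, and the weighted Cauchy--Schwarz/Plancherel transfer via $\norm{R_N}_2+\norm{\Delta R_N}_2=O(N^{-2})$ over the same three frequency ranges (the paper only subdivides your inner range at $|t|=N^{1/8}$). One cosmetic slip: for $m=N$ exactly the Euclidean division gives $q=1$, $r=0$ rather than $q=0$, $r=N$, but in both cases there are $N$ one-dimensional blocks, so nothing changes.
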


\begin{theorem}[Joint growth of power and block dimension]
\label{thm:joint-growth}
Let $m=m_N<N$ and suppose that $m\to\infty$ and $q\to\infty$.
Then, uniformly in the remainder $r$,
\begin{align}
 d_{\TV}\!\left(\frac{\Tr(U_N^m)}{\sqrt m},Z\right)
 &=
 \frac{m-r}{2m^\nu\Gamma(\nu+1)}
 \int_0^\infty \e^{-u}|L_\nu(u)|\dd u
 +o\!\left(
 \frac{m-r}{m^\nu\Gamma(\nu+1)\nu^{1/4}}
 \right)
 \label{eq:joint-laguerre-form}\\
 &\sim
 \frac{2^{3/4}\Gamma(3/4)}{\pi^{3/2}}
 \frac{m-r}{m^{q+1}\Gamma(q+2)(q+1)^{1/4}}.
 \label{eq:joint-explicit-form}
\end{align}
No relation between the rates at which $m$ and $q$ diverge is required.
\end{theorem}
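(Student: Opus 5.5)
We work directly from Rains's decomposition \eqref{eq:intro-rains-trace-decomposition}. It shows that $X_{N,m}$ is $m^{-1/2}$ times a sum of $r$ independent traces of Haar matrices in $\U(\nu)$ and $s=m-r$ independent traces of Haar matrices in $\U(q)$. Since $q\to\infty$, every block is large, and the characteristic function factorises as
\begin{equation*}
 \phi_{N,m}(\xi)=\Phi_{\nu}\bigl(\xi/\sqrt m\bigr)^{r}\,\Phi_{q}\bigl(\xi/\sqrt m\bigr)^{s},
 \qquad \Phi_d(\xi)=\mathbb E\,\e^{\ii\Re(\bar\xi\Tr V_d)}.
\end{equation*}
Each $\Phi_d$ is a Toeplitz determinant with Bessel symbol. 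We would write each factor via the Borodin--Okounkov--Geronimo--Case identity as $\e^{-|\xi|^2/4}$ times a Fredholm determinant $\det(I-K_d(\xi))$, where $K_d$ is a Bessel kernel on $\ell^2(\{d,d+1,\dots\})$.

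\textbf{Step 1: linearise the Fredholm product.} Since $|\xi|/\sqrt m$ is small on the relevant range while $d\to\infty$, we have $\log\det(I-K_d)=-\Tr K_d+O(\norm{K_d}^2)$. The leading trace is $\sum_{k\geq d}$ of products of Bessel functions, dominated by $J_d(|\xi|/\sqrt m)^2$-type terms of size $(|\xi|^2/4m)^{d}/\Gamma(d+1)^2$. The $\U(q)$ blocks contribute corrections of order $m^{-q}/q!$. Comparing with \eqref{eq:joint-explicit-form}, which scales like $m^{-(q+1)}/(q+1)!$, these look larger. However, the $s$ blocks of size $q$ together with the $r$ blocks of size $\nu$ combine, after normalisation by $\sqrt m$, so that the total error equals $(m-r)$ times the $\U(\nu)$-type correction. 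This is to be checked against the exact moment identity: the first non-Wick moment is $\mathbb E|X|^{2\nu}$, with deficit coming from the $s$ smaller blocks. Here $\U(q)$ first deviates at moment $q+1=\nu$ through the term $\mathbb E|\Tr V_q|^{2\nu}-\nu!$, which carries $m^{-\nu}$ after normalisation. So the correct bookkeeping is via the first deviating moment, and the $r$ blocks of size $\nu$ contribute only at order $m^{-\nu-1}$. We would use the Courteaut--Johansson--Lambert determinant estimates to bound $\norm{K_d}$ and the quadratic remainder uniformly, giving
\begin{equation*}
 \phi_{N,m}(\xi)=\e^{-|\xi|^2/4}\Bigl(1-s\,\frac{(-1)^{\nu}(|\xi|^2/4)^{\nu}}{m^{\nu}\,\nu!^{\,2}}\,\cdot\nu!\,(1+o(1))\Bigr)+\text{small},
\end{equation*}
with the constant fixed so that it reproduces the moment deficit.

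\textbf{Step 2: Fourier--Laguerre transfer.} The radial function $\e^{-|\xi|^2/4}(|\xi|^2/4)^\nu$ inverts to $g(x)$ times a multiple of $L_\nu(|x|^2)$, by the standard Laguerre--Hankel identity. Therefore the density difference is
\begin{equation*}
 f_{N,m}(x)-g(x)=\frac{s}{m^\nu\Gamma(\nu+1)}\,g(x)\,L_\nu(|x|^2)\,(1+o(1)),
\end{equation*}
up to an error that is $o(m^{-\nu}\nu^{-1/4}/\Gamma(\nu+1))$ in $L^1$. Integrating in polar coordinates with $u=|x|^2$ gives $\tfrac12\int|f-g|=\tfrac{s}{2m^\nu\Gamma(\nu+1)}\int_0^\infty\e^{-u}|L_\nu(u)|\dd u$, which is \eqref{eq:joint-laguerre-form}. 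Unlike the fixed-power case, the higher shifts $L_{\nu+j}$ carry coefficients $(2/m)^j/j!$ relative to the leading one. Their sum is $\e^{2/m}\to1$, which explains the absence of the $\e^{2/p}$ factor. The total contribution of all shifts with $j\geq1$ is $O(1/m)$ relative to the leading term, and the bands $\ell\ge2$ contribute $O(m^{-\nu})$ relatively.

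\textbf{Step 3: Szegő asymptotics.} Using Szegő's Plancherel--Rotach/Bessel-type asymptotics for $\e^{-u/2}L_\nu(u)$, one gets $\int_0^\infty\e^{-u}|L_\nu(u)|\dd u\sim 2c_*\nu^{-1/4}$ with $c_*=2^{3/4}\Gamma(3/4)/\pi^{3/2}$. The oscillatory bulk contributes the average of $|\cos|$, namely $2/\pi$, and the edges are negligible. This gives \eqref{eq:joint-explicit-form}; presumably this is the same computation as in Theorem~\ref{thm:fixed-power}, so it can be cited.

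\textbf{Main obstacle.} The hardest part is the uniformity of the $L^1$ error control in Step 1–2 when both $m$ and $q$ diverge at arbitrary relative rates. The Fredholm remainders must be shown to be $o(\nu^{-1/4})$ relative to the leading term in $L^1(\dd x)$, not merely pointwise in $\xi$. To handle this, we would split frequency space into $|\xi|\le \sqrt{m}\,\nu^{1/2-\epsilon}$, where the expansion is controlled, and the complement, where we bound $|\phi_{N,m}|$ by a product of block decay estimates. In the low-frequency region we would pass to $L^1$ via a weighted $L^2$ (Sobolev-type) bound on $f-g-\text{main term}$, losing only polynomial factors in $\nu$ that are absorbed by the extra $m^{-1}$ or $\nu^{-1/2}$ gained from the remainder.
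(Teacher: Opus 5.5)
Your overall architecture matches the paper's. That means the block factorisation $\varphi_{N,m}=\e^{-a}D_q(\rho)^sD_{q+1}(\rho)^r$, linearisation through the first traces with Courteaut--Johansson--Lambert control of the quadratic remainder, and Fourier--Laguerre transfer. It also means the coefficient-ratio bookkeeping: the shifts $L_{\nu+j}$, the displacements $h\geq2$ and the $r$ larger blocks are $O(1/m+1/\nu)$ relative to $A=s/(m^\nu\nu!)$. Your displayed frequency correction carries a spurious factor $\nu!$; it should be $-s(-a/m)^\nu/(\nu!)^2$, which inverts to $(-1)^qA\,g\,L_\nu$. The genuine gap is in the step you flag as the main obstacle. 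It breaks precisely the uniformity in the relative rates of $m$ and $q$ that the theorem asserts.

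\textbf{Why the cutoff fails.} Your cutoff $|\xi|\leq\sqrt m\,\nu^{1/2-\epsilon}$, i.e.\ block variable $\rho\leq\nu^{1/2-\epsilon}$, puts the boundary at $a=|\xi|^2/4\approx m\nu^{1-2\epsilon}/4$. Beyond it you only bound $|\varphi_{N,m}|$. But the quantity to control there is $\varphi_{N,m}-\e^{-a}-\lambda$, and a bound on $|\varphi_{N,m}|$ gives no cancellation against $\e^{-a}$; indeed $|\varphi_{N,m}|\approx\e^{-a}$ there. So you need $\e^{-a_{\mathrm{cut}}}=o(A)$. Since $\log(1/A)\asymp\nu\log(m\nu)$, this requires $m\nu^{-2\epsilon}\gg\log(m\nu)$. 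That fails when $q$ grows much faster than $m$ (e.g.\ $m\asymp\log N$): the discarded Gaussian tail then exceeds the entire leading term by a factor $\exp\{c\,\nu\log(m\nu)\}$.

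\textbf{The fix.} As in the paper, run the linearisation on the much larger disc $\rho\leq2c_0\nu$, where the CJL bounds still hold. There $|T_q|+|T_{q+1}|\leq Cy^2/\nu$ with $y=\rho/(2\nu)$, so $|L|\leq Ca/\nu^3$ and $|R|\leq Ca/\nu^4$. In $\e^{-a}\{(\e^{-z}-1+z)-R\}$ the factor $\e^{|z|}$ is then absorbed by $\e^{-a/2}$ despite the growing number of factors. This is why one works with $\exp\{-L-R\}$ rather than expanding $m$ factors, and it gives $\norm{\mathcal E}_{L^2(\rho\leq2c_0\nu)}\leq C^\nu A^2$. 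Beyond $\rho=2c_0\nu$ one has $a\gtrsim m\nu^2$. The CJL plateau and power-decay bounds give $O(\e^{-cmq^2})$ there, which is $o(A)$ for all rates because $mq^2/(\nu\log(m\nu))\to\infty$.

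\textbf{A related point on the $L^2\to L^1$ step.} Polynomial losses in your transfer can be absorbed only by this super-exponentially small nonlinear remainder (relative size $C^\nu A$). They cannot be absorbed by the $m^{-1}$ gain of the linear shifts, since $m$ may be small compared with any power of $\nu$. Handle the linear tail directly in $L^1$ via $\int_0^\infty\e^{-u}|L_n(u)|\dd u\leq C(n+1)^{-1/4}$ and the coefficient ratios. For the remainder, the paper uses a spatial cutoff at $R=\sqrt{8\log A^{-1}}$ with Cauchy--Schwarz and sub-Gaussian tails. This also avoids needing derivative bounds on $\varphi_{N,m}$ at high frequency.
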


The assumptions of \cref{thm:joint-growth} are equivalently $m\to\infty$
and $m=o(N)$. Both the fixed-power and joint-growth regimes therefore have
diverging block dimensions. Their leading terms have the same form,
except for the factor $\e^{2/p}$ when the power is fixed, consistently
with $\e^{2/m}\to1$ when $m\to\infty$.

\begin{theorem}[Fixed block dimension]
\label{thm:fixed-quotient}
Suppose that $q\geq1$ is fixed and $m=m_N\to\infty$. Set
\begin{equation}\label{eq:fixed-quotient-AB}
 A_{m,r}=\frac{s}{m^\nu\Gamma(\nu+1)},
 \qquad
 B_{m,r}=\frac{r}{m^{\nu+1}\Gamma(\nu+2)}.
\end{equation}
Then, uniformly in $0\leq r<m$,
\ifdefined\PedagogicalVersion\begingroup\small\fi
\begin{align}
 d_{\TV}\!\left(\frac{\Tr(U_N^m)}{\sqrt m},Z\right)
 &=
 \frac12\int_0^\infty \e^{-u}
 \left|
   (-1)^q A_{m,r}L_\nu(u)
   +(-1)^{q+1}B_{m,r}L_{\nu+1}(u)
 \right|\dd u
 +o_q\!\left(\frac{s+1}{m^\nu}\right).
 \label{eq:fixed-quotient-two-mode}
\end{align}
\ifdefined\PedagogicalVersion\endgroup\fi
In particular, if $r/m\to\theta\in[0,1)$, then
\begin{equation}\label{eq:fixed-quotient-interior}
 d_{\TV}\!\left(\frac{\Tr(U_N^m)}{\sqrt m},Z\right)
 \sim
 \frac{1-\theta}{2m^q\Gamma(q+2)}
 \int_0^\infty\e^{-u}|L_{q+1}(u)|\dd u.
\end{equation}
If instead $s\to c\in\N$, then
\begin{align}
 d_{\TV}\!\left(\frac{\Tr(U_N^m)}{\sqrt m},Z\right)
 &\sim
 \frac1{2m^{q+1}}
 \int_0^\infty\e^{-u}
 \left|
   \frac{c}{\Gamma(q+2)}L_{q+1}(u)
   -\frac1{\Gamma(q+3)}L_{q+2}(u)
 \right|\dd u.
 \label{eq:fixed-quotient-endpoint}
\end{align}
\end{theorem}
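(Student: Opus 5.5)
The plan is to prove \cref{thm:fixed-quotient} by working on the Fourier side, using the Rains decomposition \eqref{eq:intro-rains-trace-decomposition}. With $N=mq+r$ and $q$ fixed, $X_{N,m}$ is $m^{-1/2}$ times a sum of $r$ independent traces of Haar matrices in $\U(\nu)$ and $s$ independent traces in $\U(q)$. So for $\xi\in\C$ the characteristic function factorises as
\begin{equation*}
 \varphi_X(\xi)=\phi_{\nu}(\xi/\sqrt m)^{r}\,\phi_{q}(\xi/\sqrt m)^{s},
 \qquad \phi_d(\eta):=\mathbb E\,\e^{\ii\Re(\bar\eta\Tr V_d)},
\end{equation*}
where $V_d$ is Haar on $\U(d)$.

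\textbf{Step 1: one-block correction.} Each $\phi_d$ is a Toeplitz determinant with symbol $\exp(\ii\Re(\bar\eta\e^{\ii\theta}))$. Via the Bessel--Fredholm (Borodin--Okounkov--Geronimo--Case) representation, I would write
\begin{equation*}
 \phi_d(\eta)=\e^{-|\eta|^2/4}\bigl(1+\varepsilon_d(\eta)\bigr).
\end{equation*}
The moments of $\Tr V_d$ agree with those of $Z$ up to order $2d$, and the first altered radial moment is $\mathbb E|\Tr V_d|^{2(d+1)}$. This gives
\begin{equation*}
 \varepsilon_d(\eta)=(-1)^{d+1}\frac{(|\eta|^2/4)^{d+1}}{\Gamma(d+2)^2}\bigl(1+O_d(|\eta|^2)\bigr)
\end{equation*}
for small $|\eta|$, with the next coefficient explicit. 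The Fredholm estimates of Courteaut--Johansson--Lambert give a uniform bound $|\varepsilon_d(\eta)|\lesssim_d \min(|\eta|^{2d+2},\e^{c|\eta|^2})$.

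\textbf{Step 2: expand the product.} On $|\xi|\leq m^{1/4}$, say, I expand $\varphi_X=\e^{-|\xi|^2/4}\bigl(1+s\varepsilon_q+r\varepsilon_\nu+\text{cross terms}\bigr)$, with each $\varepsilon$ evaluated at $\xi/\sqrt m$. The bookkeeping of orders is as follows.
\begin{itemize}
\item $s\varepsilon_q$ contributes its leading term, of order $s\,m^{-\nu}|\xi|^{2\nu}$. Its subleading term is of order $s\,m^{-\nu-1}$, which is $o(s/m^\nu)$.
\item $r\varepsilon_\nu$ contributes at order $r\,m^{-\nu-1}|\xi|^{2\nu+2}$. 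This is kept as the $B$-term, since it matters exactly when $s$ stays bounded and $r\sim m$.
\item The cross terms are $O\bigl((s+r/m)^2m^{-2\nu}\bigr)$. Since $\nu\geq2$ and $s\leq m$, they are $O(m^{2-2\nu})=o\bigl((s+1)/m^\nu\bigr)$.
\end{itemize}

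\textbf{Step 3: invert the Fourier transform.} The inverse Fourier transform of $\e^{-|\xi|^2/4}(|\xi|^2/4)^k$ on $\R^2$ equals $(-1)^k k!\,L_k(|x|^2)\,g(x)$. Using this identity, the main terms invert to
\begin{equation*}
 g(x)\bigl[(-1)^qA_{m,r}L_\nu(|x|^2)+(-1)^{q+1}B_{m,r}L_{\nu+1}(|x|^2)\bigr],
\end{equation*}
with the signs and factorials matching \eqref{eq:fixed-quotient-AB}. Taking half the $L^1$ norm and substituting $u=|x|^2$ in polar coordinates ($\dd x=\pi\dd u$ after angular integration) yields the main term of \eqref{eq:fixed-quotient-two-mode}. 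The corollaries then follow directly. If $r/m\to\theta$, then $s/m\to1-\theta$, $A$ dominates and $B/A=O(1/m)$, which gives \eqref{eq:fixed-quotient-interior}. If $s\to c$, then $r/m\to1$ and both modes are of order $m^{-\nu}$, which gives \eqref{eq:fixed-quotient-endpoint}.

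\textbf{Main obstacle.} The hard step is converting the characteristic-function remainder into an $L^1$ bound on the density difference of size $o((s+1)/m^\nu)$, uniformly in $r$. My first attempt would use the standard inequality $\norm{f}_{L^1(\R^2)}\lesssim\norm{(1+|x|^2)f}_{L^2}$. By Plancherel this reduces the problem to $L^2$ bounds on the remainder and its $\xi$-derivatives up to order two. On $|\xi|\leq m^{1/4}$ these derivatives are controlled by Cauchy estimates, since $\varepsilon_d$ is entire. On $|\xi|>m^{1/4}$ I would use the Courteaut--Johansson--Lambert decay of $\phi_d$ together with the product over $m$ factors: each factor is at most $\e^{-c|\xi|^2/m}$ for $|\xi|\lesssim\sqrt m$, and the product is super-polynomially small beyond that. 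The delicate regime is $|\xi|\asymp\sqrt m$, where the block characteristic functions are no longer close to Gaussian. There I would rely on the strict inequality $|\phi_d(\eta)|<1$ for $\eta\neq0$, uniformly on compacts, to get exponential decay in $m$.
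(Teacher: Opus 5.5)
Your route is the paper's. You use the same block factorisation of the characteristic function (the paper derives it from the displacement determinant rather than from Rains, but the identity is the same). You then linearise around the first moment defect, apply the transfer $\norm{f}_1\lesssim\norm{R}_2+\norm{\Delta R}_2$, and control high frequencies with a compact-annulus contraction plus fixed-order Bessel decay. Your smaller central disc $|\xi|\le m^{1/4}$, with Gaussian-type decay of each factor beyond it, is a harmless simplification of the paper's region $a\le\delta_q m$.

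What fails as written is the sign bookkeeping, and it fails exactly where the statement is delicate. Since $\mathbb E|\Tr V_d|^{2(d+1)}=(d+1)!-1$, the $J_0$-series gives
\[
 \phi_d(\eta)=\e^{-x}-\frac{(-x)^{d+1}}{\Gamma(d+2)^2}+O(x^{d+2}),
 \qquad x=\frac{|\eta|^2}{4}.
\]
Hence $\varepsilon_d=(-1)^d x^{d+1}/\Gamma(d+2)^2+\cdots$, not $(-1)^{d+1}$. Also, by \cref{lem:laguerre-fourier}, $\F^{-1}[\e^{-a}a^k]=k!\,L_k(|x|^2)g(x)$ with no factor $(-1)^k$; for $k=1$ one checks directly that $\F\bigl(gL_1(|\cdot|^2)\bigr)=a\e^{-a}$.

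The first slip flips both modes, which is harmless for total variation. The spurious $(-1)^k$, however, puts $(-1)^\nu$ on one mode and $(-1)^{\nu+1}$ on the other. Carried through literally, your formulas give $g\,(A_{m,r}L_\nu+B_{m,r}L_{\nu+1})$ instead of $(-1)^q g\,(A_{m,r}L_\nu-B_{m,r}L_{\nu+1})$. This is invisible in the interior case, where only one mode survives. At the endpoint $s\to c$, though, it replaces the integrand of \eqref{eq:fixed-quotient-endpoint} by $\bigl|\frac{c}{\Gamma(q+2)}L_{q+1}+\frac{1}{\Gamma(q+3)}L_{q+2}\bigr|$, which gives a different constant. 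So your claim that the signs match does not follow from what you wrote; redo Step 3 with the correct identities.

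Two smaller repairs are also needed.
\begin{itemize}
\item In Step 2, the crude bound $O(m^{2-2\nu})$ is not $o\bigl((s+1)/m^\nu\bigr)$ when $q=1$ and $s$ stays bounded. Keep the sharper bound $(s+1)^2m^{-2\nu}$, whose ratio to $(s+1)m^{-\nu}$ is at most $2m^{1-\nu}\to0$.
\item For $|\eta|\ge\tau_q$, the Courteaut--Johansson--Lambert bound controls $|\phi_d|$ but not the derivatives needed in the Laplacian-weighted $L^2$ estimate. The paper obtains $|\partial_u^\alpha\psi_d(u)|\le C_q(1+|u|)^{-1/2}$ for $|\alpha|\le4$ by expanding the fixed $d\times d$ Bessel determinant and using fixed-order Bessel asymptotics. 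It then chooses $\tau_q$ with $C_q^2/(1+\tau_q)<1$, so that the differentiated product of $m$ factors is exponentially small in $m$.
\end{itemize}
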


\begin{theorem}[Local Poisson approximation]
\label{thm:local-poisson}
For $j\geq2$, set
\begin{equation}\label{eq:main-local-gamma}
 \gamma_{j;N,m}
 =
 \frac{r(q+1)^j+(m-r)q^j}{N^j},
 \qquad
 \varepsilon_{N,m}
 =
 \frac{q+\mathbf1_{\{r>0\}}}{N}.
\end{equation}
Thus $\gamma_{2;N,m}=\beta_{N,m}$ from \eqref{eq:intro-beta}.
Fix $L>0$ and an interval $B$ of length $L$ on the circle
$\mathbb T_N=\mathbb R/N\mathbb Z$ of circumference $N$.
Let $\Xi_{N,m}$ and $\Pi_B$ be as in
\eqref{eq:intro-unfolded-process}--\eqref{eq:intro-local-TV}, let
$Z_L\sim\operatorname{Pois}(L)$, and use the constant $A(L)$ defined in
\eqref{eq:intro-poisson-summary}.
There is a constant $C_L<\infty$ such that, whenever
$N>L$ and $L\varepsilon_{N,m}\leq\frac14$,
\begin{align}
 &\left|
 d_{\TV}\!\left(
   \Xi_{N,m}\!\restriction_B,\Pi_B
 \right)
 -A(L)\beta_{N,m}
 \right|
 \notag\\
 &\quad\quad\quad+
 \left|
 d_{\TV}\!\left(
   \Xi_{N,m}(B),Z_L
 \right)
 -A(L)\beta_{N,m}
 \right|
 \leq
 C_L\left(
   \gamma_{3;N,m}+\beta_{N,m}^2+\gamma_{4;N,m}
 \right).
 \label{eq:main-local-poisson}
\end{align}
Here we regard $\Xi_{N,m}\!\restriction_B$ and $\Pi_B$ as random elements
of the configuration space $\mathcal N(B)$ of finite simple counting
measures on $B$. Consequently, if $N\to\infty$ and $m=m_N\to\infty$,
with no restriction on their relative rates, each of the two
total-variation distances in \cref{eq:main-local-poisson} is asymptotic
to $A(L)\beta_{N,m}$.
\end{theorem}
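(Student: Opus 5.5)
We prove \cref{thm:local-poisson} by combining Rains's decomposition \eqref{eq:rains-decomposition} with a Poisson--Charlier expansion that treats each block as a nearly Bernoulli source.

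\textbf{Step 1: block decomposition and Poisson embedding.} After unfolding, the restriction $\Xi_{N,m}\restriction_B$ is the superposition of $\min(m,N)$ independent processes $\xi_a$. Block $a$ is the unfolded spectrum of a CUE of dimension $d_a\in\{q,q+1\}$ on $\mathbb T_N$, so its intensity is $d_a/N$ and its expected count in $B$ is $p_a:=Ld_a/N\le L\varepsilon_{N,m}\le\frac14$. We condition each $\xi_a$ on its number of points in $B$. If there is exactly one point, the CUE rotation invariance makes it uniform on $B$. Let $K_a:=\xi_a(B)$. Then $\mathbb P(K_a\ge2)=\frac12\iint_{B^2}\rho^{(a)}_2$, where $\rho^{(a)}_2$ is the scaled two-point function of block $a$. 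Because $N\theta/(2\pi)$ magnifies a block of size $d_a$ by the factor $N/d_a\ge m$, the sine-type repulsion gives
\[
\rho^{(a)}_2(x,y)=\frac{d_a^2}{N^2}\left(1-\frac{\sin^2(\pi d_a(x-y)/N)}{d_a^2\sin^2(\pi(x-y)/N)}\right)=O\!\left(\frac{d_a^4}{N^4}|x-y|^2\right).
\]
Hence $\mathbb P(K_a\ge2)=O_L((d_a/N)^4)$, and these terms sum to $O_L(\gamma_{4;N,m})$. Up to this error, each block is therefore a Bernoulli$(p_a)$ point placed uniformly on $B$. A superposition of independent Bernoulli points with uniform locations is a mixed binomial process: conditional on the total count, the locations are i.i.d.\ uniform. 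The same holds for $\Pi_B$. It follows that the process-level distance equals the count-level distance between $S:=\sum_a\mathrm{Bernoulli}(p_a)$ and $Z_L$, up to $O(\gamma_4)$. This reduces both parts of \eqref{eq:main-local-poisson} to a single one-dimensional problem.

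\textbf{Step 2: Charlier expansion of a Bernoulli sum.} Since $\sum_a p_a=L$ exactly, the means of $S$ and $Z_L$ agree. Write the probability generating function as
\[
\prod_a(1+p_a(z-1))=\e^{L(z-1)}\exp\Bigl(\sum_a[\log(1+p_a(z-1))-p_a(z-1)]\Bigr).
\]
Expanding the logarithm gives $\exp\bigl(-\tfrac12\sum_a p_a^2(z-1)^2+O(\sum_a p_a^3)\bigr)$, with $\sum_a p_a^2=L^2\beta_{N,m}$. Multiplication of the Poisson generating function by $(z-1)^2$ corresponds to the second finite difference of the Poisson mass function, namely $\pi_L(k)\,C_2(k;L)/L^2$ with $C_2(k;L)=(k-L)^2-k$, the second Charlier polynomial. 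Therefore
\[
\mathbb P(S=k)=\pi_L(k)\Bigl(1-\tfrac12\beta_{N,m}\bigl((k-L)^2-k\bigr)\Bigr)+R_k .
\]
Here $\sum_k|R_k|=O_L(\gamma_3+\beta^2)$, which follows from the standard Charlier remainder bounds of Barbour--Hall type, valid because $p_a\le\frac14$. Taking half the $\ell^1$ norm gives $d_{\TV}=\frac14\beta\,\mathbb E|(Z_L-L)^2-Z_L|+O(\cdot)=A(L)\beta_{N,m}+O(\cdot)$.

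\textbf{Main obstacle.} The hardest part is Step 1 done sharply: it must be shown that the multi-point corrections within blocks contribute only $O(\gamma_4)$, and do not enter at order $\beta$, at the level of the whole configuration rather than only the count. I would couple $\xi_a$ with its Bernoulli-uniform surrogate. For this I would use the Janossy densities of the determinantal block process on $B$ and bound the total variation via $\mathbb P(K_a\ge2)$ together with the deviation of the one-point Janossy density from uniform. The latter is controlled by $\rho_1-j_1=\int\rho_2+\dots$, which is $O((d_a/N)^4)$ by the bound above. Independence then lets these errors add over the blocks. Finally, since $\gamma_3\le\varepsilon\beta$, $\gamma_4\le\varepsilon^2\beta$ and $\beta\le\varepsilon\to0$ whenever $N,m\to\infty$, the asymptotic statement follows.
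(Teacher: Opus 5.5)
Your proposal is correct and follows the paper's proof essentially step for step: Rains blocks; a per-block Bernoulli--uniform surrogate whose error is controlled by the second factorial moment $\iint_{B^2}\rho_2^{(a)}=O_L(w_a^4)$; the exact mixed-binomial reduction $d_{\TV}(Y,\Pi_B)=d_{\TV}(S,Z_L)$; and the second-Charlier correction with remainder $O_L(\lambda_3+\lambda_2^2)$, which the paper proves directly via the convolution logarithm in $\ell^1(\mathbb Z)$ rather than citing Barbour--Hall-type bounds. There are two small slips: rotation invariance does not make the location given $K_a=1$ exactly uniform, since the Janossy density $j_1$ varies over $B$ (as your last paragraph recognises), and $\mathbb P(K_a\geq2)\leq\frac12\iint_{B^2}\rho_2^{(a)}$ is only an inequality; the paper's sector-by-sector identity $d_{\TV}(X_a,Y_a)=\mathbb E[K_a\mathbf 1_{\{K_a\geq2\}}]\leq\mathbb E[K_a(K_a-1)]$ handles both cleanly.
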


The remainder $r$ counts the Rains blocks of size $q+1$, while $s$
counts those of size $q$. For fixed $q$, the $L_{\nu+1}$ term is lower
order when $r/m$ stays bounded away from one. When $s$ remains bounded,
the two adjacent Laguerre modes have comparable size; this is why
\cref{thm:fixed-quotient} retains both. The local theorem depends on the same block dimensions through
$\gamma_{j;N,m}$. Note that Courteaut, Johansson and Lambert~\cite{kn:courteautjohanssonlambert24}
prove quantitative bounds which interpolate between the fixed-power
and high-power regimes. In particular, their upper bound for fixed $p$
is of order
$q^{1/4}\sqrt{\log N}/(p^q\Gamma(q+2))$, whereas
\cref{thm:fixed-power} identifies the leading term, which is smaller
by a factor of order $q^{-1/2}(\log N)^{-1/2}$. For $m_N\geq N$,
their $O(N^{-1})$ estimate is sharpened by \cref{thm:high-power}
to an explicit constant and an $O(N^{-2})$ remainder.

\subsection*{Structure of the leading corrections}

We prove each global theorem at the density level. If $f_{N,m}$ denotes
the density of $X_{N,m}$ from \eqref{eq:intro-rains-trace-sum}, the common target is an expansion
\begin{equation}\label{eq:leading-density-template}
 f_{N,m}=g+\Lambda_{N,m}+r_{N,m},
 \qquad \norm{r_{N,m}}_1=o(\norm{\Lambda_{N,m}}_1).
\end{equation}
The definition of total variation and the reverse triangle inequality
then give
\begin{equation}\label{eq:TV-from-signed-correction}
 d_{\TV}(X_{N,m},Z)
 =\frac12\norm{f_{N,m}-g}_1
 =\frac12\norm{\Lambda_{N,m}}_1
 +o(\norm{\Lambda_{N,m}}_1).
\end{equation}
The correction $\Lambda_{N,m}$ has integral zero, so its positive and
negative parts have equal mass. By
\eqref{eq:TV-from-signed-correction}, the leading total-variation error
is half its $L^1$ norm. The contributions to $\Lambda_{N,m}$ must
therefore be summed before taking absolute values: a termwise
triangle inequality gives an upper bound that need not preserve the
leading constant.

In the fixed-power and joint-growth theorems, the first-band sum is
approximated by a multiple of $g(x)L_\nu(|x|^2)$ with relative
$L^1$ error $o(1)$. Radial integration gives
\begin{equation}
 \norm{g(\cdot)L_\nu(|\cdot|^2)}_1
 =
 \int_0^\infty \e^{-u}|L_\nu(u)|\dd u.
\end{equation}
By \cref{lem:laguerre-L1}, as $\nu\to\infty$,
\begin{equation}
 \int_0^\infty \e^{-u}|L_\nu(u)|\dd u
 \sim
 \frac{2^{7/4}\Gamma(3/4)}{\pi^{3/2}}\nu^{-1/4}.
\end{equation}
The exact norm gives the integral expressions in the theorems, while
its large-degree asymptotic gives their explicit equivalents. For fixed $q$, the correction contains the two adjacent modes
$L_{q+1}$ and $L_{q+2}$. If $r/m\to\theta\in[0,1)$, the coefficient
of $L_{q+2}$ is lower order, giving
\eqref{eq:fixed-quotient-interior}. If $s=m-r$ remains bounded, both
coefficients are of order $m^{-(q+1)}$. Their signed combination
must then be retained inside the absolute value, as in
\eqref{eq:fixed-quotient-endpoint}.
In the independent-phase regime, the leading correction has fixed
degree and equals $-g(x)L_2(|x|^2)/(2N)$.

\subsection*{Exact mixed trace moments}
\label{sec:mixed-moments}

\ifdefined\PedagogicalVersion\PedMixedMoments\fi

We use two exact formulas for the mixed trace moments
$C_N(\mathbf p,\mathbf q)$ from \eqref{eq:intro-CN-definition}. The
Charlier--Ewens formula reorganises the character truncation into
the stable Wick term and finite-$N$ correction bands. The displacement
formula expresses the mixed moments as a finite sum of
determinants; it follows by differentiating
the Heine--Szeg\H{o} Toeplitz determinant for the exponential trace
generating function. For analytic estimates, the Borodin--Okounkov
formula rewrites the normalised generating function as a Fredholm
determinant, which we use to bound the remainder in the Gaussian
approximation.

We recall the notation in \eqref{eq:intro-CN-definition} for vectors of
possibly unequal lengths. Haar invariance under multiplication by a scalar phase implies that
$C_N(\mathbf p,\mathbf q)=0$ unless
\begin{equation}
 \sum_{\alpha=1}^{k_1}p_\alpha
    =\sum_{\beta=1}^{k_2}q_\beta := D.
\end{equation}
Let $\mu,\nu\vdash D$ be the partitions having parts
$p_1,\ldots,p_{k_1}$ and $q_1,\ldots,q_{k_2}$, respectively.  Write
\begin{equation}
\label{eq:z-wick-partitions-def}
 \mu=(1^{a_1}2^{a_2}\cdots),
 \qquad
 \nu=(1^{b_1}2^{b_2}\cdots),
 \qquad
 z_\mu=\prod_{j\geq1}j^{a_j}a_j!.
\end{equation}
Let $P_n(x;\alpha)$ denote the monic Charlier polynomial
\begin{equation}\label{eq:charlier-generating}
 \sum_{n=0}^\infty P_n(x;\alpha)\frac{t^n}{n!}
 =\e^{-\alpha t}(1+t)^x.
\end{equation}
Our normalisation follows the Poisson-orthogonal convention in
\cite{kn:chihara1978} throughout.
For a finitely supported multi-index
$\mathbf c=(c_1,c_2,\ldots)$, set
\begin{equation}
 \wt(\mathbf c)=\sum_{j\geq1}jc_j,
 \qquad
 \Phi_{\mathbf c}(\mu)
 =
 \prod_{j\geq1}
 \frac{P_{c_j}(a_j;1/j)}
 {\sqrt{j^{-c_j}c_j!}}.
\end{equation}
Define
\begin{equation}\label{eq:charlier-kernel}
 K_{N,D}^{\Ch}(\mu,\nu)
 =
 \sum_{\wt(\mathbf c)<D-N}
 \Phi_{\mathbf c}(\mu)\Phi_{\mathbf c}(\nu).
\end{equation}
For kernels on partitions of $D$, their Ewens convolution is
\begin{equation}\label{eq:ewens-convolution}
 (A\starD B)(\mu,\nu)
 =
 \sum_{\gamma\vdash D}
 \frac{A(\mu,\gamma)B(\gamma,\nu)}{z_\gamma},
\end{equation}
whose identity is
$I_D(\mu,\nu)=z_\mu\1_{\{\mu=\nu\}}$.  Finally, define
\begin{align}
 \binom{K}{0}_{\star_D}&=I_D,\notag\\
 \binom{K}{\ell}_{\star_D}
 &=
 \frac1{\ell!}
 K\starD(K-I_D)\starD\cdots\starD(K-(\ell-1)I_D),
 \qquad \ell\geq1.                                  \label{eq:convolution-binomial}
\end{align}

Under the Ewens measure
$z_\gamma^{-1}$ on partitions of $D$, the cycle multiplicities have the
joint law of independent Poisson variables with means $1/j$, $1\leq j\leq D$,
conditioned on their total weighted size being $D$ \cite[p.~520, Eq.~(3)]{kn:arratiabarbourtavare92}. Charlier polynomials
are orthogonal for the corresponding unconditioned Poisson laws. The spectral resolution \eqref{eq:charlier-spectral-resolution},
proved below, identifies the character eigenvalues of the truncated
Charlier kernel with the number of shifted diagonal points of a partition.

\begin{theorem}[Charlier--Ewens correction]
\label{thm:charlier-correction}
\label{thm:length-exceedance}
For $\lambda\vdash D$, set
\begin{equation}\label{eq:exceedance-number}
 d_N(\lambda)=\#\{i\geq1:\lambda_i\geq N+i\}, \qquad  s(N,D)=
 \max\{\ell\geq1:\ell(N+\ell)\leq D\},
\end{equation}
with $s(N,D)=0$ if the set is empty. For $ K_{N,D}^{\Ch}(\mu,\nu)$ as defined in \eqref{eq:charlier-kernel} and $z_\mu$ in \eqref{eq:z-wick-partitions-def}, we have 
\begin{equation}\label{eq:wick-plus-charlier}
 C_N(\mathbf p,\mathbf q)
 =
 z_\mu\1_{\{\mu=\nu\}}+\mathcal R_N(\mu,\nu),
\end{equation}
where
\begin{equation}\label{eq:charlier-correction}
 \mathcal R_N(\mu,\nu)
 =
 (-1)^{k_1+k_2+1}
 \sum_{\ell=1}^{s(N,D)}
 (-1)^{\ell+1}
 \binom{K_{N,D}^{\Ch}}{\ell}_{\star_D}(\mu,\nu).
\end{equation}
Thus the Diaconis--Shahshahani term is the identity kernel and the
$\ell$th summand is the $\ell$th correction band.
\end{theorem}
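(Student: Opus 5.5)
The plan is to start from the character sum \eqref{eq:intro-character-trace-sum} and to treat the length truncation as the subtraction of a complementary sum. Write
\[
C_N(\mathbf p,\mathbf q)=\sum_{\lambda\vdash D}\chi^\lambda_\mu\chi^\lambda_\nu-\sum_{\substack{\lambda\vdash D\\ \ell(\lambda)>N}}\chi^\lambda_\mu\chi^\lambda_\nu .
\]
By column orthogonality the first sum is $z_\mu\1_{\{\mu=\nu\}}$. For the second sum, conjugate $\lambda\mapsto\lambda'$ and use $\chi^{\lambda'}_\mu=\operatorname{sgn}(\mu)\chi^\lambda_\mu$ with $\operatorname{sgn}(\mu)=(-1)^{D-k_1}$. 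This rewrites it as $(-1)^{k_1+k_2}\sum_{\lambda_1>N}\chi^\lambda_\mu\chi^\lambda_\nu$. Since $\lambda_1\geq N+1$ is exactly the condition $d_N(\lambda)\geq1$, we get
\[
\mathcal R_N=(-1)^{k_1+k_2+1}\sum_{\lambda\vdash D}\1_{\{d_N(\lambda)\geq1\}}\chi^\lambda_\mu\chi^\lambda_\nu .
\]
Inclusion--exclusion, $\1_{\{d\geq1\}}=1-(1-1)^d=\sum_{\ell\geq1}(-1)^{\ell+1}\binom{d}{\ell}$, then produces the alternating band structure.

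The second step is the functional calculus for $\starD$. Any class-function kernel $A(\mu,\nu)=\sum_\lambda a(\lambda)\chi^\lambda_\mu\chi^\lambda_\nu$ is diagonal in the character basis. Column orthogonality in the form $\sum_\gamma\chi^\lambda_\gamma\chi^\rho_\gamma/z_\gamma=\delta_{\lambda\rho}$ shows that $\starD$ multiplies eigenvalues, and that $I_D$ has eigenvalue $1$. Hence, once we know $K^{\Ch}_{N,D}$ has eigenvalue $d_N(\lambda)$, the kernel $\binom{K^{\Ch}_{N,D}}{\ell}_{\star_D}$ has eigenvalue $\binom{d_N(\lambda)}{\ell}$. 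The inclusion--exclusion sum then truncates automatically at the largest $\ell$ for which some $\lambda\vdash D$ has $d_N(\lambda)\geq\ell$. Having $d_N(\lambda)\geq\ell$ forces $\lambda_i\geq N+i$ for $i\leq\ell$, and in particular $\lambda_i\geq N+\ell$ on the $\ell\times(N+\ell)$ block. The Durfee-type count $|\lambda|\geq\ell(N+\ell)$ then gives the cutoff $s(N,D)$.

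The main obstacle is the spectral resolution
\[
K^{\Ch}_{N,D}(\mu,\nu)=\sum_{\lambda\vdash D}d_N(\lambda)\chi^\lambda_\mu\chi^\lambda_\nu .
\]
My plan is to identify, for each $n$, the layer kernel $\sum_{\wt(\mathbf c)=n}\Phi_{\mathbf c}(\mu)\Phi_{\mathbf c}(\nu)$ as a diagonal kernel with eigenvalue $\1_{\{\exists i:\ \lambda_i-i=D-n-1\}}$. Summing this over $n<D-N$ counts the shifted parts $\lambda_i-i\geq N$, which is $d_N(\lambda)$.

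To obtain the layer eigenvalue I would work in the power-sum basis. Via the generating function \eqref{eq:charlier-generating}, the Charlier polynomials $P_{c_j}(a_j;1/j)$ are the coefficients produced by the translation $p_j\mapsto p_j-1$ (equivalently $\e^{-\sum_j p_j/j}$-type twisting) acting on $p_\mu$. Combining the Cauchy identity with the Bernstein vertex operator $B(z)=\exp\bigl(\sum_j p_jz^j/j\bigr)\exp\bigl(-\sum_j\partial_{p_j}z^{-j}\bigr)$ should then exhibit the weight-$n$ layer as extracting the coefficient of the single row insertion at shifted position $D-n-1$. This is the step I expect to require the most care, both in the normalisation $\sqrt{j^{-c_j}c_j!}$ and in the sign conventions.
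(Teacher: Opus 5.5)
Your proposal is correct and follows the paper's proof essentially step for step. It uses the character truncation with conjugation to reach $d_N(\lambda)\geq1$, the spectral resolution $K^{\Ch}_{N,D}(\mu,\nu)=\sum_{\lambda\vdash D}d_N(\lambda)\chi^\lambda_\mu\chi^\lambda_\nu$ built layer by layer from the Bernstein operator, the functional calculus for $\star_D$, and inclusion--exclusion with the rectangle cutoff $s(N,D)$.

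The step you left as a sketch is closed in the paper along your lines. It writes $\sum_{\gamma\vdash D}\Phi_{\mathbf c}(\gamma)p_\gamma/z_\gamma=\mathcal B_{D-w}e_{\rho(\mathbf c)}$, which is the $z=1$ translation $p_j\mapsto p_j-1$. By completeness of the orthonormal power sums, the weight-$w$ layer is then $\mathcal B_{D-w}\Pi_w\mathcal B_{D-w}^{*}$. Jacobi--Trudi straightening, $\mathcal B_n s_{\eta^{(i)}}=(-1)^{i-1}s_\lambda$, identifies this operator as the projection onto those $s_\lambda$ with $\lambda_i-i=D-w-1$ for some $i$.
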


\paragraph{Stable range and the first correction band.}
The finite upper limit in \eqref{eq:charlier-correction} immediately gives
two useful simplifications. If $D\leq N$, then $s(N,D)=0$ and
\begin{equation}\label{eq:DS-from-charlier}
 C_N(\mathbf p,\mathbf q)=z_\mu\1_{\{\mu=\nu\}}.
\end{equation}
The right side is exactly the Wick term in
\eqref{eq:intro-wick-correction}. If
$\mu=(1^{a_1}2^{a_2}\cdots)$, a Wick matching must pair equal exponents,
so it exists only when $a_j=b_j$ for every $j$. In that case, there are
$\prod_j a_j!$ choices, each with covariance product $\prod_j j^{a_j}$,
giving $z_\mu$. In partition notation, the condition $a_j=b_j$ for every
$j$ is precisely $\mu=\nu$. In the next range,
\begin{equation}\label{eq:first-band-range}
 N+1\leq D<2(N+2),
\end{equation}
one has $s(N,D)=1$, so only the $\ell=1$ term contributes and the full correction is linear:
\begin{equation}\label{eq:first-band-linear-correction}
 \mathcal R_N(\mu,\nu)=
 (-1)^{k_1+k_2+1}K_{N,D}^{\Ch}(\mu,\nu).
\end{equation}
The cutoff $s(N,D)$ has a geometric interpretation.
By \eqref{eq:kernel-binomial}, the $\ell$th correction band can
contribute only if some partition $\lambda\vdash D$ satisfies
$d_N(\lambda)\geq\ell$. This requires
$\lambda_\ell\geq N+\ell$. Since partition rows are non-increasing
in length, the first $\ell$ rows must each contain at least
$N+\ell$ boxes. Thus the Young diagram contains a rectangle
with $\ell$ rows and $N+\ell$ columns, giving
\begin{equation}
 D=|\lambda|\geq\ell(N+\ell).
\end{equation}
The minimum is attained by the rectangular partition
$\lambda=((N+\ell)^\ell)$. Consequently, $\ell(N+\ell)$ is
the smallest degree at which the $\ell$th band can appear.

\paragraph{A computational formula.} For direct computation, the following finite determinant is often faster than
the Charlier--Ewens formula: it replaces the character sum by a finite sum of determinants of permutation matrices. It follows
directly from the Heine--Szeg\H{o} determinant by multilinearity; to our
knowledge, no previous source writes it in this explicit form.

\begin{theorem}[Displacement determinant]\label{thm:displacement}
For $\mathbf r\in\{0,\ldots,N-1\}^{k_1}$ and
$\mathbf s\in\{0,\ldots,N-1\}^{k_2}$, define
\begin{equation}\label{eq:displacement}
 h_a(\mathbf r,\mathbf s)
 =
 \sum_{\alpha=1}^{k_1}p_\alpha\1_{\{r_\alpha=a\}}
 -
 \sum_{\beta=1}^{k_2}q_\beta\1_{\{s_\beta=a\}},
 \qquad 0\leq a\leq N-1.
\end{equation}
Then
\begin{equation}\label{eq:displacement-determinant}
 C_N(\mathbf p,\mathbf q)
 =
 \sum_{\mathbf r,\mathbf s}
 \det\!\left[
   \1_{\{b=a-h_a(\mathbf r,\mathbf s)\}}
 \right]_{a,b=0}^{N-1}.
\end{equation}
In particular, every summand is $0$, $1$, or $-1$.
\end{theorem}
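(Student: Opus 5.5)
We prove \cref{thm:displacement} by writing $C_N(\mathbf p,\mathbf q)$ as a Weyl integral and expanding the Vandermonde product. For a function $f$ on the circle, Heine's identity gives
\[
\mathbb E\Bigl[\prod_{a=1}^N f(\theta_a)\Bigr]=\det\bigl[\hat f(a-b)\bigr]_{a,b=0}^{N-1}.
\]
Here $\hat f(k)=\frac1{2\pi}\int f(\theta)\e^{-\ii k\theta}\dd\theta$. We introduce formal variables $t_\alpha$ and $u_\beta$ and take
\[
f(\theta)=\exp\Bigl(\sum_\alpha t_\alpha\e^{\ii p_\alpha\theta}+\sum_\beta u_\beta\e^{-\ii q_\beta\theta}\Bigr).
\]
Then $\prod_a f(\theta_a)=\exp\bigl(\sum_\alpha t_\alpha\Tr U^{p_\alpha}+\sum_\beta u_\beta\overline{\Tr U^{q_\beta}}\bigr)$. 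Differentiating once in each $t_\alpha$ and $u_\beta$ and setting all variables to $0$ gives $C_N(\mathbf p,\mathbf q)$. Everything is polynomial in the finitely many variables, so exchanging the expectation with the differentiation is harmless.

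On the determinant side, write $\det[\hat f(a-b)]=\sum_{\sigma}\operatorname{sgn}\sigma\prod_a\hat f(a-\sigma(a))$, whose $a$-th row involves $f$ only through row $a$. The operator $\prod_\alpha\partial_{t_\alpha}\prod_\beta\partial_{u_\beta}$ is a product of distinct first-order derivatives, so by the Leibniz rule (multilinearity in rows) it distributes over the rows. Each derivative $\partial_{t_\alpha}$ is assigned to one row $r_\alpha\in\{0,\ldots,N-1\}$, and each $\partial_{u_\beta}$ to a row $s_\beta$. This produces exactly the sum over $(\mathbf r,\mathbf s)$ in \eqref{eq:displacement-determinant}.

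In row $a$, the mixed derivative of $f$ evaluated at $0$ is the monomial
\[
\prod_{\alpha:r_\alpha=a}\e^{\ii p_\alpha\theta}\prod_{\beta:s_\beta=a}\e^{-\ii q_\beta\theta}=\e^{\ii h_a\theta},
\]
with $h_a=h_a(\mathbf r,\mathbf s)$ as in \eqref{eq:displacement}. Its Fourier coefficient at $a-b$ is $\1_{\{a-b=h_a\}}=\1_{\{b=a-h_a\}}$. Rows that receive no derivative carry $f|_{0}\equiv1$, whose coefficient is $\1_{\{a=b\}}$; this agrees with the formula because $h_a=0$ for such rows. Hence each summand is the determinant of a $0/1$ matrix with at most one nonzero entry per row. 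Such a determinant is $\pm1$ if $a\mapsto a-h_a$ is a bijection of $\{0,\ldots,N-1\}$, and $0$ otherwise, which gives the final claim.

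The main point needing care is the Heine identity with the indexing convention $\hat f(a-b)$ rather than $\hat f(b-a)$. A wrong sign convention would replace $a-h_a$ by $a+h_a$. I would verify the convention on the case $N=1$, $p=q=1$: there $C_1=1$, and the nonzero summand is $r=s=0$, $h_0=0$, giving determinant $1$, as required. Everything else is bookkeeping of multilinearity.
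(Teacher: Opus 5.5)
Your proof is correct and is essentially the paper's argument: Heine--Szeg\H{o} for the exponential generating function, differentiation passed under the (bounded) expectation, and Leibniz's rule on the permutation expansion so that each derivative lands in one row, giving the entry $[z^{a-b}]z^{h_a}=\1_{\{b=a-h_a\}}$. The convention you flag is in fact harmless: $\det[\hat f(a-b)]$ and $\det[\hat f(b-a)]$ are transposes of each other, so expanding the latter by rows would only produce the equally valid mirrored formula with $a+h_a$ (and your $N=1$ check, where $h_0=0$, could not tell the two apart anyway).
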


Each determinant in \eqref{eq:displacement-determinant} can also be
written in terms of a normalised Vandermonde product;  this form is displayed in the corollary below. 

\begin{corollary} \label{cor:displacement-product}
Let $J_N = \{0, \ldots, N-1\}$ and define
\begin{equation}
 \delta_N(t)=\1_{\{t\in J_N\}}.
\end{equation}
For $\mathbf r, \mathbf s$, and $h_a=h_a(\mathbf r,\mathbf s)$ as above, we have
\begin{equation}\label{eq:displacement-product}
 C_N(\mathbf p,\mathbf q)
 =
 \frac{1}{G(N+1)}
  \sum_{\mathbf r,\mathbf s}
 \left(\prod_{a=0}^{N-1}\delta_N(a-h_a)\right)
 \prod_{0\leq a<b\leq N-1}
 \bigl[(b-a)-(h_b-h_a)\bigr],
\end{equation}
where $G$ is the Barnes $G$-function as defined in \eqref{eq:displacement-normalisation} and empty products are understood to equal one.
\end{corollary}

\section{Exact trace identities and analytic tools}
\label{sec:fourier-laguerre}
\label{sec:exact-identities-tools}

We first prove the exact mixed trace identities stated in Section~\ref{sec:mixed-moments}. We then develop their Bessel and Fredholm specialisations and the analytic estimates used in the global total-variation asymptotics. Two complementary tools connect the exact generating functions to total
variation.  Fourier--Laguerre transfer identifies the leading signed density
correction, while the Bessel Fredholm estimate controls the determinant
remainder uniformly.

In the Fourier
arguments, we repeatedly use
\begin{equation}\label{eq:recurring-radial-notation}
 a=\frac{|t|^2}{4},\qquad \rho=\frac{|t|}{\sqrt m},\qquad \nu=q+1,
\end{equation}
where $m$ denotes the trace power (and $m=p$ in the fixed-power regime).

\subsection*{Exact identity proofs}
\label{sec:proof-charlier-correction}

\begin{proof}[Proof of \cref{thm:charlier-correction}]
We first prove that the Charlier kernel has the
spectral resolution
\begin{equation}\label{eq:charlier-spectral-resolution}
 K_{N,D}^{\Ch}(\mu,\nu)
 =
 \sum_{\lambda\vdash D}
 d_N(\lambda)\chi^\lambda_\mu\chi^\lambda_\nu,
\end{equation}
by a coefficient
argument using Bernstein operators. This avoids choosing or ordering
rim-hook removals: the inclusion--exclusion in each Charlier factor is
performed first, by a translation operator.

Let $\Lambda$ be the ring of symmetric functions with its Hall inner
product. We use $p_j$ to denote
the $j$th power-sum symmetric function. Thus
\begin{equation}
 e_\gamma:=\frac{p_\gamma}{\sqrt{z_\gamma}},
 \qquad \gamma\vdash w,\quad w\geq0,
\end{equation}
is an orthonormal power-sum basis, and the Schur functions are another
orthonormal basis. Frobenius's formula \cite[Chapter~I]{kn:macdonald1995} gives
\begin{equation}\label{eq:frobenius-normalized}
 s_\lambda
 =
 \sum_{\gamma\vdash D}
 \frac{\chi^\lambda_\gamma}{\sqrt{z_\gamma}}e_\gamma.
\end{equation}
Denote by $\Pi_w$ the orthogonal projection onto the homogeneous
component $\Lambda_w$; this means that $\Pi_w$ retains only the terms of
total degree $w$, with $\deg p_j=j$. The local Charlier transform follows directly from
\cref{eq:charlier-generating}: set the polynomial argument equal to
$a$, multiply by $x^a/a!$, sum over $a$, and compare the coefficient
of $t^c$. This gives
\begin{equation}\label{eq:local-charlier-transform}
 \sum_{a=0}^\infty P_c(a;1/j)\frac{x^a}{a!}
 =
 \e^x(x-1/j)^c.
\end{equation}
For a multi-index $\mathbf c$, let
$\rho(\mathbf c)=(1^{c_1}2^{c_2}\cdots)$ and
$w=\wt(\mathbf c)$. Substituting $x=p_j/j$ in
\cref{eq:local-charlier-transform}, multiplying over $j$, and then
taking homogeneous degree $D$ yields the exact Frobenius series
\begin{align}
 F_{\mathbf c}^{(D)}
 &:=
 \sum_{\gamma\vdash D}
 \Phi_{\mathbf c}(\gamma)\frac{p_\gamma}{z_\gamma}
 \notag\\
 &=
 \Pi_D\left[
 \exp\!\left(\sum_{j\geq1}\frac{p_j}{j}\right)
 \prod_{j\geq1}
 \frac{(p_j-1)^{c_j}}{\sqrt{j^{c_j}c_j!}}
 \right].
 \label{eq:charlier-frobenius-series}
\end{align}
All Charlier inclusion--exclusion terms are contained in the
translations $p_j\mapsto p_j-1$ in this identity.

Write $p_j^\perp=j\,\partial/\partial p_j$ and introduce the Bernstein
vertex operator \cite[Chapter~I, Section~5]{kn:macdonald1995}
\begin{equation}\label{eq:bernstein-operator}
 \mathcal B(z)
 =
 \exp\!\left(\sum_{j\geq1}\frac{p_jz^j}{j}\right)
 \exp\!\left(-\sum_{j\geq1}\frac{p_j^\perp z^{-j}}{j}\right)
 =
 \sum_{n\in\mathbb Z}\mathcal B_nz^n.
\end{equation}
At $z=1$ the second exponential is
$\exp(-\sum_j\partial/\partial p_j)$, hence it performs precisely the
translation $p_j\mapsto p_j-1$. Since
\begin{equation}
 e_{\rho(\mathbf c)}
 =
 \prod_j\frac{p_j^{c_j}}{\sqrt{j^{c_j}c_j!}},
\end{equation}
\cref{eq:charlier-frobenius-series} becomes
\begin{equation}\label{eq:charlier-as-bernstein}
 F_{\mathbf c}^{(D)}
 =
 \Pi_D\mathcal B(1)e_{\rho(\mathbf c)}
 =
 \mathcal B_{D-w}e_{\rho(\mathbf c)}.
\end{equation}
The last equality is simply homogeneous-degree bookkeeping; in
particular, the formal specialisation $z=1$ causes no convergence issue.

We now sum the rank-one operators associated with the left side of
\cref{eq:charlier-spectral-resolution}. Put $T=D-N$. Completeness of
the normalised power sums in every $\Lambda_w$ and
\cref{eq:charlier-as-bernstein} give
\begin{align}
 \mathcal A_{N,D}
 &:=
 \sum_{\substack{\mathbf c\\\wt(\mathbf c)<T}}
 |F_{\mathbf c}^{(D)}\rangle
 \langle F_{\mathbf c}^{(D)}|
 \notag\\
 &=
 \sum_{w=0}^{T-1}
 \mathcal B_{D-w}\Pi_w\mathcal B_{D-w}^{\,*}.
 \label{eq:bernstein-rank-sum}
\end{align}
Here and below, empty sums are zero.

It remains to diagonalise this finite operator. The defining identity
for Bernstein modes \cite[Chapter~I, \S5, Example~29]{kn:macdonald1995} is
\begin{equation}
 \mathcal B_n s_\eta=s_{(n,\eta)},
\end{equation}
where the right side is a generalised Schur function. Exchanging adjacent
rows in the Jacobi--Trudi determinant
\cite[Chapter~I, Eq.~(3.4)]{kn:macdonald1995} gives
\begin{equation}
 s_{(\ldots,a,b,\ldots)}
 =
 -s_{(\ldots,b-1,a+1,\ldots)}.
\end{equation}
This straightening relation gives the following. If $\lambda\vdash D$
and $i\geq1$ is a row of $\lambda$, set
\begin{equation}\label{eq:bernstein-removal-data}
 n_i=\lambda_i-i+1,\qquad
 \eta^{(i)}
 =
 (\lambda_1+1,\ldots,\lambda_{i-1}+1,
  \lambda_{i+1},\lambda_{i+2},\ldots).
\end{equation}
Then $\eta^{(i)}\vdash D-n_i$ and
\begin{equation}\label{eq:bernstein-straightening}
 \mathcal B_{n_i}s_{\eta^{(i)}}
 =
 (-1)^{i-1}s_\lambda.
\end{equation}
Conversely, for $1\leq n\leq D$, straightening $(n,\eta)$ to $\lambda$
is possible only for the unique row $i$ with $n=n_i$; the integers
$n_i$ are strictly decreasing. Taking adjoints in
\cref{eq:bernstein-straightening}, and using orthonormality of Schur
functions, therefore gives
\begin{equation}\label{eq:bernstein-number-projection}
 \mathcal B_n\Pi_{D-n}\mathcal B_n^{\,*}s_\lambda
 =
 \mathbf1_{\{n=\lambda_i-i+1\text{ for some }i\}}s_\lambda,
 \qquad 1\leq n\leq D.
\end{equation}
In \cref{eq:bernstein-rank-sum}, $n=D-w$ runs through
$N+1,\ldots,D$. Hence
\begin{align}
 \mathcal A_{N,D}s_\lambda
 &=
 \#\{i:\lambda_i-i+1>N\}\,s_\lambda
 \notag\\
 &=
 \#\{i:\lambda_i\geq N+i\}\,s_\lambda
 =
 d_N(\lambda)s_\lambda.
 \label{eq:exceedance-diagonalization}
\end{align}
Finally, the $(e_\mu,e_\nu)$ matrix entry of
\cref{eq:bernstein-rank-sum} is
$K_{N,D}^{\Ch}(\mu,\nu)/\sqrt{z_\mu z_\nu}$, whereas
\cref{eq:frobenius-normalized,eq:exceedance-diagonalization} show that
it is
\begin{equation}
 \frac1{\sqrt{z_\mu z_\nu}}
 \sum_{\lambda\vdash D}
 d_N(\lambda)\chi^\lambda_\mu\chi^\lambda_\nu.
\end{equation}
Multiplication by $\sqrt{z_\mu z_\nu}$ proves
\cref{eq:charlier-spectral-resolution}.

We now apply this spectral resolution to the finite-$N$ moment.
The Frobenius characteristic formula and Schur orthogonality give the
standard character truncation:
\begin{equation}\label{eq:character-truncation}
 C_N(\mathbf{p},\mathbf{q})
 =
 \sum_{\substack{\lambda\vdash D\\\ell(\lambda)\leq N}}
 \chi^\lambda_\mu\chi^\lambda_\nu,
\end{equation}
where the dependence on $\mathbf{p}$ and $\mathbf{q}$ is carried through the cycle types $\mu$ and $\nu$ from \eqref{eq:z-wick-partitions-def}. The full character orthogonality relation is
\begin{equation}
 \sum_{\lambda\vdash D}
 \chi^\lambda_\mu\chi^\lambda_\nu
 =
 z_\mu\1_{\{\mu=\nu\}},
\end{equation}
so $\mathcal R_N$ is minus the contribution from $\ell(\lambda)>N$.

Conjugate every excluded partition. By
\begin{equation}
 \chi^{\lambda'}_\mu
 =
 (-1)^{D-k_1}\chi^\lambda_\mu,
\end{equation}
this contributes the common sign $(-1)^{k_1+k_2}$. The excluded
condition becomes $d_N(\lambda)\geq1$.

The spectral resolution also diagonalises the convolutional
binomials. Character orthogonality and
\cref{eq:charlier-spectral-resolution} imply, for $j,\ell\geq1$,
\begin{align}
 (K_{N,D}^{\Ch})^{\star_D j}(\mu,\nu)
 &=
 \sum_{\lambda\vdash D}
 d_N(\lambda)^j\chi^\lambda_\mu\chi^\lambda_\nu,
 \label{eq:kernel-power}\\
 \binom{K_{N,D}^{\Ch}}{\ell}_{\star_D}(\mu,\nu)
 &=
 \sum_{\lambda\vdash D}
 \binom{d_N(\lambda)}{\ell}
 \chi^\lambda_\mu\chi^\lambda_\nu.
 \label{eq:kernel-binomial}
\end{align}
Thus Ewens convolution is an ordinary matrix
multiplication after diagonalisation by the character table, and the
falling factorial in \cref{eq:convolution-binomial} becomes the ordinary
binomial coefficient in \cref{eq:kernel-binomial}. Finally,
\begin{equation}
 \sum_{\ell\geq1}(-1)^{\ell+1}\binom d\ell
 =
 \1_{\{d\geq1\}},
 \qquad d\in\mathbb Z_{\geq0}.
\end{equation}
Moreover, $d_N(\lambda)\geq\ell$ forces
$\lambda_\ell\geq N+\ell$ and hence
\begin{equation}
 D=|\lambda|\geq\ell(N+\ell).
\end{equation}
Therefore the sum stops at $s(N,D)$. Substituting
\cref{eq:kernel-binomial} and the last binomial identity into the
excluded character contribution in \cref{eq:character-truncation}, including its initial
minus sign, proves \cref{eq:charlier-correction}.
\end{proof}

\begin{proof}[Proof of \cref{thm:displacement}]
Introduce variables $t_1,\ldots,t_{k_1}$ and
$u_1,\ldots,u_{k_2}$, and set
\begin{equation}
 \omega(z)=
 \exp\!\left(
   \sum_{\alpha=1}^{k_1}t_\alpha z^{p_\alpha}
   +\sum_{\beta=1}^{k_2}u_\beta z^{-q_\beta}
 \right),
 \qquad
 \omega_k=[z^{-k}]\omega(z).
\end{equation}
The Heine--Szeg\H{o} identity
\cite[Theorem~1]{kn:bumdia02} gives
\begin{equation}
 \mathbb E_{\U(N)}
 \exp\!\left(
   \sum_{\alpha=1}^{k_1}t_\alpha\Tr(U^{p_\alpha})
   +\sum_{\beta=1}^{k_2}u_\beta
      \overline{\Tr(U^{q_\beta})}
 \right)
 =
 \det[\omega_{b-a}]_{a,b=0}^{N-1}.
\end{equation}
Write
\begin{equation}
 \mathcal D=
 \prod_{\alpha=1}^{k_1}\partial_{t_\alpha}
 \prod_{\beta=1}^{k_2}\partial_{u_\beta},
\end{equation}
and let evaluation at $0$ mean that all $t_\alpha$ and $u_\beta$
are set to zero. Since the traces are bounded at fixed $N$,
differentiation may be passed under the expectation. Thus
\begin{equation}
 C_N(\mathbf p,\mathbf q)
 =
 \left.\mathcal D\det[\omega_{b-a}]_{a,b=0}^{N-1}\right|_0.
\end{equation}
To evaluate this derivative, we expand the determinant:
\begin{equation}
 \det[\omega_{b-a}]_{a,b=0}^{N-1}
 =
 \sum_{\sigma\in S_N}\operatorname{sgn}(\sigma)
 \prod_{a=0}^{N-1}\omega_{\sigma(a)-a}.    
\end{equation}
Leibniz's rule assigns each labelled derivative to one of the $N$
factors. Denote these assignments by $r_\alpha=a$ when
$\partial_{t_\alpha}$ acts on the factor belonging to row $a$, and
$s_\beta=a$ when $\partial_{u_\beta}$ acts on that factor.
For a fixed assignment, put
\begin{equation}
 \mathcal D_a=
 \prod_{\substack{1\leq\alpha\leq k_1\\r_\alpha=a}}
       \partial_{t_\alpha}
 \prod_{\substack{1\leq\beta\leq k_2\\s_\beta=a}}
       \partial_{u_\beta},
\end{equation}
with an empty product interpreted as the identity operator.
Summing over assignments and recombining the permutation expansion gives
\begin{equation}
 C_N(\mathbf p,\mathbf q)
 =
 \sum_{\substack{\mathbf r\in J_N^{k_1}\\
                  \mathbf s\in J_N^{k_2}}}
 \det\!\left[
   \left.\mathcal D_a\omega_{b-a}\right|_0
 \right]_{a,b=0}^{N-1}.
\end{equation}
There are no additional factorials: every variable is differentiated
once, and every assignment occurs once in Leibniz's rule, even when
some trace exponents coincide.

For each row, differentiating the exponential symbol gives
\begin{equation}
 \left.\mathcal D_a\omega(z)\right|_0
 =
 \prod_{\alpha:r_\alpha=a}z^{p_\alpha}
 \prod_{\beta:s_\beta=a}z^{-q_\beta}
 =
 z^{h_a(\mathbf r,\mathbf s)}.
\end{equation}
Consequently,
\begin{equation}
 \left.\mathcal D_a\omega_{b-a}\right|_0
 =
 [z^{a-b}]z^{h_a(\mathbf r,\mathbf s)}
 =
 \1_{\{b=a-h_a(\mathbf r,\mathbf s)\}},
\end{equation}
which proves \cref{eq:displacement-determinant}. Each row of the resulting matrix is either zero or a
standard basis vector. Its determinant vanishes if a displaced index
lies outside $J_N$ or if two displaced indices coincide. Otherwise,
the matrix is a permutation matrix and its determinant is $1$ or $-1$.
\end{proof}

\begin{proof}[Proof of \cref{cor:displacement-product}]
Fix an assignment $(\mathbf r,\mathbf s)$ and write $x_a=a-h_a$.
If some $x_a\notin J_N$, the corresponding determinant has a zero
row and the cutoff product in \eqref{eq:displacement-product} vanishes.
If two displaced indices coincide, the determinant has repeated rows
and the Vandermonde product vanishes.

Otherwise, $x_a=\sigma(a)$ for a permutation $\sigma$ of $J_N$.
The determinant is then $\operatorname{sgn}(\sigma)$, while
\begin{equation}
 \prod_{0\leq a<b\leq N-1}(x_b-x_a)
 =
 \operatorname{sgn}(\sigma)
 \prod_{0\leq a<b\leq N-1}(b-a).
\end{equation}
Using \eqref{eq:displacement-normalisation} together with
$x_b-x_a=(b-a)-(h_b-h_a)$ gives, in every case,
\begin{equation}
 \det[\1_{\{b=a-h_a\}}]_{a,b=0}^{N-1}
 =
 \frac{1}{G(N+1)}
 \left(\prod_{a=0}^{N-1}\delta_N(a-h_a)\right)
 \prod_{0\leq a<b\leq N-1}
 \bigl[(b-a)-(h_b-h_a)\bigr].
\end{equation}
Here $G$ is the Barnes $G$-function, whose positive-integer values satisfy
\begin{equation}\label{eq:displacement-normalisation}
 G(N+1)
 =
 \prod_{j=1}^{N-1}j!
 =
 \prod_{0\leq a<b\leq N-1}(b-a).
\end{equation}
Summing over assignments and applying
\cref{eq:displacement-determinant} proves the result.
\end{proof}

\subsection{Bessel and Fredholm representations}
\label{sec:bessel-fredholm-representations}

Let
\begin{equation}
 M_{N,p}(k)=\mathbb E_{\U(N)}|\Tr(U^p)|^{2k},
\end{equation}
and define the entire Bessel determinant
\begin{equation}\label{eq:H-definition}
 \mathsf H_d(z)
 =
 \det[I_{a-b}(2\sqrt z)]_{a,b=0}^{d-1},
 \qquad \mathsf H_0(z)=1.
\end{equation}
Here, $I_n$ denotes the modified Bessel function of the first kind of order $n$. Although individual entries use $\sqrt z$, the determinant is entire. This Bessel determinant is the one-time Toeplitz specialisation appearing in
Gessel's symmetric-function identity \cite[Theorem~16]{kn:gessel90} and Forrester and Witte showed that it is a
Painlev\'e~$\mathrm{III}'$ $\tau$-function \cite{kn:forwit04}.

\begin{corollary}\label{cor:H-factorization}
With $N=pq+r$ as in \cref{eq:euclidiean-division},
\begin{equation}\label{eq:H-product}
 \sum_{k=0}^\infty M_{N,p}(k)\frac{z^k}{(k!)^2}
 =
 \mathsf H_{q+1}(z)^r\mathsf H_q(z)^{p-r}.
\end{equation}
Consequently, for
$X_{N,p}=p^{-1/2}\Tr(U_N^p)$ and $a=|t|^2/4$,
\begin{equation}\label{eq:characteristic-H-product}
 \mathbb E\e^{\ii t\cdot X_{N,p}}
 =
 \mathsf H_{q+1}(-a/p)^r\mathsf H_q(-a/p)^{p-r}.
\end{equation}
\end{corollary}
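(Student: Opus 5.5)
The argument has three steps: identify the single-block generating function, use Rains's decomposition to reduce the $p$th power to a product of independent blocks, and then pass to the characteristic function by analytic continuation.

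\emph{Step 1: one block ($p=1$).} For $U\in\U(d)$, the Heine--Szeg\H{o} identity gives
\[
\mathbb E\,\e^{s\Tr U+u\overline{\Tr U}}=\det[\omega_{b-a}]_{a,b=0}^{d-1},
\qquad \omega(z)=\e^{sz+u/z}.
\]
The Fourier coefficients are $\omega_k=(s/u)^{-k/2}I_k(2\sqrt{su})$. The factors $(s/u)^{\pm\cdot/2}$ come from a diagonal conjugation, so they cancel in the determinant. Hence the left side equals $\mathsf H_d(su)$. Up to the sign convention of the index, $I_{-n}=I_n$ makes the matrix agree with \eqref{eq:H-definition}.

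Next expand the left side. By rotation invariance only the balanced terms survive, so
\[
\mathbb E\,\e^{s\Tr U+u\overline{\Tr U}}=\sum_k \frac{(su)^k}{(k!)^2}\,\mathbb E|\Tr U|^{2k}.
\]
Since $|\Tr U|\le d$, this expansion converges absolutely and the exchange of sum and expectation is justified. Setting $z=su$ gives
\[
\sum_k M_{d,1}(k)\frac{z^k}{(k!)^2}=\mathsf H_d(z),
\]
and this also shows that $\mathsf H_d$ is entire. For $d=0$ the trace is $0$, and this matches $\mathsf H_0=1$.

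\emph{Step 2: general $p$.} By Rains's decomposition \eqref{eq:rains-decomposition}, $\Tr(U_N^p)\overset d=\sum_a \Tr V_a$ with independent Haar blocks of sizes $q+1$ ($r$ times) and $q$ ($p-r$ times). I apply the same exponential generating function to $\Tr(U_N^p)$:
\[
\mathbb E\,\e^{s\Tr(U_N^p)+u\overline{\Tr(U_N^p)}}=\sum_k \frac{(su)^k}{(k!)^2}\,M_{N,p}(k),
\]
again using rotation invariance, which holds because $U_N^p$ is invariant under scalar phases $\e^{\ii\phi}$ acting as $\e^{\ii p\phi}$. By independence of the blocks, the left side factorises as $\prod_a \mathbb E\,\e^{s\Tr V_a+u\overline{\Tr V_a}}$. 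Applying Step 1 to each factor gives $\mathsf H_{q+1}(su)^r\,\mathsf H_q(su)^{p-r}$, which proves \eqref{eq:H-product}.

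\emph{Step 3: characteristic function.} Identify $t\in\C\simeq\R^2$ and write
\[
t\cdot X=\Re(\bar t X)=\tfrac12(\bar t X+t\bar X).
\]
Then
\[
\mathbb E\,\e^{\ii t\cdot X_{N,p}}=\mathbb E\,\e^{s\Tr(U_N^p)+u\overline{\Tr(U_N^p)}},
\qquad s=\frac{\ii\bar t}{2\sqrt p},\quad u=\frac{\ii t}{2\sqrt p}.
\]
This requires the identity of Step 2 for complex $s,u$, not just real ones. Both sides are entire in $(s,u)$ because the traces are bounded, so the identity extends by analytic continuation. In these variables
\[
su=-\frac{|t|^2}{4p}=-\frac{a}{p},
\]
which yields \eqref{eq:characteristic-H-product}.

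The main obstacle is Step 1: checking that the Toeplitz symbol's Bessel coefficients match the index convention and the $\sqrt z$ branch in \eqref{eq:H-definition}. This is handled by the diagonal-similarity observation together with $I_{-n}=I_n$. Everything else is independence plus the boundedness of the traces.
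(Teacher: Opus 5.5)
Your proof is correct, but it reaches the block factorisation by a different route from the paper. You use Rains's power-map theorem \eqref{eq:rains-decomposition}, so that the two-variable exponential generating function of $\Tr(U_N^p)$ factorises by independence of the blocks. You then identify each single-block factor with $\mathsf H_d$ through Heine--Szeg\H{o} and diagonal conjugation.

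The paper does not use Rains's theorem here. It applies the displacement determinant (\cref{thm:displacement}) with $k$ positive and $k$ negative copies of the exponent $p$, weights by $z^k/(k!)^2$, and sums over $k$. Multilinearity in the rows then shows that the resulting $N\times N$ matrix has entry $I_{(i-j)/p}(2\sqrt z)$ when $i\equiv j \pmod p$ and zero otherwise. Reordering the indices by residue class modulo $p$ makes this matrix block diagonal, with $r$ blocks of size $q+1$ and $p-r$ blocks of size $q$. Equivalently, the Toeplitz symbol $\omega(z^p)$ has Fourier support in $p\mathbb Z$.

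The paper's route is self-contained and elementary. The residue-class reordering in effect reproves the content of Rains's decomposition at the level of multiplicative functionals. This is why the paper can later say that its fixed-power and fixed-quotient arguments never invoke the power-map theorem. Your route is shorter and makes the probabilistic meaning of the product transparent, at the cost of treating a deep theorem as a black box. Since the paper states Rains's theorem as known and uses it elsewhere, that is legitimate here.

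One small point: the analytic-continuation step in your Step 3 is unnecessary. Heine--Szeg\H{o} and the balanced expansion already hold directly for complex $s,u$, and any $z\in\C$ can be written as $su$ with $u=1$. This also disposes of the branch issue at $u=0$.
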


\begin{proof}
In \cref{eq:displacement-determinant}, take $k$ positive and $k$ negative
copies of the same exponent $p$.  If $A_i$ and $B_i$ are the numbers of
positive and negative labels assigned to row $i$, division by $(k!)^2$ and
summation in $k$ give
\begin{equation}
 \sum_{\mathbf A,\mathbf B\geq0}
 \prod_{i=0}^{N-1}\frac{z^{(A_i+B_i)/2}}{A_i!B_i!}
 \det[\1_{\{j=i-p(A_i-B_i)\}}]_{i,j}.
\end{equation}
In the above, each positive label contributes $p$ to the displacement and each
negative label contributes $-p$, so $h_i=p(A_i-B_i)$. The determinant already forces $\sum_iA_i=\sum_iB_i$.  By multilinearity,
the $(i,j)$ entry after summation is zero unless $i\equiv j\pmod p$, and
otherwise is
\begin{equation}
 \sum_{A-B=(i-j)/p}\frac{z^{(A+B)/2}}{A!B!}
 =
 I_{(i-j)/p}(2\sqrt z).
\end{equation}
Reorder the indices by residue class modulo $p$.  There are $r$ classes of
size $q+1$ and $p-r$ classes of size $q$, which proves
\cref{eq:H-product}. Expanding the exponential and using rotational invariance to eliminate
terms with unequal powers of the trace and its conjugate gives
\begin{equation}
 \mathbb E\e^{\ii t\cdot X_{N,p}}
 =
 \sum_{k=0}^\infty M_{N,p}(k)
 \frac{(-a/p)^k}{(k!)^2}.
\end{equation}
Thus substituting $z=-a/p$ into \cref{eq:H-product} proves
\cref{eq:characteristic-H-product}.
\end{proof}

\paragraph{The first altered single-power moment.}
Corollary~\ref{cor:H-factorization} makes the first defect completely
explicit. By character truncation, $\mathsf H_d(z)$ agrees with $\e^z$ through degree $d$;
at degree $d+1$ it excludes only the partition $(1^{d+1})$, whose character
at the identity is one.  Consequently the first logarithmic correction is
\begin{equation}\label{eq:first-H-defect}
 \log\mathsf H_d(z)=z-\frac{z^{d+1}}{((d+1)!)^2}
 +O(z^{d+2}).
\end{equation}
Substituting this expansion into \eqref{eq:H-product}, with $\nu=q+1$, gives
\begin{equation}\label{eq:first-moment-defect-explicit}
 \mathbb E|\Tr(U_N^p)|^{2k}=k!p^k\quad(0\leq k\leq q),
 \qquad
 \mathbb E|\Tr(U_N^p)|^{2\nu}=\nu!p^\nu-(p-r).
\end{equation}
The $p-r$ blocks of size $q$ each contribute the coefficient
$-1/(\nu!)^2$ in \eqref{eq:first-H-defect}; the $r$ larger blocks have no
defect at that degree.  Multiplication by $(\nu!)^2$ converts the generating
coefficient back into the moment.  Thus the first deviation is an integer
count of the smaller residue classes.  For $p=1$ this reduces to the familiar
identity
$\mathbb E|\Tr(U_N)|^{2(N+1)}=(N+1)!-1$.

The scalar Bessel product is the one-time specialisation of a multitime
factorisation; we state explicitly the extension because it links the displacement
determinant directly to the multitime generating function below.

\begin{corollary}[Multitime factorisation and Fredholm correction]
\label{cor:multitime-factorization}
Let $g\geq1$ be an integer, write $N=gQ+R$ with $0\leq R<g$, and suppose that
$\mathbf t=(t_k)_{k\geq1}$ and $\mathbf s=(s_k)_{k\geq1}$ satisfy
\begin{equation}\label{eq:multitime-summability}
 \sum_{k\geq1}k(|t_k|+|s_k|)<\infty.
\end{equation}
Define
\begin{equation}\label{eq:multitime-H}
 \mathcal H_d(\mathbf t,\mathbf s)=
 \mathbb E_{V\in\U(d)}\exp\!\left\{
 \sum_{k\geq1}t_k\Tr(V^k)+\sum_{k\geq1}s_k\Tr(V^{-k})\right\},
 \qquad \mathcal H_0=1.
\end{equation}
Then
\begin{align}
 \mathcal Z_{N,g}(\mathbf t,\mathbf s)
 &:={\mathbb E}_{U\in\U(N)}\exp\!\left\{
 \sum_{k\geq1}t_k\Tr(U^{gk})+
 \sum_{k\geq1}s_k\Tr(U^{-gk})\right\} \notag\\
 &=\mathcal H_{Q+1}(\mathbf t,\mathbf s)^R
   \mathcal H_Q(\mathbf t,\mathbf s)^{g-R}.      \label{eq:multitime-factorization}
\end{align}
Put
\begin{align}
 \Phi(z)&=\exp\!\left(\sum_{k\geq1}t_kz^k+
                      \sum_{k\geq1}s_kz^{-k}\right),
 &\mathcal I_n&=[z^{-n}]\Phi(z),                 \label{eq:multitime-symbol}\\
 \mathfrak s(\mathbf t,\mathbf s)&=\sum_{k\geq1}kt_ks_k,
 &\mathcal F_d(\mathbf t,\mathbf s)&=
   \e^{-\mathfrak s(\mathbf t,\mathbf s)}
   \mathcal H_d(\mathbf t,\mathbf s).           \label{eq:normalized-H}
\end{align}
The one-block factor and its normalised correction factor are
\begin{equation}\label{eq:multitime-toeplitz-fredholm}
 \mathcal H_d=\det[\mathcal I_{b-a}]_{a,b=0}^{d-1},
 \qquad \mathcal F_d=\det(I-K_d).
\end{equation}
Here, with Laurent coefficients denoted by $[z^n]$, set
\begin{equation}\label{eq:BO-kernel-definition}
 b(z)=\exp\!\left(\sum_{k\geq1}(s_kz^{-k}-t_kz^k) \right),
 \qquad c(z)=b(z)^{-1},
\end{equation}
let $\operatorname{Hank}(a)=([z^{i+j+1}]a(z))_{i,j\geq0}$ be the Hankel operator,
write $\tilde c(z)=c(z^{-1})$, and let $Q_d$ project
$\ell^2(\mathbb Z_{\geq0})$ onto $\ell^2(\{d,d+1,\ldots\})$.  Then
\begin{equation}\label{eq:BO-block-kernel}
 K_d=Q_d\operatorname{Hank}(b)\operatorname{Hank}(\tilde c)Q_d.
\end{equation}
In particular, on the direct sum of $g$ tail spaces,
\begin{align}
 \e^{-g\mathfrak s}\mathcal Z_{N,g}
 &=\det(I-\mathbf K_{N,g}),                       \label{eq:global-fredholm-factor}\\
 \mathbf K_{N,g}
 &=K_{Q+1}^{\oplus R}\oplus K_Q^{\oplus(g-R)},  \label{eq:global-block-kernel}\\
 \det(I-\mathbf K_{N,g})
 &=1-\Tr\mathbf K_{N,g}+
   \sum_{\ell\geq2}(-1)^\ell
   \Tr(\wedge^\ell\mathbf K_{N,g}),             \label{eq:global-exterior-expansion}\\
 \Tr\mathbf K_{N,g}
 &=R\Tr K_{Q+1}+(g-R)\Tr K_Q.                   \label{eq:global-first-trace}
\end{align}
All identities also hold coefficientwise in the completed formal power-series
ring with $\deg t_k=\deg s_k=k$.
\end{corollary}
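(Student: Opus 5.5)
The proof splits into three steps: the factorisation \eqref{eq:multitime-factorization}, the one-block Toeplitz and Fredholm identities \eqref{eq:multitime-toeplitz-fredholm}, and the assembly of the global statements. I would run the argument of \cref{cor:H-factorization} verbatim with the one-time symbol replaced by the multitime symbol.

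\textbf{Step 1: the Toeplitz form.} By Heine--Szeg\H{o} \cite[Theorem~1]{kn:bumdia02}, $\mathcal Z_{N,g}=\det[\Phi_g{}_{,b-a}]_{a,b=0}^{N-1}$, where $\Phi_g(z)=\Phi(z^g)$. Its Laurent coefficients are $[z^{-n}]\Phi(z^g)=\mathcal I_{n/g}\1_{\{g\mid n\}}$. Convergence of both sides is guaranteed by \eqref{eq:multitime-summability}, since $|\Tr(U^{gk})|\le N$ and $\Phi$ lies in the Wiener algebra with $\sum k|\cdot|$ weight.

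\textbf{Step 2: block decomposition and factorisation.} The matrix entry $(a,b)$ vanishes unless $a\equiv b\pmod g$. Reordering the indices by residue class is a simultaneous row and column permutation, so it does not change the determinant. This gives a block-diagonal matrix with $R$ blocks of size $Q+1$ and $g-R$ blocks of size $Q$. Within a residue class, indices $a=\rho+gi$ yield entries $\mathcal I_{j-i}$, so each block is the $d\times d$ Toeplitz matrix $[\mathcal I_{b-a}]$. Applying Heine--Szeg\H{o} again on $\U(d)$ identifies this block determinant with $\mathcal H_d$. Together these prove \eqref{eq:multitime-factorization} and the first half of \eqref{eq:multitime-toeplitz-fredholm}.

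\textbf{Step 3: the Fredholm form of one block.} I would apply the Borodin--Okounkov--Geronimo--Case identity \cite{kn:borodinokounkov00,kn:basorwidom00} to the symbol $\Phi=\Phi_+\Phi_-$, with $\Phi_+=\exp(\sum t_kz^k)$ and $\Phi_-=\exp(\sum s_kz^{-k})$.

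The strong Szeg\H{o} constant is $E(\Phi)=\exp(\sum_k k(\log\Phi)_k(\log\Phi)_{-k})=\e^{\mathfrak s}$. BOGC gives $D_d(\Phi)=E(\Phi)\det(I-Q_d\operatorname{Hank}(b)\operatorname{Hank}(\tilde c)Q_d)$, where $b=\Phi_-\Phi_+^{-1}$ and $c=\Phi_+\Phi_-^{-1}$. These match \eqref{eq:BO-kernel-definition}, up to the standard convention check on which Hankel carries the tilde.

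\textbf{Main obstacle: convention bookkeeping.} I expect this step to be the main obstacle, and it is purely a matter of conventions. The Hankel indexing $[z^{i+j+1}]$ and the orientation of $z\leftrightarrow z^{-1}$ must match the convention under which $\mathcal H_d=\det[\mathcal I_{b-a}]$ uses $\mathcal I_n=[z^{-n}]\Phi$. I would settle this by checking the first-order term: both sides should give $\Tr K_d$ to lowest degree equal to $\sum_{k>d}$-type contributions. Equivalently, the $t_1s_1$ coefficient should vanish for $d\ge1$, matching \eqref{eq:first-H-defect}.

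The trace-class hypothesis requires $\sum k|(\log\Phi)_k|^2<\infty$, which follows from \eqref{eq:multitime-summability}. Both Hankels are then Hilbert--Schmidt, so $K_d$ is trace class.

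\textbf{Step 4: the global Fredholm statements.} Multiplying the $g$ block identities gives $\e^{-g\mathfrak s}\mathcal Z_{N,g}=\prod\det(I-K_{d})$. This equals the Fredholm determinant of the direct sum \eqref{eq:global-block-kernel}, which is \eqref{eq:global-fredholm-factor}. The expansion \eqref{eq:global-exterior-expansion} is the standard Fredholm series for trace-class operators, and additivity of the trace over direct sums gives \eqref{eq:global-first-trace}.

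\textbf{Formal statement.} For the coefficientwise statement, every identity above is an identity of analytic functions on a polydisc-type neighbourhood in finitely many variables, obtained by truncating to $t_k=s_k=0$ for $k>K$. Expanding and letting $K\to\infty$ degree by degree gives equality in the graded completion, because each degree involves only finitely many variables.
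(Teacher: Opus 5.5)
Your proposal is correct and follows the paper's proof essentially step for step. Both arguments apply Heine--Szeg\H{o} to the symbol $\Phi(z^g)$, reorder the Toeplitz indices by residue class modulo $g$, apply the Borodin--Okounkov--Geronimo--Case identity to each block, assemble the direct sum, and truncate the times to obtain the formal statement.

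One caveat concerns your proposed convention check. The $t_1s_1$ coefficient cannot distinguish $\operatorname{Hank}(b)\operatorname{Hank}(\tilde c)$ from $\operatorname{Hank}(\tilde b)\operatorname{Hank}(c)$. Up to transposition, the latter is the kernel for $\Phi^{-1}$, that is, with all times negated. To fix the orientation, either match the cited Basor--Widom normalisation directly ($\det(\hat\Phi_{j-k})$, $b=\Phi_-\Phi_+^{-1}$, $c=\Phi_-^{-1}\Phi_+$), as the paper does, or test an odd monomial such as $t_1^2s_2$ at $d=1$. There $\mathcal F_1$ has coefficient $\tfrac12$, and only the paper's orientation reproduces it.
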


\begin{proof}
The symbol for $\mathcal Z_{N,g}$ is $\Phi(z^g)$.  Its Fourier coefficients
vanish outside $g\mathbb Z$, so reordering the Toeplitz indices by residue
class modulo $g$ produces $R$ blocks of size $Q+1$ and $g-R$ blocks of size
$Q$.  This proves \eqref{eq:multitime-factorization}.  Heine--Szeg\H{o} gives
the Toeplitz determinant in \eqref{eq:multitime-toeplitz-fredholm}.

The Borodin--Okounkov--Geronimo--Case identity, in the product-of-Hankel
normalisation of \cite{kn:borodinokounkov00,kn:basorwidom00}, gives
\begin{equation}
 \det[\mathcal I_{b-a}]_{a,b=0}^{d-1}
 =\exp\!\left(\sum_{k\geq1}kt_ks_k\right)
  \det\!\left(I-Q_d\operatorname{Hank}(b)
                   \operatorname{Hank}(\tilde c)Q_d\right).
\end{equation}
Condition \eqref{eq:multitime-summability} makes the Hankel product trace
class and justifies all rearrangements.  Taking the product over residue
classes gives \eqref{eq:global-fredholm-factor} and
\eqref{eq:global-block-kernel}. The Fredholm exterior-power expansion gives
\eqref{eq:global-exterior-expansion}, and additivity of the trace on
direct sums gives \eqref{eq:global-first-trace}.

To obtain the formal identities, fix $D$ and set $t_k=s_k=0$ for
$k>D$. This truncation leaves all coefficients of weighted degree
at most $D$ unchanged. At these degrees, only finitely many terms
contribute to the Fredholm expansions, so the analytic identities
for the truncated times give equality of the corresponding Taylor
coefficients. Since $D$ is arbitrary, the formal identities follow.
\end{proof}

The factor $\e^{g\mathfrak s}$ is the stable contribution, while
$\mathbf K_{N,g}$ encodes the finite-$N$ correction. Its first trace
retains the two possible block dimensions $Q,Q+1$. The asymptotic
proofs use this trace to identify the leading correction and the
full determinant to control the remainder.

\subsection{Bessel Fredholm estimates}
\label{sec:bessel-estimate}

\ifdefined\PedagogicalVersion\PedBesselFredholm\fi

The fixed-power and joint-growth arguments use the same special
linearisation of the Borodin--Okounkov determinant.  We state it once in the
normalisation of \cref{eq:bessel-fredholm-kernel}. The
signed trace itself controls the quadratic error below, which does not follow
from a generic trace-class estimate.  On $\ell^2(\N)$ let
\begin{equation}\label{eq:bessel-fredholm-kernel}
 \mathcal J(\rho)=\bigl[I_{i+j-1}(\rho)\bigr]_{i,j\geq1},
 \qquad
 \mathcal L=\operatorname{diag}((-1)^i)_{i\geq1},
 \qquad
 \mathcal K(\rho)=-\mathcal J(\rho)\mathcal L
                    \mathcal J(\rho)\mathcal L,
\end{equation}
and let $Q_d$ project onto the coordinates $d+1,d+2,\ldots$.

For $d\geq1$ and $\rho\geq0$, put
\begin{equation}\label{eq:common-TD-definition}
 T_d(\rho)=\Tr\!\left(Q_d\mathcal K(\rho)Q_d\right),
 \qquad
 D_d(\rho)=\det\!\left(I-Q_d\mathcal K(\rho)Q_d\right).
\end{equation}

\begin{lemma}[Bessel determinant linearisation]
\label{lem:bessel-determinant-linearization}
Let $c_0$ be the positive solution of
$c_0\e^{1+c_0^2}=1$, and put
\begin{equation}\label{eq:CJL-trace-norm-constant}
 C_{\mathrm{tr}}=\frac1{2\pi(1-c_0^2)^2}
 =0.2004129914\ldots<0.201.
\end{equation}
There is an absolute constant $C_1<\infty$ such that, for every $d\geq1$
and $0\leq\rho\leq2c_0(d+1)$,
\begin{align}
 |T_d(\rho)|
 &\leq
 \frac{2\exp\!\left(\rho^2/[2(d+1)]\right)}
 {\left(1-(\rho/[2(d+1)])^4\right)^2}
 \frac{(\rho/2)^{2d+2}}{\Gamma(d+2)^2},
                                                        \label{eq:common-trace-factorial}\\
 |T_d(\rho)|&\leq\frac{C_{\mathrm{tr}}}{d+1},
                                                        \label{eq:common-trace-small}\\
 \sum_{j\geq2}\frac1j
 \left|\Tr\!\left((Q_d\mathcal K(\rho)Q_d)^j\right)\right|
 &\leq C_1|T_d(\rho)|^2.                       \label{eq:common-higher-traces}
\end{align}
Consequently, with
\begin{equation}\label{eq:common-log-remainder}
 \mathfrak r_d(\rho)=\sum_{j\geq2}\frac1j
 \Tr\!\left((Q_d\mathcal K(\rho)Q_d)^j\right),
\end{equation}
one has
\begin{align}
 D_d(\rho)&=\exp\{-T_d(\rho)-\mathfrak r_d(\rho)\},
 &|\mathfrak r_d(\rho)|&\leq C_1|T_d(\rho)|^2, \label{eq:common-logdet}\\
 D_d(\rho)&=1-T_d(\rho)+E_d(\rho),
 &|E_d(\rho)|&\leq C_1|T_d(\rho)|^2.            \label{eq:common-linearization}
\end{align}
The constants are uniform for the two dimensions $d=q,q+1$.
\end{lemma}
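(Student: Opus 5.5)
We prove the three bounds in order and then obtain \eqref{eq:common-logdet}--\eqref{eq:common-linearization} from the Fredholm expansion $\log\det(I-A)=-\sum_{j\geq1}j^{-1}\Tr A^j$, writing $A=Q_d\mathcal K(\rho)Q_d$. The basic object is the Hankel block $\mathcal J_d:=Q_d\mathcal J(\rho)$. Because $\mathcal L$ is a signed identity, we have $\Tr A=-\Tr(\mathcal J_d\mathcal L\mathcal J_d^{\top}\mathcal L)$ after cycling $Q_d$, and this can be written as an explicit signed double sum
\begin{equation*}
 T_d(\rho)=-\sum_{i>d}\sum_{j\geq1}(-1)^{i+j}I_{i+j-1}(\rho)^2 .
\end{equation*}
We reorganise it as $\sum_{n\geq d+1}c_n I_n(\rho)^2$ with $n=i+j-1$. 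The alternating signs make the coefficient $c_n$ collapse to $0$ or $\pm1$, depending on parity. This cancellation is the \emph{signed trace} mentioned before the lemma, and it is what makes the factorial bound sharp.

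For \eqref{eq:common-trace-factorial}, insert the series bound $I_n(\rho)\leq (\rho/2)^n\e^{\rho^2/(4(n+1))}/\Gamma(n+1)$. Summing the terms with $n\geq d+1$ gives a geometric-type series in $(\rho/2)^4/((n+1)(n+2))^2$ dominated by its first term. The constraint $\rho\leq2c_0(d+1)$ keeps the ratio below $(\rho/[2(d+1)])^4<1$, and this produces the denominator $(1-(\rho/[2(d+1)])^4)^2$ after accounting for the paired terms.

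For \eqref{eq:common-trace-small} we follow Courteaut--Johansson--Lambert. We use the uniform Debye-type bound $I_n(n x)\leq \e^{n\eta(x)}/\sqrt{2\pi n}\,(1+x^2)^{-1/4}$, where $\eta(x)=\sqrt{1+x^2}+\log\bigl(x/(1+\sqrt{1+x^2})\bigr)$. On the range $x\leq 2c_0$ (up to the normalisation $d+1$), the choice of $c_0$ through $c_0\e^{1+c_0^2}=1$ is exactly the condition that the exponential factor is $\leq1$. The resulting sum over $n$ is then bounded by $\sum_n(2\pi n)^{-1}\kappa^{n-d-1}$ with $\kappa\leq c_0^2$, which gives $C_{\mathrm{tr}}/(d+1)$ with $C_{\mathrm{tr}}=[2\pi(1-c_0^2)^2]^{-1}$. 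The square comes from the double geometric sum (in $i$ and $j$).

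The main obstacle is \eqref{eq:common-higher-traces}: we need quadratic control by $|T_d|^2$, and not merely by $\norm{A}_1^2$. The first step is to bound $|\Tr A^j|\leq\norm{A}_{\mathrm{op}}^{j-2}\norm{A}_2^2$. Then we aim to show $\norm{A}_2\leq C|T_d|$ and $\norm{A}_{\mathrm{op}}\leq\theta<1$ uniformly. The operator bound follows from the previous step, since $\norm{A}_{\mathrm{op}}\leq\norm{\mathcal J_d}_2^2\leq C_{\mathrm{tr}}$-type bounds are small. For the Hilbert--Schmidt bound, write $\norm{A}_2^2\leq\norm{\mathcal J_d}_{2}^4$, where $\norm{\mathcal J_d}_2^2=\sum_{n>d}(n-d)I_n^2$. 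We then compare this with $|T_d|$, which in turn is comparable to its leading term $I_{d+1}(\rho)^2$. Both quantities are dominated by $n=d+1$ up to a uniform geometric factor, because the ratios $I_{n+1}/I_n$ are bounded by $\rho/(2(n+1))\cdot$const $<c_0$ on the range considered. Proving this two-sided comparison $|T_d|\asymp I_{d+1}^2$ uniformly is the delicate point, since $T_d$ involves cancellation. We would first try to show monotone decay of $n\mapsto I_n(\rho)^2$ at rate below $c_0^2$, so that the alternating sum is at least $(1-c_0^2)I_{d+1}^2$.

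Once these bounds hold, summing the geometric series gives $|\mathfrak r_d|\leq C_1|T_d|^2$. Exponentiating, and using $|T_d|\leq0.201$ together with $|\e^{-x}-1+x|\leq C x^2$ for bounded $x$, yields \eqref{eq:common-linearization}. All constants depend only on $c_0$, so they are uniform in $d\in\{q,q+1\}$.
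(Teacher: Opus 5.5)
Your route differs from the paper's, which proves none of the three inequalities itself. It takes \eqref{eq:common-trace-factorial} from Lemma~2.6 of Courteaut--Johansson--Lambert, the trace-norm bound $\norm{Q_d\mathcal K(\rho)Q_d}_1\le C_{\mathrm{tr}}/(d+1)<0.101$ (hence \eqref{eq:common-trace-small}) from their Lemma~2.5, and \eqref{eq:common-higher-traces} from their Lemma~2.8. It then derives \eqref{eq:common-logdet}--\eqref{eq:common-linearization} essentially as in your last paragraph. A self-contained proof from the Bessel series is possible, but your reorganisation of the trace is miscomputed. With $n=i+j-1$ the sign $(-1)^{i+j}=(-1)^{n+1}$ depends only on $n$, so nothing cancels at fixed $n$. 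The $n-d$ pairs $(i,j)$ with $i>d$ contribute equally, and
\[
 T_d(\rho)=\sum_{h\geq1}(-1)^{d+h}\,h\,I_{d+h}(\rho)^2,
\]
which is \eqref{eq:trace-bessel-square}. The coefficients grow linearly and alternate in $n$; they are not $0$ or $\pm1$. You silently use the correct weight $n-d$ elsewhere: in $\norm{\mathcal J_d}_2^2$, and when you say the square in $C_{\mathrm{tr}}$ comes from the double sum. So the proposal is internally inconsistent. As written, neither your parity/fourth-power account of the denominator in \eqref{eq:common-trace-factorial} nor your lower bound $(1-c_0^2)I_{d+1}^2$ is justified.

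The plan survives the correction and becomes short. Comparing the series termwise gives $I_{n+1}(\rho)\le\frac{\rho}{2(n+1)}I_n(\rho)$, hence $I_{d+h}(\rho)\le c_0^{h-1}I_{d+1}(\rho)$ for $\rho\le2c_0(d+1)$.

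\emph{First bound.} Use $|T_d|\le\sum_h hI_{d+h}^2$ and your majorant $I_n(\rho)\le(\rho/2)^n\e^{\rho^2/(4(n+1))}/n!$. This gives a factor $(1-x)^{-2}$ with $x=(\rho/[2(d+2)])^2\le y:=(\rho/[2(d+1)])^2\le c_0^2$. Since $(1-y)^{-2}=(1+y)^2(1-y^2)^{-2}\le2(1-y^2)^{-2}$, this is \eqref{eq:common-trace-factorial}, with no pairing needed.

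\emph{Second bound.} The same majorant at $n=d+1$ and $n!\ge\sqrt{2\pi n}\,(n/\e)^n$ give $I_{d+1}^2\le(c_0\e^{1+c_0^2})^{2(d+1)}/(2\pi(d+1))=1/(2\pi(d+1))$. This, not the Debye bound, is where the defining equation of $c_0$ comes from; it is not exactly the condition $\eta(2c_0)=0$. Since $\norm{A}_1\le\norm{Q_d\mathcal J}_2\,\norm{\mathcal J Q_d}_2=\sum_h hI_{d+h}^2$ and $\sum_h hc_0^{2(h-1)}=(1-c_0^2)^{-2}$, you get both $|T_d|\le C_{\mathrm{tr}}/(d+1)$ and $\norm{A}_1\le C_{\mathrm{tr}}/(d+1)$.

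\emph{Third bound.} The terms $hI_{d+h}^2$ decrease with ratio at most $2c_0^2<1$, so the alternating series gives $|T_d|\ge(1-2c_0^2)I_{d+1}^2$. Meanwhile $\norm{A}_1\le(1-c_0^2)^{-2}I_{d+1}^2$. Hence $\norm{A}_1\le C|T_d|$, and $|\Tr A^j|\le\norm{A}_1^j$ with $\norm{A}_1<0.101$ already gives \eqref{eq:common-higher-traces}. Your operator-norm/Hilbert--Schmidt split is unnecessary.

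Compared with the paper, this gives a self-contained proof with explicit constants. It also shows that the ``special signed-trace estimate'' is just domination of $T_d$ by its first term $I_{d+1}^2$. The paper's citation is shorter but leaves that mechanism inside Courteaut--Johansson--Lambert.
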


\begin{proof}
In the notation of Courteaut--Johansson--Lambert~\cite{kn:courteautjohanssonlambert24},
their $n$ is our $d$, their $r=|\zeta|$ is our $\rho$, and their $Q_n$ has
kernel $\operatorname{span}(e_1,\ldots,e_n)$, so it is exactly our tail
projection $Q_d$.  Their operator has the opposite overall sign from
\cref{eq:bessel-fredholm-kernel}; this changes odd traces by a sign but
does not affect any estimate in absolute value. With this translation, \cref{eq:common-trace-factorial} is their
Lemma~2.6.  Their Lemma~2.5 gives the sharper trace-norm estimate
\begin{equation}\label{eq:common-trace-norm}
 \norm{Q_d\mathcal K(\rho)Q_d}_1
 \leq\frac{C_{\mathrm{tr}}}{d+1}
 \leq\frac{0.201}{2}<0.101<1
\end{equation}
throughout the stated range, and hence also
\cref{eq:common-trace-small}.  Their Lemma~2.8 proves in the same range
the special signed-trace estimate \cref{eq:common-higher-traces}.

Next, we derive the logarithmic and linear approximations
to $D_d$. Since $\norm{Q_d\mathcal KQ_d}_1<1$, the trace series for
the logarithm of the Fredholm determinant converges absolutely
\cite[Eq.~(3.3)]{kn:bornemann10}. Separating the first term and
exponentiating gives
\begin{equation}
 D_d
 =\exp\!\left\{-T_d-
   \sum_{j\geq2}\frac1j\Tr((Q_d\mathcal KQ_d)^j)\right\}
 =\e^{-T_d-\mathfrak r_d},
\end{equation}
which proves \cref{eq:common-logdet}.

Set $z=T_d+\mathfrak r_d$. By
\cref{eq:common-higher-traces,eq:common-trace-small},
\begin{equation}
 |\mathfrak r_d|\leq C_1|T_d|^2,
 \qquad
 |z|\leq |T_d|+C_1|T_d|^2\leq C|T_d|.
\end{equation}
In particular, $\e^{|z|}$ is uniformly bounded. Since
$z-T_d=\mathfrak r_d$, we have the exact decomposition
\begin{equation}
 D_d-1+T_d=(\e^{-z}-1+z)-\mathfrak r_d.
\end{equation}
Applying
$|\e^{-z}-1+z|\leq\frac12|z|^2\e^{|z|}$ therefore yields
\begin{equation}
 |D_d-1+T_d|
 \leq \frac12|z|^2\e^{|z|}+|\mathfrak r_d|
 \leq C|T_d|^2.
\end{equation}
Enlarging $C_1$ proves \cref{eq:common-linearization}.
This quadratic estimate also appears in the proof of
\cite[Corollary~2.9]{kn:courteautjohanssonlambert24}.
\end{proof}

\subsection{Fourier--Laguerre transfer}
\label{sec:fourier-laguerre-transfer}

For $f\in L^1(\R^2)$, let $\F f$ denote its Fourier transform,
with the convention
\begin{equation}\label{eq:fourier-convention}
 (\F f)(t)=\int_{\R^2}\e^{\ii t\cdot x}f(x)\dd x,
 \qquad
 (\F^{-1}h)(x)=\frac1{(2\pi)^2}
 \int_{\R^2}\e^{-\ii t\cdot x}h(t)\dd t.
\end{equation}
Then $(\F g)(t)=\e^{-|t|^2/4}$. We also use the Plancherel extensions of these operators to $L^2(\R^2)$. The ordinary Laguerre polynomials are
\begin{equation}
 L_k(u)=\sum_{j=0}^k(-1)^j\binom{k}{j}\frac{u^j}{j!}.
\end{equation}

\begin{lemma}[Laguerre modes]\label{lem:laguerre-fourier}
For every $k\geq0$,
\begin{equation}\label{eq:general-laguerre-fourier}
 \F\!\left(L_k(|\cdot|^2)g\right)(t)
 =
 \e^{-|t|^2/4}\frac{(|t|^2/4)^k}{k!}.
\end{equation}
 Moreover, if $R\in L^2(\R^2)$ and its distributional Laplacian
$\Delta R$ belongs to $L^2(\R^2)$, and if $f=\F^{-1}R$, then
\begin{equation}\label{eq:weighted-fourier-transfer}
 \norm{f}_1
 \leq C\bigl(\norm{R}_2+\norm{\Delta R}_2\bigr).
\end{equation}
\end{lemma}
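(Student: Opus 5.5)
For \eqref{eq:general-laguerre-fourier} I would use a generating function in the Laguerre index rather than computing each mode separately. The classical identity
\[
 \sum_{k\geq0}L_k(u)w^k=\frac1{1-w}\exp\!\left(-\frac{uw}{1-w}\right),\qquad |w|<1,
\]
applied with $u=|x|^2$, gives
\[
 \sum_k w^kL_k(|x|^2)g(x)=\frac{1}{\pi(1-w)}\exp\!\left(-\frac{|x|^2}{1-w}\right).
\]
The right-hand side is a Gaussian with variance parameter $1-w$.

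For real $w\in(-1,1)$, and then by analytic continuation for complex $w$ with $\Re\frac1{1-w}>0$, its Fourier transform in the convention \eqref{eq:fourier-convention} is
\[
 \frac{1}{\pi(1-w)}\cdot\pi(1-w)\,\e^{-(1-w)|t|^2/4}=\e^{-|t|^2/4}\e^{w|t|^2/4}.
\]
Expanding $\e^{w|t|^2/4}$ in powers of $w$ and matching the coefficient of $w^k$ then gives the claim. This requires exchanging the sum with the Fourier integral. I would justify it by dominated convergence for $|w|$ small, using $|L_k(u)|\leq \e^{u/2}$ so that $\sum_k|w|^k|L_k|g$ is integrable. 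As a fallback, the identity can be checked directly by differentiating $\F g$ with the Laplacian, since $L_k(|x|^2)$ is a polynomial in $|x|^2$.

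For \eqref{eq:weighted-fourier-transfer} I would use Cauchy--Schwarz with the weight $(1+|x|^2)$:
\[
 \norm f_1\leq\Bigl(\int_{\R^2}\frac{\dd x}{(1+|x|^2)^2}\Bigr)^{1/2}\norm{(1+|x|^2)f}_2 .
\]
The integral equals $\pi$, which is finite in two dimensions. Next, $|x|^2f=-\F^{-1}(\Delta R)$ in the distributional sense, because multiplication by $\ii x_j$ corresponds under $\F$ to $\partial_{t_j}$ with the appropriate sign. By Plancherel, $\norm{f}_2=(2\pi)^{-1}\norm R_2$ and $\norm{|x|^2f}_2=(2\pi)^{-1}\norm{\Delta R}_2$. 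Together these give the bound with $C=\sqrt\pi/(2\pi)$ up to the constant in $(1+|x|^2)\leq$ the sum of the two pieces.

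The only delicate point is justifying $\F^{-1}(\Delta R)=-|x|^2\F^{-1}R$ when $R$ is merely an $L^2$ function with $L^2$ Laplacian. I would handle this by pairing against Schwartz test functions, using the definition of the distributional Laplacian together with the Plancherel extension. This shows that $|x|^2f$ is the $L^2$ function $-\F^{-1}\Delta R$, and in particular that $f\in L^1$. This step is routine, so I expect the main care to lie in the generating-function interchange in the first part.
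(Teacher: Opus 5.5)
Your proposal is correct. For the $L^1$ bound you follow the paper exactly: Cauchy--Schwarz against $(1+|x|^2)^{-1}\in L^2(\R^2)$, then Plancherel with $|x|^2f=-\F^{-1}\Delta R$. You additionally make the constant $C=(2\sqrt\pi)^{-1}$ and the distributional justification explicit.

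For the Laguerre-mode identity your route is genuinely different. The paper proves the intertwining relation $\Delta^kg=(-4)^kk!\,L_k(|x|^2)g$ by induction. The inductive step is $\Delta_x\bigl(\e^{-|x|^2}L_k(|x|^2)\bigr)=-4(k+1)\e^{-|x|^2}L_{k+1}(|x|^2)$, which combines the Laguerre equation with the relation $(k+1)L_{k+1}=(k+1-u)L_k+uL_k'$. The paper then reads off $\F(\Delta^kg)=(-|t|^2)^k\e^{-|t|^2/4}$ by integrating by parts against the Schwartz function $g$.

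You instead treat all $k$ at once. The Laguerre generating function multiplied by $g$ is the Gaussian $\frac{1}{\pi(1-w)}\e^{-|x|^2/(1-w)}$, and its transform $\e^{-|t|^2/4}\e^{w|t|^2/4}$ has exactly the claimed coefficients.

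What each approach buys:
\begin{itemize}
\item Your argument needs only the closed-form generating function and one Gaussian integral. The cost is a sum--integral interchange, which your domination via $|L_k(u)|\le\e^{u/2}$ settles. Analytic continuation in $w$ is unnecessary, since agreement of two convergent power series on a real interval already forces equality of coefficients.
\item The paper's argument uses only finitely many exact differentiations, so no convergence question arises. It also exhibits each Laguerre mode directly as a normalised iterated Laplacian of $g$, which explains why it corresponds to the monomial $|t|^{2k}$ in frequency.
\end{itemize}
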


\begin{proof}
The identity
$\Delta^kg=(-4)^kk!L_k(|x|^2)g$ follows by induction from the Laguerre
differential equation.  Since
$\F(\Delta^kg)=(-1)^k|t|^{2k}\e^{-|t|^2/4}$, with the signs cancelled by
the chosen Laguerre convention, one obtains
\cref{eq:general-laguerre-fourier}.  For the second assertion,
\begin{equation}
 \norm{f}_1
 \leq\norm{(1+|x|^2)^{-1}}_2
      \norm{(1+|x|^2)f}_2
 \leq C\norm{R-\Delta R}_2,
\end{equation}
by Cauchy--Schwarz and Plancherel.
\end{proof}

We shall repeatedly use the following radial identity. If
$a=|t|^2/4$ and $F=F(a)$ has two locally absolutely continuous
derivatives on $[0,\infty)$, then, distributionally (and pointwise
wherever the derivatives are classical),
\begin{equation}\label{eq:radial-laplacian-a}
 \Delta_tF(a)=F'(a)+aF''(a).
\end{equation}
For a rotationally invariant random variable $X$ with a finite moment $\mathbb E|X|^{2k}$, define its $k$th Laguerre coefficient
\begin{equation}\label{eq:beta-definition}
 \beta_X(k)=\mathbb E L_k(|X|^2)
 =
 \sum_{j=0}^k(-1)^j\binom{k}{j}
 \frac{\mathbb E|X|^{2j}}{j!}.
\end{equation}
Whenever the resulting Bessel/moment series is absolutely convergent,
expanding the Bessel function $J_0$ of the first kind of order zero and using binomial inversion gives the characteristic-function
identity
\begin{equation}\label{eq:laguerre-characteristic-series}
 \mathbb E\e^{\ii t\cdot X}
 =
 \e^{-a}\sum_{k=0}^\infty\beta_X(k)\frac{a^k}{k!},
 \qquad a=\frac{|t|^2}{4}.
\end{equation}
All trace variables used below are bounded at fixed matrix dimension, so
termwise expectation here follows directly from dominated convergence.
If the characteristic function is integrable, Fourier inversion first gives
the existence of a density.  Whenever, in addition, the corresponding
Laguerre series converges in $L^1$, that density is
\begin{equation}\label{eq:density-laguerre-series}
 f_X(x)=g(x)\sum_{k=0}^\infty\beta_X(k)L_k(|x|^2).
\end{equation}

\section{Fixed powers: sharp Gaussian approximation}
\label{sec:fixed-power}

\ifdefined\PedagogicalVersion\PedFixedPower\fi

We now prove \cref{thm:fixed-power}, retaining the notation
\eqref{eq:quotient-remainder}; in particular, $q\to\infty$.
The proof has four stages. The Charlier--Ewens formula identifies the
first correction band and its onset. The one-block Bessel determinant $\mathsf H_d$ defined in
\eqref{eq:H-definition} computes the balanced cumulants with
indices $d+1,\ldots,2d+1$ for a block of dimension $d$. We call these ``near-edge cumulants'' as
they lie beyond the Gaussian moment-matching range $k\leq d$;
they determine the coefficients of the leading Laguerre correction.
The normalised Fredholm determinant controls the remainder after the
first trace is removed, including the unsmoothed case $p=1$.
Finally, Fourier--Laguerre transfer and spatial truncation give the
density expansion \eqref{eq:leading-density-template}.

Here ``first band'' means the summand indexed by $\ell=1$ in
\eqref{eq:charlier-correction}. For fixed $p$, we retain its $q$-block
Laguerre modes with indices $\nu,\nu+1,\ldots,2\nu-1$ where $\nu = q+1$.
By \eqref{eq:first-H-defect}, the non-Gaussian part of the logarithm
begins at degree $\nu$, so exponentiating introduces no products of
two such terms below degree $2\nu$. The Laguerre coefficients in this
range therefore depend linearly on the near-edge cumulants.
The estimates below show that the higher $q$-block modes and the
$q+1$-block contribution are negligible on the leading
total-variation scale. Although degree $\nu$ gives the first non-Gaussian coefficient,
the higher coefficients in this range also affect the leading constant.
Every fixed shift $\nu+j$ is eventually retained as $q\to\infty$,
and its Laguerre mode contributes at the same asymptotic order as
$L_\nu$. The coefficient ratios converge to $(2/p)^j/j!$,
whose coherent summation produces the factor $\e^{2/p}$ in
\eqref{eq:band-summation}. Thus the first defective moment alone
does not determine the leading total-variation constant.

\subsection{Near-edge cumulants and the first correction band}

For $d\geq1$, define the balanced cumulants $\kappa_k^{(d)}$
by the Taylor expansion at $z=0$:
\begin{equation}\label{eq:balanced-cumulants}
 \log\mathsf H_d(z)
 =
 z+\sum_{k\geq2}\kappa_k^{(d)}\frac{z^k}{(k!)^2}.
\end{equation}

\begin{lemma}[Near-edge cumulants]\label{lem:near-edge-cumulants}
Let $d\geq1$. For $1\leq s\leq d+1$,
\begin{equation}\label{eq:near-edge-cumulant-formula}
 \kappa_{d+s}^{(d)}
 =
 \frac{(-1)^s}{(s-1)!}(2d+s)_{s-1},
\end{equation}
where $(x)_j=x(x+1)\cdots(x+j-1)$.  All
$\kappa_k^{(d)}$ with $2\leq k\leq d$ vanish.
\end{lemma}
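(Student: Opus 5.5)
The plan is to combine the character-truncation description of $\mathsf H_d$ with the Borodin--Okounkov representation of \cref{cor:multitime-factorization}. First, the vanishing claim. By \eqref{eq:H-product} with $p=1$ and $N=d$, and by \eqref{eq:intro-single-power-stable-moments}, the coefficients of $\mathsf H_d(z)$ agree with those of $\e^z$ through degree $d$. Hence $\log\mathsf H_d(z)=z+O(z^{d+1})$, so $\kappa_k^{(d)}=0$ for $2\le k\le d$.

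For the near-edge range I would specialise \cref{cor:multitime-factorization} to the single time $t_1=s_1=\sqrt z$, so that $\mathfrak s=z$. This gives
\begin{equation*}
 \mathsf H_d(z)=\e^{z}\det(I-K_d),\qquad
 \log\mathsf H_d(z)=z-\Tr K_d-\sum_{j\ge2}\tfrac1j\Tr K_d^j .
\end{equation*}
The entries of $K_d=Q_d\operatorname{Hank}(b)\operatorname{Hank}(\tilde c)Q_d$ are built from Laurent coefficients of $\exp(\sqrt z(\zeta^{-1}-\zeta))$ and its inverse. These coefficients are signed Bessel functions $\pm I_n(2\sqrt z)$, and each $I_n(2\sqrt z)$ carries the factor $z^{n/2}$. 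Every entry of $K_d$ therefore involves only indices $n\ge d+1$ and is $O(z^{d+1})$. Consequently each $\Tr K_d^j$ with $j\ge2$ is $O(z^{2d+2})$, and this holds coefficientwise by the formal version of the corollary. For $k=d+s\le 2d+1$ this leaves exactly
\begin{equation*}
 \kappa_k^{(d)}=-(k!)^2[z^k]\Tr K_d .
\end{equation*}

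Next I compute the trace. Writing out the Hankel product, the diagonal sum $\sum_{i\ge d}\sum_{l\ge0}b_{i+l+1}\tilde c_{i+l+1}$ collects each index $n=i+l+1\ge d+1$ exactly $n-d$ times. So $\Tr K_d=\sum_{n\ge d+1}\epsilon_n(n-d)I_n(2\sqrt z)^2$, with a sign $\epsilon_n$ that I would track carefully and that I expect to be $(-1)^{n+1}$ up to a global convention. I then use the classical series
\begin{equation*}
 I_n(2\sqrt z)^2=\sum_{k\ge n}\frac{(2k)!}{(k-n)!\,(k+n)!\,(k!)^2}z^k .
\end{equation*}
This gives $(k!)^2[z^k]I_n(2\sqrt z)^2=\binom{2k}{k-n}$. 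Substituting $j=k-n$, the cumulant becomes the finite alternating sum
\begin{equation*}
 \kappa_{d+s}^{(d)}=\pm\sum_{j=0}^{s-1}(-1)^j(s-j)\binom{2k}{j},\qquad k=d+s .
\end{equation*}

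The last step is a binomial identity. Two successive partial summations of alternating binomials, using $\sum_{j\le m}(-1)^j\binom{M}{j}=(-1)^m\binom{M-1}{m}$ twice, give
\begin{equation*}
 \sum_{j=0}^{s-1}(-1)^j(s-j)\binom{2k}{j}=(-1)^{s-1}\binom{2k-2}{s-1}.
\end{equation*}
Since $\binom{2d+2s-2}{s-1}=(2d+s)_{s-1}/(s-1)!$, this is the claimed magnitude. The small cases $s=1$, where $\kappa_{d+1}^{(d)}=-1$ is consistent with \eqref{eq:first-H-defect}, and $s=2$ serve as checks.

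The main obstacle is the sign bookkeeping in the Hankel coefficients: the factors $(-1)^n$ from $b$ and $\tilde c$, together with the overall sign convention relating $K_d$ to $\mathcal K(\rho)$. I would fix all signs by matching the case $s=1$ against \eqref{eq:first-H-defect}, which forces the overall factor $(-1)^s$. As an independent check I would verify $s=2$ directly from the character sum: it requires counting the excluded partitions with $\lambda_1\ge d+1$ in degree $d+2$, weighted by $(f^\lambda)^2$, and then taking the logarithm.
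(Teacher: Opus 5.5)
Your architecture is the paper's. You drop $\Tr K_d^j$ for $j\ge2$ because products of two non-Gaussian factors start at degree $2d+2$. You count Bessel index $n$ with multiplicity $n-d$, read off $(k!)^2[z^k]$ of a Bessel square, and evaluate an alternating partial binomial sum; your double partial summation is a tidy way to do that last step.

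\emph{The gap is the sign.} The sign is the entire content of the factor $(-1)^s$, and your way of fixing it does not work. With $t_1=s_1=\sqrt z$, the symbol $b(\zeta)=\exp(\sqrt z(\zeta^{-1}-\zeta))$ has Laurent coefficients $b_n=(-1)^nJ_n(2\sqrt z)$, not $\pm I_n(2\sqrt z)$. Modified Bessel functions appear only at negative argument $z=-\rho^2/4$, which is where the paper works. Suppose you keep your stated ingredients, namely $I_n(2\sqrt z)^2$ with positive coefficients and an alternating $\epsilon_n=\epsilon(-1)^n$. Then the prefactor of $\sum_j(-1)^j(s-j)\binom{2k}{j}$ is $-\epsilon(-1)^{k}=-\epsilon(-1)^{d+s}$, which depends on $s$. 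Calibrating at $s=1$ then gives $\kappa^{(d)}_{d+s}=-\binom{2d+2s-2}{s-1}$ for every $s$. This is wrong for even $s$: already $\log I_0(2\sqrt z)=z-z^2/4+z^3/9+\cdots$ gives $\kappa^{(1)}_3=+4$. A single calibration point can only fix an $s$-independent sign, and $s$-independence is exactly what your bookkeeping fails to provide. Your $s=2$ check would expose the contradiction but not locate it.

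The repair is short. Since $c=b^{-1}$, one has $\tilde c(\zeta)=c(\zeta^{-1})=b(\zeta)$. So the diagonal entries of $\operatorname{Hank}(b)\operatorname{Hank}(\tilde c)$ are sums of squares $b_{i+l+1}^2$, and $\Tr K_d=\sum_{n>d}(n-d)J_n(2\sqrt z)^2$ with no sign left to choose. The alternation then comes from the Bessel series itself: $(k!)^2[z^k]J_n(2\sqrt z)^2=(-1)^{k-n}\binom{2k}{k-n}$. With $j=k-n$ you get $\kappa^{(d)}_{d+s}=-\sum_{j=0}^{s-1}(-1)^j(s-j)\binom{2k}{j}=(-1)^s\binom{2d+2s-2}{s-1}$ directly.

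Equivalently, follow the paper and work at $z=-x$. There \eqref{eq:trace-bessel-square} supplies the sign $(-1)^n$ in front of $I_n(2\sqrt x)^2$, and $z^k=(-1)^kx^k$ supplies $(-1)^k$; their product is the same factor $(-1)^{k-n}$. With either correction, the value $\kappa^{(d)}_{d+1}=-1$ from \eqref{eq:first-H-defect} becomes a consistency check rather than an input.
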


\begin{proof}
Apply \cref{thm:charlier-correction} to $\mu=\nu=(1^k)$.  For $k\leq d$
the correction sum is empty, giving the Gaussian moments. If
$k=d+s\leq2d+1$, the logarithm of the moment series cannot contain a
product of two non-Gaussian terms, because every such product has degree at
least $2d+2$. Hence only the $\ell = 1$ correction band is needed in this degree range. Combining
\cref{eq:trace-bessel-square,eq:bessel-square-series} below, or equivalently
performing the one-coordinate Charlier extraction in
\cref{eq:charlier-kernel}, gives
\begin{equation}\label{eq:cumulant-finite-sum}
 \kappa_{d+s}^{(d)}
 =
 \sum_{h=1}^{s}(-1)^{s+h+1}h\,
 \frac{(2d+s+h+1)_{s-h}}{(s-h)!}.
\end{equation}
Put $u=s-h$ and $n=2d+2s$; then
\begin{equation}
 \frac{(2d+2s-u+1)_u}{u!}=\binom nu
\end{equation}
and therefore
\begin{equation}
 \kappa_{d+s}^{(d)}
 =\sum_{u=0}^{s-1}(-1)^{u+1}(s-u)\binom nu.
\end{equation}
The two partial-binomial identities
\begin{align*}
 \sum_{u=0}^{s-1}(-1)^u\binom nu
 =(-1)^{s-1}\binom{n-1}{s-1}, \qquad 
 \sum_{u=0}^{s-1}(-1)^uu\binom nu
 =(-1)^{s-1}n\binom{n-2}{s-2}
\end{align*}
(the second follows from $u\binom nu=n\binom{n-1}{u-1}$)
yield
\begin{equation}
 \kappa_{d+s}^{(d)}
 =(-1)^s\binom{2d+2s-2}{s-1}
 =\frac{(-1)^s}{(s-1)!}(2d+s)_{s-1},
\end{equation}
where $(x)_0=1$ and
$(x)_j=x(x+1)\cdots(x+j-1)$ for $j\geq1$. Above, we use the convention $\binom{a}{-1}=0$.
\end{proof}
For $X_{N,p}=p^{-1/2}\Tr(U_N^p)$ write
\begin{equation}
 \beta_{N,p}(k)=\mathbb E L_k(|X_{N,p}|^2),
 \qquad
 A_{N,p}=\frac{p-r}{p^\nu\Gamma(\nu+1)}.
\end{equation}
For $0\leq j\leq q$, put
\begin{equation}\label{eq:cqj}
 c_{q,j}
 =
 \frac{(q+1)!}{(q+j+1)!}\,
 \frac{(2q+j+1)_j}{j!}.
\end{equation}
Then $c_{q,0}=1$, and for every fixed $j$,
\begin{equation}\label{eq:cqj-limit}
 c_{q,j}\longrightarrow\frac{2^j}{j!}.
\end{equation}
Uniformly for $0\leq j\leq q$,
$0\leq c_{q,j}\leq4^j/j!$.

We now use the near-edge cumulants to construct a candidate for the
leading density correction. Retaining the Laguerre modes with indices
$\nu,\ldots,2\nu-1$ contributed by the blocks of dimension $q$, define
the signed comparison function
\begin{equation}\label{eq:one-residue-density}
 \Lambda_{N,p}(x)
 =
 (-1)^qA_{N,p}g(x)
 \sum_{j=0}^q c_{q,j}p^{-j}L_{\nu+j}(|x|^2).
\end{equation}
The following lemma identifies these coefficients and bounds the
additional contribution from the blocks of dimension $q+1$.
The Fredholm estimates in the next subsection will control the
remaining terms in the density approximation.

\begin{lemma}[Near-edge Laguerre coefficients]
\label{lem:edge-laguerre-coefficients}
For $0\leq j\leq q$,
\begin{equation}\label{eq:edge-beta}
 \beta_{N,p}(\nu+j)
 =
 (-1)^q A_{N,p}\,p^{-j}c_{q,j}
 +\varepsilon_{q,j},
\end{equation}
where
\begin{equation}\label{eq:large-block-error}
 \sum_{j=0}^q|\varepsilon_{q,j}|
 \leq \frac{C_pA_{N,p}}{\nu}.
\end{equation}
More precisely, with
\begin{equation}\label{eq:adjacent-block-coefficients}
 A_+=\frac{r}{p^{\nu+1}(\nu+1)!},\qquad
 c^+_{q,j}
 =\frac{(q+2)!}{(q+j+2)!}\,
   \frac{(2q+j+3)_j}{j!},
\end{equation}
one has $\varepsilon_{q,0}=0$ and
\begin{equation}\label{eq:adjacent-block-error-exact}
 \varepsilon_{q,j}
 =(-1)^{q+1}A_+p^{-(j-1)}c^+_{q,j-1},
 \qquad 1\leq j\leq q.
\end{equation}
\end{lemma}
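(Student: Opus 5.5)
The plan is to read the Laguerre coefficients directly off the factorised characteristic function. By \cref{cor:H-factorization} and \eqref{eq:laguerre-characteristic-series}, as identities of power series in $a=|t|^2/4$ (both sides are entire, so coefficients can be compared),
\begin{equation*}
 \sum_{k\geq0}\beta_{N,p}(k)\frac{a^k}{k!}
 =\e^{a}\,\mathsf H_{q+1}(-a/p)^r\,\mathsf H_q(-a/p)^{p-r}
 =\exp\!\Bigl\{(p-r)\Psi_q(-a/p)+r\,\Psi_{q+1}(-a/p)\Bigr\}.
\end{equation*}
Here $\Psi_d(z)=\sum_{k\geq2}\kappa_k^{(d)}z^k/(k!)^2$ is the non-Gaussian part of $\log\mathsf H_d$ from \eqref{eq:balanced-cumulants}. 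The linear terms $z$ in $\log\mathsf H_d$ contribute $r(-a/p)+(p-r)(-a/p)=-a$ in total, and this cancels against $\e^{a}$.

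\textbf{Step 1: only linear cumulant terms survive in the degree window.} By \cref{lem:near-edge-cumulants}, $\Psi_q$ begins at degree $\nu=q+1$ and $\Psi_{q+1}$ at degree $\nu+1$. Any product of two of these terms therefore has degree at least $2\nu=2q+2$. For degrees $\nu+j$ with $0\leq j\leq q$, that is, up to $2q+1$, the exponential contributes only its linear term. Hence
\begin{equation*}
 \beta_{N,p}(\nu+j)=\frac{(-1)^{\nu+j}}{p^{\nu+j}(\nu+j)!}
 \Bigl[(p-r)\kappa^{(q)}_{\nu+j}+r\,\kappa^{(q+1)}_{\nu+j}\Bigr].
\end{equation*}

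\textbf{Step 2: insert the near-edge formulas.} For the $q$-blocks, take $d=q$ and $s=j+1\leq d+1$. This gives $\kappa^{(q)}_{\nu+j}=(-1)^{j+1}(2q+j+1)_j/j!$. The resulting sign is $(-1)^{\nu+j+j+1}=(-1)^q$, and since $(q+1)!/(\nu+j)!$ is exactly the prefactor in $c_{q,j}$, the main term becomes $(-1)^qA_{N,p}p^{-j}c_{q,j}$.

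For the $(q+1)$-blocks, take $d=q+1$ and $s=j$. The case $j=0$ lies below the onset, so $\varepsilon_{q,0}=0$. For $1\leq j\leq q\leq d+1$ we get $\kappa^{(q+1)}_{\nu+j}=(-1)^j(2q+j+2)_{j-1}/(j-1)!$. The sign is now $(-1)^{\nu+2j}=(-1)^{q+1}$. Writing $p^{-\nu-j}=p^{-(\nu+1)}p^{-(j-1)}$ and $(q+2)!/(\nu+j)!=(\nu+1)!/(\nu+j)!$ reproduces $A_+p^{-(j-1)}c^+_{q,j-1}$. This proves \eqref{eq:edge-beta} and \eqref{eq:adjacent-block-error-exact}.

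\textbf{Step 3: the bound \eqref{eq:large-block-error}.} First compare the two amplitudes:
\begin{equation*}
 \frac{A_+}{A_{N,p}}=\frac{r}{p(\nu+1)(p-r)}\leq\frac1{\nu+1},
\end{equation*}
using $r<p$ and $p-r\geq1$. It then suffices to show $c^+_{q,i}\leq4^i/i!$ uniformly for $0\leq i\leq q-1$, by the same argument as the stated bound on $c_{q,i}$. That argument uses $(2q+i+3)_i\leq(4q+4)^i$ together with $(q+2)!/(q+i+2)!\leq(q+2)^{-i}$, which gives a ratio of at most $4^i$. Summing then yields
\begin{equation*}
 \sum_{j}|\varepsilon_{q,j}|\leq\frac{A_{N,p}}{\nu+1}\sum_{i\geq0}\frac{(4/p)^i}{i!}=\frac{\e^{4/p}A_{N,p}}{\nu+1},
\end{equation*}
so one may take $C_p=\e^{4/p}$.

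The main point requiring care is Step 1: confirming that the degree window $\nu+j\leq2q+1$ lies strictly below every cross term, including mixed products between a $q$-block term and a $(q+1)$-block term, which have degree at least $2\nu+1$. The remaining work is bookkeeping of signs and factorials.
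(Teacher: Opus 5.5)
Your proposal is correct and follows the paper's proof essentially step for step. You linearise the logarithm of the factorised Bessel-block generating function below degree $2\nu$, substitute the near-edge cumulant formula with $d=q$, $s=j+1$ and $d=q+1$, $s=j$, and then bound $c^+_{q,j}\leq 4^j/j!$ and $A_+/A_{N,p}$. Your explicit constants, $A_+/A_{N,p}\leq 1/(\nu+1)$ and $C_p=\e^{4/p}$, are a harmless sharpening of the paper's $C_p/\nu$.
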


\begin{proof}
By \cref{cor:H-factorization},
\begin{align}
 \log\!\left[
   \e^{-z}\mathsf H_{q+1}(z/p)^r
   \mathsf H_q(z/p)^{p-r}
 \right]
 &=
 r\sum_{k\geq q+2}\kappa_k^{(q+1)}
      \frac{(z/p)^k}{(k!)^2}\notag\\
 &\quad +(p-r)\sum_{k\geq q+1}\kappa_k^{(q)}
      \frac{(z/p)^k}{(k!)^2}.
 \label{eq:log-fixed-factor}
\end{align}
Below degree $2\nu$, exponentiating does not produce a product of two
non-Gaussian coefficients. Substitute
\cref{eq:near-edge-cumulant-formula} in the second sum, replace $z$ by
$-a$, and compare with \cref{eq:laguerre-characteristic-series}.
This gives the first term in \cref{eq:edge-beta}; the factorial
quotient is exactly \cref{eq:cqj}.

The blocks of dimension $q+1$ first contribute a non-Gaussian
coefficient at degree $\nu+1$. Applying
\cref{eq:near-edge-cumulant-formula} with $d=q+1$ shows that their
contribution at degree $\nu+1+j$, $0\leq j\leq q-1$, is
\begin{equation}
 (-1)^{q+1}A_+p^{-j}c^+_{q,j}.
\end{equation}
This proves \cref{eq:adjacent-block-error-exact}.
Finally, for $0\leq j\leq q-1$, pairing the $j$ numerator and
denominator factors in \cref{eq:adjacent-block-coefficients} gives
\begin{equation}\label{eq:adjacent-block-ratio-bound}
 0\leq c^+_{q,j}\leq\frac{4^j}{j!}.
\end{equation}
Since
\begin{equation}
\label{eq:second-reference}
 \frac{A_+}{A_{N,p}}
 =\frac{r}{(p-r)p(\nu+1)}
 \leq\frac{C_p}{\nu},
\end{equation}
we obtain
\begin{equation}
 \sum_{j=0}^q|\varepsilon_{q,j}|
 \leq A_+\sum_{j\geq0}\frac{(4/p)^j}{j!}
 \leq\frac{C_pA_{N,p}}{\nu},
\end{equation}
which is \cref{eq:large-block-error}.
\end{proof}

\subsection{Fredholm remainder estimates}

\ifdefined\PedagogicalVersion\PedGlobalRemainder\fi

The Charlier formula identifies the near-edge moment corrections. We obtain absolute control of the entire remainder from
the determinant, not by taking absolute values inside the Ewens
convolution.  Specialising the normalised Fredholm identity in
\cref{cor:multitime-factorization} to one non-zero pair of times gives
\begin{equation}\label{eq:BO-H}
 \mathsf H_d(-\rho^2/4)
 =\e^{-\rho^2/4}
   \det\!\bigl(I-Q_d\mathcal K(\rho)Q_d\bigr).
\end{equation}
In particular,
\begin{equation}\label{eq:first-fredholm-trace}
 \Tr(Q_d\mathcal K(\rho)Q_d)
 =\sum_{i>d}\sum_{j\geq1}(-1)^{i+j+1}I_{i+j-1}(\rho)^2.
\end{equation}

\begin{proposition}[Fixed-power Fredholm approximation]
\label{prop:fixed-power-approximation}
Fix $p\geq1$.  For all sufficiently large $N$, $X_{N,p}$ has a density
$f_{N,p}$, and
\begin{equation}\label{eq:determinant-L2-remainder}
 \norm{f_{N,p}-g-\Lambda_{N,p}}_2
 \leq C_p A_{N,p}\nu^{-5/4}.
\end{equation}
Moreover,
\begin{equation}\label{eq:fixed-L1-remainder}
 \norm{f_{N,p}-g-\Lambda_{N,p}}_1
 =o\!\left(A_{N,p}\nu^{-1/4}\right).
\end{equation}
Both estimates include the unsmoothed case $p=1$.
\end{proposition}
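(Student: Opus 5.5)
We work on the Fourier side. By \cref{cor:H-factorization} and \eqref{eq:BO-H} with $\rho=|t|/\sqrt p$,
\begin{equation*}
 \phi_{N,p}(t):=\mathbb E\e^{\ii t\cdot X_{N,p}}
 =\e^{-a}D_{q+1}(\rho)^rD_q(\rho)^{p-r},\qquad a=\frac{|t|^2}4 .
\end{equation*}
By \cref{lem:laguerre-fourier}, $\F\Lambda_{N,p}=(-1)^qA_{N,p}\e^{-a}\sum_{j=0}^q c_{q,j}p^{-j}a^{\nu+j}/(\nu+j)!$. Set $R=\phi_{N,p}-\e^{-a}-\F\Lambda_{N,p}$. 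Then \eqref{eq:determinant-L2-remainder} follows from Plancherel once we bound $\norm{R}_2$. For \eqref{eq:fixed-L1-remainder} we use \eqref{eq:weighted-fourier-transfer}, so we also need $\norm{\Delta R}_2$, computed from \eqref{eq:radial-laplacian-a}. Existence of the density follows from integrability of $\phi_{N,p}$: in the bulk this comes from the Gaussian factor, and for large $|t|$ from the Bessel determinant decay. For $p=1$ there is no smoothing, so the large-$|t|$ region needs separate care.

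\emph{Split the frequency domain} at $\rho_0=2c_0q$, so that \cref{lem:bessel-determinant-linearization} applies to both $d=q,q+1$.

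\emph{Inner region $\rho\leq\rho_0$.} Write $\log(\e^{a}\phi_{N,p})=-rT_{q+1}-(p-r)T_q-r\mathfrak r_{q+1}-(p-r)\mathfrak r_q$. Expanding \eqref{eq:first-fredholm-trace} in powers of $\rho$, the Taylor coefficients of $-(p-r)T_q$ of degrees $\nu,\dots,2\nu-1$ in $a$ are exactly those encoded by \cref{lem:near-edge-cumulants}, i.e.\ they reproduce $\F\Lambda_{N,p}$. The $T_{q+1}$ part gives the terms $\varepsilon_{q,j}$ of \cref{lem:edge-laguerre-coefficients}, of relative size $O(\nu^{-1})$. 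The remainder $R$ then has four pieces:
\begin{enumerate}
\item the tail of $T_q$ beyond degree $2\nu-1$;
\item the $T_{q+1}$ contribution;
\item the quadratic errors $|\mathfrak r_d|+|\e^{-z}-1+z|\lesssim T_d^2$ from \eqref{eq:common-linearization};
\item cross terms $T_qT_{q+1}$, and $T_q^2$ when $p-r\geq2$.
\end{enumerate}
Each piece is bounded with the factorial estimate \eqref{eq:common-trace-factorial}. We then compute the weighted Gaussian integral $\int\e^{-2a}\rho^{4\nu}\dd t\asymp p^{2\nu}\Gamma(2\nu+1)4^{-2\nu}\cdots$. By Stirling, $\norm{\e^{-a}a^\nu/\nu!}_2\asymp\nu^{-1/4}$, because the Gaussian-weighted mass concentrates at $a\approx\nu$. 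An extra factor $\nu^{-1}$ comes from the $T_{q+1}$ and tail pieces, and the quadratic pieces are exponentially smaller. Together these give $A_{N,p}\nu^{-5/4}$.

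\emph{Outer region $\rho>\rho_0$.} Here we use a crude bound: $|\phi_{N,p}|\leq\e^{-a}\e^{O(p\rho^2/q)}$ or a direct Toeplitz bound. Since $a\geq pc_0^2q^2$, both this and $\F\Lambda_{N,p}$ are super-factorially small there.

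\emph{Main obstacle: the $L^1$ bound.} Applying \eqref{eq:weighted-fourier-transfer} directly costs a factor $a$ per derivative, since the relevant scale is $a\sim\nu$, so it gives only $O(A\nu^{-1/4}\nu\cdot\nu^{-1})$, which is not $o(A\nu^{-1/4})$. To fix this, first truncate spatially. In $x$-space the densities concentrate on $|x|^2\lesssim4\nu$ (Laguerre oscillatory region). On a disc of radius $\asymp\sqrt\nu$, Cauchy--Schwarz gives $\norm{\cdot}_1\leq C\sqrt\nu\norm{\cdot}_2$, hence $O(A\nu^{-3/4})=o(A\nu^{-1/4})$. Outside the disc, apply \eqref{eq:weighted-fourier-transfer} with weight $|x|^4$ instead of $|x|^2$, i.e.\ use $\Delta^2R$ and higher moments, so that the tail is $o(A\nu^{-1/4})$. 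The tail control comes from moment bounds $\mathbb E|X_{N,p}|^{2k}\leq k!$-type comparisons, which hold because $|X|\leq N/\sqrt p$, together with Laguerre decay of $\Lambda_{N,p}$ beyond $4\nu$.
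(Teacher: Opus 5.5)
\paragraph{Inner region.} Your inner-frequency analysis is sound and is essentially the paper's. The degrees $\nu,\dots,2\nu-1$ of $-(p-r)T_q$ reproduce $\F\Lambda_{N,p}$, and the $T_{q+1}$ part costs an extra factor $\nu^{-1}$. Squaring \eqref{eq:common-trace-factorial} even makes the quadratic terms $O(C^\nu A_{N,p}^2)$ in $L^2$, which is stronger than the paper's sup-times-$L^2$ bound. One correction: the tail of $T_q$ beyond degree $2\nu-1$ is not controlled by \eqref{eq:common-trace-factorial}, which bounds $T_q$ as a whole. For that piece you need the coefficient-mass estimates from the explicit Bessel series, as in \eqref{eq:block-j-ratio}--\eqref{eq:fixed-omitted-mass}.

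\paragraph{Outer region: the genuine gap.} The bound $|\phi_{N,p}|\le\e^{-a}\e^{O(p\rho^2/q)}$ cannot hold on all of $\rho>\rho_0$. Once $q$ is large it says $\phi_{N,p}(t)=O(\e^{-c|t|^2})$. That would make the density of the bounded variable $X_{N,p}$ extend to an entire function, which is impossible for a nonzero compactly supported density. The available large-$\rho$ control is only of power type, as in \eqref{eq:CJL-power-exact}.

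What you actually need is
\[
 \int_{\rho\geq 2c_0\nu}|\phi_{N,p}(t)|^2\dd t=o\bigl(A_{N,p}^2\nu^{-5/2}\bigr),
\]
and this is super-factorially small because $A_{N,p}\approx(p^\nu\nu!)^{-1}$. The paper gets it from two imported Courteaut--Johansson--Lambert estimates. The plateau bound \eqref{eq:CJL-plateau-exact} gives $|\phi_{N,p}|^2\le C_p\e^{-c_pq^2}$ on $2c_0\nu\le\rho\le\e^{\eta q}$. The power bound \eqref{eq:CJL-power-exact} gives $|\phi_{N,p}|^2\le((q+1)/\rho)^N$ beyond that. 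The power bound is also what makes $\phi_{N,p}$ integrable. Without it you have not shown that the density exists when $p=1$, the case you flag but leave untreated.

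\paragraph{The $L^1$ step.} This does not close as written either. With a disc of radius $\asymp\sqrt\nu$, separate tail bounds for $f_{N,p}$, $g$ and $\Lambda_{N,p}$ give only $\e^{-c\nu}$, which is far larger than $A_{N,p}\approx\e^{-\nu\log\nu}$. You are therefore forced into weighted bounds on $\Delta R$ and $\Delta^2 R$ over all of $\R^2$. That means derivative estimates for the Fredholm determinants in the inner region and for $\phi_{N,p}$ in the outer region, and you supply none of them.

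The paper needs no derivatives. It takes $R_\nu=(8\log A_{N,p}^{-1})^{1/2}\asymp(\nu\log\nu)^{1/2}$.
\begin{itemize}
\item Inside the disc, Cauchy--Schwarz costs only $C_pA_{N,p}\nu^{-3/4}\sqrt{\log\nu}=o(A_{N,p}\nu^{-1/4})$.
\item Outside, $f_{N,p}$, $g$ and $\Lambda_{N,p}$ each have mass $O(\e^{-R_\nu^2/2})=O(A_{N,p}^4)$.
\end{itemize}
The tail bound for $f_{N,p}$ is $\mathbb P(|X_{N,p}|>R)\le4\e^{-R^2/2}$. It comes from $\mathsf H_d(z)\le\e^z$ for $z\ge0$, i.e.\ $\mathbb E|\Tr(U_d)|^{2k}\le k!$, since the truncated character sum is at most the full one. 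This gives $\mathbb E\e^{v\operatorname{Re}X_{N,p}}\le\e^{v^2/4}$, and Chernoff finishes. It does not follow from $|X_{N,p}|\le N/\sqrt p$, which gives nothing uniform in $N$.
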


\begin{proof}
Put $a=|t|^2/4$ and $\rho=|t|/\sqrt p$.  We first define the characteristic
function, without assuming a density:
\begin{equation}\label{eq:fixed-fredholm-product}
 \varphi_{N,p}(t)
 :=\mathbb E\exp\!\left(\ii t\cdot
       \frac{\Tr(U_N^p)}{\sqrt p}\right)
 =\e^{-a}D_{q+1}(\rho)^rD_q(\rho)^{p-r}.
\end{equation}
This follows from
\cref{eq:characteristic-H-product,eq:BO-H,cor:multitime-factorization}; in
particular, we derived its only block reduction from the displacement
determinant.  Put $\lambda_{N,p}=\F\Lambda_{N,p}$ and
\begin{equation}\label{eq:frequency-remainder-fixed}
 \mathcal E_{N,p}(t)
 =\varphi_{N,p}(t)-\e^{-a}-\lambda_{N,p}(t).
\end{equation}

We begin on $0\leq\rho\leq2c_0\nu$.  By
\cref{lem:bessel-determinant-linearization}, with
\begin{equation}
 L(\rho)=(p-r)T_q(\rho)+rT_{q+1}(\rho),
\end{equation}
the finite product satisfies
\begin{equation}\label{eq:fixed-product-linearization}
 D_q(\rho)^{p-r}D_{q+1}(\rho)^r
 =1-L(\rho)+\mathcal R_p^{\mathrm{det}}(\rho),
 \qquad
 |\mathcal R_p^{\mathrm{det}}|
 \leq C_p\bigl(|T_q|^2+|T_{q+1}|^2\bigr).
\end{equation}
Insert \cref{eq:common-linearization}, expand the fixed number $p$
of factors, and use \cref{eq:common-trace-small}; every term not displayed
contains either one $E_d$ or at least two traces.

We next identify and sum the omitted part of $L$.  Grouping
\cref{eq:first-fredholm-trace} by the Bessel index gives
\begin{equation}\label{eq:trace-bessel-square}
 T_d(\rho)=\sum_{h=1}^\infty h(-1)^{d+h}I_{d+h}(\rho)^2.
\end{equation}
For $x=\rho^2/4$, Vandermonde's identity gives
\begin{equation}\label{eq:bessel-square-series}
 I_n(2\sqrt x)^2
 =\sum_{j=0}^\infty
 \frac{(2n+j+1)_j}{j!(n+j)!^2}x^{n+j}.           
\end{equation}
The coefficient of $x^{n+j}$ equals
\begin{equation}
 \sum_{v=0}^j
 \frac1{v!(j-v)!(n+v)!(n+j-v)!}
 =\frac{(2n+j+1)_j}{j!(n+j)!^2}.
\end{equation}
After the scaling $x=a/p$, the absolute coefficient of
$a^{d+h+j}$ in the $h$th term is
\begin{equation}\label{eq:scaled-block-coefficient}
 B_{d,h,j}(p)
 =\frac{h}{p^{d+h+j}}
   \frac{(2(d+h)+j+1)_j}{j!(d+h+j)!^2}.
\end{equation}
Then, after multiplication by $\e^{-a}$ and Fourier inversion,
the corresponding absolute Laguerre coefficient from the
blocks of dimension $q$ is
\begin{equation}\label{eq:fixed-density-bhj}
 b_{h,j}
 =(p-r)\frac{h}{p^{q+h+j}}
   \frac{(2(q+h)+j+1)_j}{j!(q+h+j)!}.
\end{equation}
Thus $b_{1,0}=A_{N,p}$ and, after division, we obtain
\begin{align}
 \frac{b_{h,j+1}}{b_{h,j}}
 &\leq\frac4{p(j+1)},                             \label{eq:block-j-ratio}\\
 \frac{b_{h,0}}{A_{N,p}}
 &=\frac{h}{p^{h-1}}\frac{\nu!}{(q+h)!}
 \leq\frac{h}{(p\nu)^{h-1}}.                    \label{eq:block-h-ratio}
\end{align}
Collecting equal powers in \cref{eq:trace-bessel-square} shows that the
part of $-(p-r)T_q$ of degrees $\nu,\ldots,2\nu-1$ is
\begin{equation}\label{eq:first-trace-polynomial}
 (-1)^qA_{N,p}
 \sum_{j=0}^q c_{q,j}p^{-j}\frac{a^{\nu+j}}{(\nu+j)!}.
\end{equation}
After multiplication by $\e^{-a}$, this polynomial is the Fourier
transform of the contribution from the blocks of dimension $q$
identified in \cref{lem:edge-laguerre-coefficients}; hence it gives
exactly $\lambda_{N,p}$. The original terms omitted from the $q$ trace satisfy
$(h-1)+j\geq\nu$.  Writing $u=h-1$ and iterating
\cref{eq:block-j-ratio,eq:block-h-ratio}, we obtain
\begin{align}
 \sum_{u+j\geq\nu}b_{u+1,j}
 &\leq A_{N,p}
 \sum_{u+j\geq\nu}
 \frac{u+1}{(p\nu)^u}\frac{(4/p)^j}{j!}
 \leq\frac{C_pA_{N,p}}{\nu}.                    \label{eq:fixed-omitted-mass}
\end{align}
For the last inequality split the sum into $u\geq\nu/2$ and
$j\geq\nu/2$: the first part is a geometric tail and the second an
exponential-series tail. For blocks of dimension $q+1$, the combined contribution
starts with a coefficient of magnitude $A_+$, as defined in
\eqref{eq:adjacent-block-coefficients}. Applying the ratio estimates
\cref{eq:block-j-ratio,eq:block-h-ratio} with $q$ replaced by $q+1$
bounds their complete coefficient mass by $C_pA_+$.
By \eqref{eq:second-reference}, this is $O_p(A_{N,p}/\nu)$.

For every $k\geq1$,
\begin{equation}\label{eq:gamma-L2-identity}
 \int_{\R^2}\e^{-2a}\left(\frac{a^k}{k!}\right)^2\dd t
 =2\pi\,4^{-k}\binom{2k}{k}\asymp k^{-1/2}.
\end{equation}
Define the omitted linear contribution by
\begin{equation}
 \mathcal E^{\mathrm{lin}}_{N,p}(t)
 :=-\e^{-a}L(\rho)-\lambda_{N,p}(t).
\end{equation}
Minkowski's inequality \cite{kn:folland99}, the above bounds
on coefficient mass, and \cref{eq:gamma-L2-identity} therefore give
\begin{equation}\label{eq:fixed-linear-tail-L2}
 \norm{\mathcal E^{\mathrm{lin}}_{N,p}}_2
 \leq C_pA_{N,p}\nu^{-5/4}.
\end{equation}
Next, we aim to control \cref{eq:fixed-product-linearization}.
For $d\in\{q,q+1\}$, define the positive majorant
\[
 M_d(a)
 :=\sum_{h\geq1}\sum_{j\geq0}
 B_{d,h,j}(p)a^{d+h+j}
 =\sum_{k\geq d+1}m_{d,k}\frac{a^k}{k!},
 \qquad m_{d,k}\geq0,
\]
where $B_{d,h,j}(p)$ is given by
\cref{eq:scaled-block-coefficient}. Thus
$|T_d(\rho)|\leq M_d(a)$ when $\rho=|t|/\sqrt p$ and
$a=|t|^2/4$. Define its coefficient mass by
\[
 \operatorname{mass}(M_d):=\sum_{k\geq d+1}m_{d,k};
\]
the coefficients here are taken with respect to the normalised
monomials $a^k/k!$.
The ratios \cref{eq:block-j-ratio,eq:block-h-ratio} give
\begin{align}
 \operatorname{mass}(M_q)
 &\leq \frac{A_{N,p}}{p-r}
 \left(\sum_{u\geq0}\frac{u+1}{(p\nu)^u}\right)
 \left(\sum_{j\geq0}\frac{(4/p)^j}{j!}\right)
 \leq C_pA_{N,p},                                  \label{eq:q-majorant-mass}\\
 \operatorname{mass}(M_{q+1})
 &\leq \frac{C_p}{p^{\nu+1}(\nu+1)!}
 \leq \frac{C_pA_{N,p}}\nu.                         \label{eq:qplus-majorant-mass}
\end{align}
Every mode in the first line has degree at least $\nu$, and every mode in
the second has degree at least $\nu+1$.  Again, Minkowski's inequality and
\cref{eq:gamma-L2-identity} therefore yield
\begin{align}
 \norm{\e^{-a}T_q}_2
 &\leq\norm{\e^{-a}M_q}_2
 \leq C_pA_{N,p}\nu^{-1/4},                       \label{eq:q-trace-L2}\\
 \norm{\e^{-a}T_{q+1}}_2
 &\leq\norm{\e^{-a}M_{q+1}}_2
 \leq C_pA_{N,p}\nu^{-5/4}.                       \label{eq:qplus-trace-L2}
\end{align}
On the central disc, \cref{eq:common-trace-small} also gives
\begin{equation}
 \sup_{\rho\leq2c_0\nu}\{|T_q(\rho)|+|T_{q+1}(\rho)|\}\leq C/\nu.
\end{equation}
Consequently, \cref{eq:fixed-product-linearization} gives the complete
norm calculation
\begin{align}
 \norm{\e^{-a}\mathcal R_p^{\mathrm{det}}}_{L^2(\rho\leq2c_0\nu)}
 &\leq C_p\left\{
  \norm{T_q}_{L^\infty(\rho\leq2c_0\nu)}\norm{\e^{-a}T_q}_2
  +\norm{T_{q+1}}_{L^\infty(\rho\leq2c_0\nu)}
    \norm{\e^{-a}T_{q+1}}_2
 \right\}\notag\\
 &\leq C_pA_{N,p}\nu^{-5/4}.                     \label{eq:fixed-nonlinear-L2}
\end{align}
This is the required factorial-scale non-linear estimate; it uses the
special signed-trace control in
\cref{lem:bessel-determinant-linearization}, not a generic trace-norm
remainder.  Combining \cref{eq:fixed-linear-tail-L2,eq:fixed-nonlinear-L2}
gives
\begin{equation}\label{eq:small-frequency-L2}
 \norm{\mathcal E_{N,p}}_{L^2(\rho\leq2c_0\nu)}
 \leq C_pA_{N,p}\nu^{-5/4}.                     
\end{equation}

We now treat the whole complementary plane in the single variable $\rho$.
Write $\Psi_d(\rho)=\mathsf H_d(-\rho^2/4)$.  Propositions~2.3 and~2.4
of~\cite{kn:courteautjohanssonlambert24}, in the exact form used here, state
\begin{align}
 |\Psi_d(\rho)|
 &\leq\sqrt2\exp\!\left\{-
 \frac{d^2\rho^2}{4(2.05\rho^2 +d^2)}\right\},
 &&\rho\geq0,                                    \label{eq:CJL-plateau-exact}\\
 |\Psi_d(\rho)|^2
 &\leq\left(\frac d\rho\right)^d,
 &&\rho\geq\frac{7d}{4}.                        \label{eq:CJL-power-exact}
\end{align}
In their notation the corresponding function is denoted by $F_d$; it is the
characteristic function of $\Tr(U_d)$,
their radial variable $|\zeta|$ is our $\rho$, and the scaling is
$|t|=\sqrt p\,\rho$ (or $|t|=\sqrt m\,\rho$ in the growing-power
regimes). Thus \cref{eq:CJL-plateau-exact,eq:CJL-power-exact}
reproduce both the hypotheses and the change of variables used below.
Fix $\eta>0$ and put $\Lambda_q=\e^{\eta q}$.  From
\cref{eq:CJL-plateau-exact}, for all large $q$,
\begin{equation}\label{eq:fixed-plateau-integral}
 \int_{2c_0\nu\leq\rho\leq\Lambda_q}
 |\varphi_{N,p}(t)|^2\dd t
 \leq C_p\Lambda_q^2\e^{-c_pq^2}
 =o(A_{N,p}^2\nu^{-5/2}).
\end{equation}
For $\rho\geq\Lambda_q$, \cref{eq:CJL-power-exact} and
$N=(p-r)q+r(q+1)$ give
\begin{align}
 |\varphi_{N,p}(t)|^2
 &\leq
 \left(\frac q\rho\right)^{q(p-r)}
 \left(\frac{q+1}\rho\right)^{(q+1)r}
 \leq\left(\frac{q+1}\rho\right)^N,             \label{eq:fixed-power-tail-pointwise}\\
 \int_{\rho\geq\Lambda_q}|\varphi_{N,p}(t)|^2\dd t
 &\leq\frac{2\pi p}{N-2}(q+1)^N\Lambda_q^{2-N}
 =o(A_{N,p}^2\nu^{-5/2}).                         \label{eq:fixed-power-tail-integral}
\end{align}
The terms $\e^{-a}$ and $\lambda_{N,p}(t)$ have exponentially
smaller complementary tails: here
$a=p\rho^2/4\geq pc_0^2\nu^2$, while the polynomial factor in
$\lambda_{N,p}$ has degree at most $2\nu-1$. The required bounds
therefore follow directly from incomplete Gamma integrals. Thus
\cref{eq:small-frequency-L2,eq:fixed-plateau-integral,eq:fixed-power-tail-integral}
give the claimed global $L^2$ estimate.

Finally, the pointwise tail estimate has an integrable square root for all
sufficiently large $N$; the bounded-frequency part is
continuous and bounded.  Hence $\varphi_{N,p}\in L^1(\R^2)$ and Fourier
inversion proves the existence of a continuous density $f_{N,p}$.
Plancherel then identifies the inverse transform of
\cref{eq:frequency-remainder-fixed} with
$f_{N,p}-g-\Lambda_{N,p}$ and proves
\cref{eq:determinant-L2-remainder}.
We must still convert the $L^2$ statement to total variation without losing
the leading scale.  The following truncation makes the
argument work for $p=1$ as well.

Write $h_{N,p}=f_{N,p}-g-\Lambda_{N,p}$ and choose
\begin{equation}\label{eq:spatial-cutoff}
 R_\nu=\bigl(8\log(A_{N,p}^{-1})\bigr)^{1/2}
 \asymp(\nu\log\nu)^{1/2}.
\end{equation}
On the disc $|x|\leq R_\nu$, Cauchy--Schwarz and
\cref{eq:determinant-L2-remainder} give
\begin{equation}\label{eq:inside-spatial-cutoff}
 \int_{|x|\leq R_\nu}|h_{N,p}(x)|\dd x
 \leq\sqrt\pi R_\nu\norm{h_{N,p}}_2
 \leq C_pA_{N,p}\nu^{-3/4}\sqrt{\log\nu}
 =o(A_{N,p}\nu^{-1/4}).
\end{equation}
Next, we control the three tails separately. Character orthogonality gives
\begin{equation}
 \mathbb E|\Tr(U_d)|^{2k}\leq k!,
 \qquad \mathsf H_d(z)\leq\e^z\quad(z\geq0).
\end{equation}
For every $v\in\R$, rotational invariance and
\cref{eq:H-product}, with $z=v^2/(4p)$, give
\begin{equation}
 \mathbb E\e^{v\operatorname{Re}X_{N,p}}
 \leq\e^{v^2/4}.
\end{equation}
The same bound holds for $\operatorname{Im}X_{N,p}$.
If $|X_{N,p}|>R$, then either
$|\operatorname{Re}X_{N,p}|>R/\sqrt2$ or
$|\operatorname{Im}X_{N,p}|>R/\sqrt2$.
Applying Chernoff's inequality to the upper and lower tails of
each component, followed by a union bound, gives
\begin{equation}\label{eq:fixed-subgaussian-tail}
 \mathbb P(|X_{N,p}|>R)\leq4\e^{-R^2/2}.
\end{equation}
Also $\mathbb P(|Z|>R)=\e^{-R^2}$.  Finally, Laguerre orthogonality and
Cauchy--Schwarz imply
\begin{equation}
 \int_{R^2}^\infty\e^{-u}|L_k(u)|\dd u\leq\e^{-R^2/2},
\end{equation}
while $\sum_{j=0}^qc_{q,j}p^{-j}\leq\e^{4/p}$.  Hence
\begin{equation}\label{eq:one-band-spatial-tail}
 \int_{|x|>R}|\Lambda_{N,p}(x)|\dd x
 \leq C_pA_{N,p}\e^{-R^2/2}.
\end{equation}
At $R=R_\nu$ all three tails are $O(A_{N,p}^4)$, which is
$o(A_{N,p}\nu^{-1/4})$.  Together with
\cref{eq:inside-spatial-cutoff}, this proves
\cref{eq:fixed-L1-remainder}.
\end{proof}

\subsection{Coherent summation of Laguerre modes}

\ifdefined\PedagogicalVersion\PedLaguerreSummation\fi

\paragraph{A uniform Laguerre envelope.}
There is an absolute constant $C$ such that, for every $n\geq1$,
\begin{equation}\label{eq:global-laguerre-bound}
 \e^{-u/2}|L_n(u)|\leq Cn^{-1/4}u^{-1/4},
 \qquad n^{-1}\leq u\leq n.
\end{equation}
\begin{proof}
Fix a sufficiently large constant $\omega$.  For
$n^{-1}\leq u\leq\omega$, Szeg\H{o}'s Hilb formula
\cite[Theorem~8.22.4 and Eq.~(8.22.5)]{kn:szego75}, with $\alpha=0$, gives
\begin{equation}
 \e^{-u/2}L_n(u)=J_0(2\sqrt{(n+1/2)u})+O(n^{-3/4}u^{5/4})
\end{equation}
uniformly.  The fixed-order Bessel estimate
$|J_0(v)|\leq C\min(1,v^{-1/2})$
\cite[Eq.~(10.17.3)]{kn:dlmf10} bounds the main term by
$Cn^{-1/4}u^{-1/4}$.  The remainder has the same bound because
\begin{equation}
 \frac{n^{-3/4}u^{5/4}}{n^{-1/4}u^{-1/4}}
 \leq C_\omega n^{-1/2}.
\end{equation}

For $\omega\leq u\leq n$, write
$u=(4n+2)\cos^2\phi$.  Then $\phi$ is bounded away from zero, while
$\pi/2-\phi\geq c\sqrt{\omega/n}$.  Taking $\omega$ large enough puts
the whole interval inside the uniform range
$\varepsilon\leq\phi\leq\pi/2-\varepsilon n^{-1/2}$ of Szeg\H{o}'s
Plancherel--Rotach formula \cite[Theorem~8.22.8(a)]{kn:szego75}. More
explicitly, when $\alpha=0$ that formula has, in magnitude, the form
\begin{equation}
 \e^{-u/2}|L_n(u)|
 \leq C(\sin\phi)^{-1/2}n^{-1/4}u^{-1/4}
       \{1+O((nu)^{-1/2})\}.
\end{equation}
Here $u\leq n$ gives
$\cos^2\phi\leq n/(4n+2)<1/4$, hence
$\sin\phi\geq\sqrt3/2$, while $u\geq\omega$ gives
$\pi/2-\phi\geq\cos\phi
=\sqrt{u/(4n+2)}\geq c_\omega n^{-1/2}$.  These are precisely the two
endpoint conditions in the cited uniform range.  Also $nu\geq n\omega$,
so its displayed remainder is uniform and no larger than the leading
amplitude.  Consequently that amplitude is
\begin{equation}
 O\!\left((\sin\phi)^{-1/2}n^{-1/4}u^{-1/4}\right)
 =O(n^{-1/4}u^{-1/4}),
\end{equation}
and the two ranges meet at $\omega$, proving
\cref{eq:global-laguerre-bound} with one constant.
\end{proof}

\paragraph{The averaging step.}
Let $0<\alpha<\beta<\infty$ and $W\in L^1([\alpha,\beta])$.  Then, for
every fixed $\theta\in\R$,
\begin{equation}\label{eq:periodic-averaging}
 \int_\alpha^\beta W(v)|\cos(\lambda v+\theta)|\dd v
 \longrightarrow \frac2\pi\int_\alpha^\beta W(v)\dd v
 \qquad(\lambda\to\infty).
\end{equation}
\begin{proof}
The Fourier series of $|\cos x|$ is absolutely convergent, with constant
term $2/\pi$ and nonconstant coefficients $O(k^{-2})$.  For each partial
sum, \cref{eq:periodic-averaging} follows from the Riemann--Lebesgue lemma.
Uniform convergence of the Fourier series then proves the assertion for
every $W\in L^1$.
\end{proof}

\begin{lemma}[Laguerre $L^1$ asymptotics]\label{lem:laguerre-L1}
As $n\to\infty$,
\begin{equation}\label{eq:laguerre-L1-constant}
 n^{1/4}\int_0^\infty\e^{-u}|L_n(u)|\dd u
 \longrightarrow
 C_L:=\frac{2^{7/4}\Gamma(3/4)}{\pi^{3/2}}.
\end{equation}
In addition, there is an absolute constant $C$ such that
\begin{equation}\label{eq:uniform-laguerre-L1}
 \int_0^\infty\e^{-u}|L_n(u)|\dd u
 \leq C(n+1)^{-1/4},\qquad n\geq0.
\end{equation}
For each fixed $j\geq0$,
\begin{equation}\label{eq:laguerre-shift}
 n^{1/4}\int_0^\infty\e^{-u}|L_{n+j}(u)-L_n(u)|\dd u
 \longrightarrow0.
\end{equation}
Consequently, for every fixed $p\geq1$,
\begin{equation}\label{eq:band-summation}
 \nu^{1/4}
 \norm{
   g\sum_{j=0}^q c_{q,j}p^{-j}L_{\nu+j}(|\cdot|^2)
 }_1
 \longrightarrow
 \e^{2/p}C_L.
\end{equation}
\end{lemma}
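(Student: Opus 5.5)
The plan is to prove the four assertions in order: the first two feed the shift estimate, and all three feed the band summation. The main tools are the two Szeg\H{o} regimes already packaged in the uniform envelope \cref{eq:global-laguerre-bound} and the averaging step \cref{eq:periodic-averaging}.

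\textbf{Limit constant \cref{eq:laguerre-L1-constant}.} I would substitute $u=v^2/(4n+2)$, so the oscillatory bulk corresponds to $v$ in compact subintervals of $(0,2n+1)$ after rescaling. Concretely, I write $u=(4n+2)\cos^2\phi$. Plancherel--Rotach then gives
\[
\e^{-u/2}L_n(u)=\frac{(-1)^n}{\sqrt{\pi\sin\phi}}\,(nu)^{-1/4}\sin\!\Big((n+\tfrac12)(\sin2\phi-2\phi)+\tfrac{3\pi}4\Big)+O\big((nu)^{-3/4}\big)
\]
uniformly for $\phi\in[\varepsilon,\pi/2-\varepsilon n^{-1/2}]$. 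Hence
\[
n^{1/4}\int \e^{-u}|L_n|\,\dd u=\int \e^{-u/2}\,\frac{u^{-1/4}}{\sqrt{\pi\sin\phi}}\,|\sin(\cdot)|\,\dd u+o(1).
\]
The phase is $\lambda\Theta(u)$ with $\lambda\to\infty$ and locally $\Theta'\neq0$. After the change of variables to the phase, \cref{eq:periodic-averaging} replaces $|\sin|$ by $2/\pi$ on any fixed $u$-interval $[\alpha,\beta]$; on that interval $\sin\phi\to1$. This gives the limit
\[
\frac{2}{\pi^{3/2}}\int_0^\infty \e^{-u/2}u^{-1/4}\,\dd u=\frac{2}{\pi^{3/2}}\,2^{3/4}\Gamma(3/4)=C_L.
\]
The remaining ranges are negligible. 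For $u\in[n^{-1},\alpha]$ and $[\beta,n]$, \cref{eq:global-laguerre-bound} bounds the contribution by $\int \e^{-u/2}u^{-1/4}\,\dd u$ over these ranges, which is small as $\alpha\to0$, $\beta\to\infty$. For $u<n^{-1}$, the trivial bound $|L_n|\le \e^{u/2}$ suffices. For $u>n$, Cauchy--Schwarz together with $\int \e^{-u}L_n^2=1$ gives $\big(\int_n^\infty \e^{-u}\big)^{1/2}\lesssim \e^{-n/2}$. This same splitting, with no limit taken, yields the uniform bound \cref{eq:uniform-laguerre-L1}. The case $n=0$ is trivial.

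\textbf{Shift estimate \cref{eq:laguerre-shift}.} I would use the same asymptotics for $L_{n+j}$. On $[\alpha,\beta]$ the amplitudes agree to $1+O(j/n)$. The phases are $(n+j+\tfrac12)(\sin2\phi_{n+j}-2\phi_{n+j})$; near $\phi=\pi/2$, with $u$ fixed, this equals $-\pi(n+j+\tfrac12)+2\sqrt{(n+j+\tfrac12)u}+O(n^{-1/2})$. So the phases differ by $-\pi j+O(j/\sqrt n)$, and the factor $(-1)^{n+j}$ cancels the $-\pi j$. A cleaner route is Hilb's formula $J_0(2\sqrt{(n+1/2)u})$, which is valid uniformly for $u\le\beta$: there
\[
J_0\big(2\sqrt{(n+j+1/2)u}\big)-J_0\big(2\sqrt{(n+1/2)u}\big)=O\big(j\,n^{-1/2}u^{1/2}\cdot n^{-1/4}u^{-1/4}\big)
\]
by the derivative bound on $J_0'$, which is $o(n^{-1/4})$ after integration. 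I will use Hilb on $[n^{-1},\beta]$ and the envelope for the tails.

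\textbf{Band summation \cref{eq:band-summation}.} By radial integration the norm equals $\int_0^\infty \e^{-u}\big|\sum_j c_{q,j}p^{-j}L_{\nu+j}(u)\big|\,\dd u$. Fix $J$ and split the sum at $J$. For $j>J$, the uniform bound $c_{q,j}\le 4^j/j!$ together with \cref{eq:uniform-laguerre-L1} bounds the tail by $C\nu^{-1/4}\sum_{j>J}(4/p)^j/j!$, which is small uniformly in $q$. For $j\le J$, \cref{eq:laguerre-shift} lets me replace each $L_{\nu+j}$ by $L_\nu$ at cost $o(\nu^{-1/4})$, and \cref{eq:cqj-limit} gives $\sum_{j\le J}c_{q,j}p^{-j}\to\sum_{j\le J}(2/p)^j/j!$. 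Applying \cref{eq:laguerre-L1-constant} and then letting $J\to\infty$ yields $\e^{2/p}C_L$.

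The main obstacle is justifying that \cref{eq:periodic-averaging} applies despite the $n$-dependence of the phase. My first approach is to use Hilb's Bessel form on a fixed interval, write $J_0(x)=\sqrt{2/(\pi x)}\cos(x-\pi/4)+O(x^{-3/2})$, and change variables $v=\sqrt u$. This puts the phase in the form $\lambda v+\theta$ exactly, with $\lambda=2\sqrt{n+1/2}$, so \cref{eq:periodic-averaging} applies as stated.
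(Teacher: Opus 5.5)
Your proposal is correct and follows essentially the same route as the paper. Both use Hilb's formula with $u=v^2$ and the averaging step on a fixed compact interval, the uniform envelope on $[n^{-1},\delta]\cup[M,n]$, a crude bound on $[0,n^{-1}]$, Cauchy--Schwarz with Laguerre orthogonality on $[n,\infty)$, and finally truncation of the band sum using $c_{q,j}\le 4^j/j!$ and nonnegativity of the coefficients. Your only deviations are minor: Szeg\H{o}'s inequality $|L_n(u)|\le\e^{u/2}$ near the origin replaces the series bound $|L_n(u)|\le\e^{nu}$, and for the shift you use a mean-value bound on $J_0$ over all of $[n^{-1},\beta]$ instead of the paper's $o(1)$ amplitude/phase comparison on $[\delta,M]$; both work equally well.
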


\begin{proof}
We use \cref{eq:global-laguerre-bound} for the estimates outside fixed compact
subintervals of $(0,\infty)$.

On every fixed compact subinterval of $(0,\infty)$, the same Hilb formula
and the Bessel asymptotic give, uniformly,
\begin{equation}
 \e^{-u}L_n(u)
 =
 \frac{n^{-1/4}}{\sqrt\pi}\,
 \e^{-u/2}u^{-1/4}
 \left[
  \cos\!\left(2\sqrt{nu}-\frac\pi4\right)+o(1)
 \right].
\end{equation}
We now split the half-line into five regions. On
$0\leq u\leq n^{-1}$, the defining series gives
$|L_n(u)|\leq\e^{nu}\leq\e$, so the contribution after multiplication by
$n^{1/4}$ is $O(n^{-3/4})$. Fix $0<\delta<M<\infty$; on $[n^{-1},\delta]$ and $[M,n]$ for sufficiently large $n$,
\cref{eq:global-laguerre-bound} gives
\begin{align}
 n^{1/4}\int_{n^{-1}}^\delta\e^{-u}|L_n(u)|\dd u
 &\leq C\delta^{3/4},                              \label{eq:laguerre-small-u}\\
 n^{1/4}\int_M^n\e^{-u}|L_n(u)|\dd u
 &\leq C\int_M^\infty\e^{-u/2}u^{-1/4}\dd u.       \label{eq:laguerre-middle-tail}
\end{align}
On $[n,\infty)$, Cauchy--Schwarz and Laguerre orthogonality give
\begin{align}
 \int_n^\infty\e^{-u}|L_n(u)|\dd u
 &\leq
 \left(\int_n^\infty\e^{-u}\dd u\right)^{1/2}
 \left(\int_n^\infty\e^{-u}L_n(u)^2\dd u\right)^{1/2}
 \leq\e^{-n/2}.                                   \label{eq:laguerre-far-tail}
\end{align}

It remains to analyse $[\delta,M]$.  Set $u=v^2$.  The leading weight
becomes $W(v)=2\e^{-v^2/2}v^{1/2}$ and the phase is
$2\sqrt n\,v-\pi/4$.  Thus the averaging identity
\eqref{eq:periodic-averaging} gives
\begin{align*}
 &\int_\delta^M\e^{-u/2}u^{-1/4}
 \left|\cos\!\left(2\sqrt{nu}-\frac\pi4\right)\right|\dd u\\
 &\hspace{5em}\longrightarrow
 \frac2\pi\int_\delta^M\e^{-u/2}u^{-1/4}\dd u.
\end{align*}
The compact-uniform error tends to zero.  First let $n\to\infty$, then
$\delta\downarrow0$, and finally $M\to\infty$.  The other four regions are
uniformly negligible by
\cref{eq:laguerre-small-u,eq:laguerre-middle-tail,eq:laguerre-far-tail}.
Consequently,
\begin{equation}
 C_L
 =
 \frac2{\pi^{3/2}}\int_0^\infty\e^{-u/2}u^{-1/4}\dd u
 =
 \frac{2^{7/4}\Gamma(3/4)}{\pi^{3/2}}.
\end{equation}
This proves \cref{eq:laguerre-L1-constant}.  Without taking limits, the
five-region estimates also prove \cref{eq:uniform-laguerre-L1} for $n\geq1$;
enlarging the constant covers $n=0$. For fixed $j$, on $[\delta,M]$ the amplitudes for $n+j$ and $n$ differ by
$o(1)$, while
\begin{equation}
 2\sqrt{(n+j)u}-2\sqrt{nu}=O_j(\sqrt{u/n})
\end{equation}
uniformly.  Hence the compact contribution to
$n^{1/4}\int\e^{-u}|L_{n+j}(u)-L_n(u)|\dd u$ tends to zero.  The same four
tail estimates apply to both indices.  The short interval between the two tail cutoffs is exponentially small: for fixed $j$, Cauchy--Schwarz and
Laguerre orthogonality give
\begin{equation}
 \int_n^{n+j}\e^{-u}|L_{n+j}(u)|\dd u
 \leq \left(\int_n^{n+j}\e^{-u}\dd u\right)^{1/2}
 \leq \e^{-n/2}.
\end{equation}
Taking first $n\to\infty$ and then $\delta\downarrow0$ and $M\to\infty$
proves \cref{eq:laguerre-shift}.

For \cref{eq:band-summation}, first truncate the $j$-sum.  For each fixed
truncation, use \cref{eq:cqj-limit,eq:laguerre-shift}.  The estimate
\begin{equation}
 c_{q,j}\leq\frac{4^j}{j!}
\end{equation}
together with \cref{eq:uniform-laguerre-L1} makes the remaining tail
uniformly summable even after multiplication by $\nu^{1/4}$.  Finally,
$\sum_{j\geq0}(2/p)^j/j!=\e^{2/p}$.
\end{proof}

\begin{proof}[Proof of \cref{thm:fixed-power}]
Combine \cref{prop:fixed-power-approximation,lem:laguerre-L1}.  For every fixed
$p\geq1$, the band summation and the shift estimate first give the sharper
Laguerre form
\begin{equation}
 \norm{\Lambda_{N,p}}_1
 =
 A_{N,p}\e^{2/p}
 \int_0^\infty\e^{-u}|L_\nu(u)|\dd u
+o(A_{N,p}\nu^{-1/4}).
\end{equation}
Here we used
$\int\e^{-u}|L_\nu(u)|\dd u=C_L\nu^{-1/4}+o(\nu^{-1/4})$
to replace the asymptotic constant in the band sum by the displayed
Laguerre integral.  Consequently, first
\begin{equation}
 \norm{f_{N,p}-g}_1
 =
 A_{N,p}\e^{2/p}
 \int_0^\infty\e^{-u}|L_\nu(u)|\dd u
 +o(A_{N,p}\nu^{-1/4}),
\end{equation}
and then, by \cref{eq:laguerre-L1-constant},
\begin{equation}
 \norm{f_{N,p}-g}_1
 =
 A_{N,p}\e^{2/p}C_L\nu^{-1/4}
 +o(A_{N,p}\nu^{-1/4}).
\end{equation}
As total variation is half the $L^1$ distance \eqref{eq:intro-TV-definition}, this proves
\cref{eq:fixed-laguerre-form,eq:fixed-explicit-form}.
\end{proof}

\begin{remark}[Check against the known $L^2$ constant]
When $p=1$, the same first-band calculation and
\begin{equation}
 \int_0^\infty\e^{-2u}L_n(u)^2\dd u
 =2^{-2n-1}\binom{2n}{n}
\end{equation}
give
\begin{equation}
 \norm{\F(f_{N,1}-g)}_2^2
 \sim
 \frac{2\e^4\sqrt\pi}{\Gamma(N+2)^2\sqrt N}.
\end{equation}
This is the exact Fourier-$L^2$ asymptotic of
Courteaut--Johansson--Lambert~\cite[Theorem~1.2]{kn:courteautjohanssonlambert24}.  Thus the
normalisation and the coherent factor $\e^2$ agree with their independently
derived constant; the additional step here is the signed comparison-function analysis
needed for total variation. Their Theorem~1.2 contains a normalisation typo: its right-hand side
requires the factor $(2\pi)^{-2}$, as follows from their identity~(1.5).
\end{remark}

\section{Growing powers with fixed block dimension}
\label{sec:fixed-quotient}

\ifdefined\PedagogicalVersion\PedFixedQuotient\fi

Our next aim is to prove \cref{thm:fixed-quotient}; this fills the regime
$m\asymp N$ below the high-power threshold.  The proof uses the
displacement determinant through \cref{cor:H-factorization}; it does not
invoke the power-map theorem.  The Charlier--Ewens formula determines the
first non-Gaussian coefficient of each fixed block, while the determinant
product controls all subsequent coefficients at once.

For $d\geq1$, define
\begin{equation}\label{eq:Gd-definition}
 G_d(z)=\e^{-z}\mathsf H_d(z).
\end{equation}
The Diaconis--Shahshahani moments give
$\mathsf H_d(z)-\e^z=O(z^{d+1})$.  At degree $d+1$, apply
\cref{thm:charlier-correction} with
$N=d$ and $\mu=\nu=(1^{d+1})$.  Here $D-N=1$, so only
$\mathbf c=\mathbf0$ occurs in the Charlier kernel and the correction is
$-1$.  Therefore
\begin{equation}
 \mathbb E|\Tr(U_d)|^{2(d+1)}=(d+1)!-1,
\end{equation}
and consequently
\begin{equation}\label{eq:Gd-edge-expansion}
 G_d(z)
 =
 1-\frac{z^{d+1}}{((d+1)!)^2}
 +z^{d+2}R_d(z),
\end{equation}
where $R_d$ is entire. This is the point at which the first
Charlier--Ewens correction band enters the analytic proof. Let $q,\nu,s,A_{m,r},B_{m,r}$ be as in
\cref{eq:quotient-remainder,eq:fixed-quotient-AB}, and put
$a=|t|^2/4$.  By \cref{eq:characteristic-H-product},
\begin{align}
 \varphi_{N,m}(t)
 &:=
 \mathbb E\exp\!\left(
  \ii t\cdot\frac{\Tr(U_N^m)}{\sqrt m}\right) \notag\\
 &=
 \e^{-a}G_q(-a/m)^sG_{q+1}(-a/m)^r.
                                                        \label{eq:fixed-quotient-product}
\end{align}
Introduce the two-mode signed comparison function
\begin{equation}\label{eq:fixed-quotient-leading-density}
 \Lambda^{(2)}_{m,r}(x)
 =
 g(x)\left\{
  (-1)^qA_{m,r}L_\nu(|x|^2)
  +(-1)^{q+1}B_{m,r}L_{\nu+1}(|x|^2)
 \right\}.
\end{equation}

\paragraph{A differentiated block-product estimate.}
Fix $\nu\geq2$ and $K<\infty$.  Let $s,r\geq0$ be integers with
$s+r=m$, and suppose that, on $0\leq a\leq\delta m$,
\begin{align*}
 u(a)&=\alpha(a/m)^\nu+u_1(a),
 &|\partial_a^ju_1(a)|&\leq
   \frac{K(1+a)^{\nu+1+j}}{m^{\nu+1}},\\
 v(a)&=\beta(a/m)^{\nu+1}+v_1(a),
 &|\partial_a^jv_1(a)|&\leq
   \frac{K(1+a)^{\nu+2+j}}{m^{\nu+2}},
 \qquad 0\leq j\leq2,
\end{align*}
where $|\alpha|+|\beta|\leq K$.  If $\delta=\delta(\nu,K)>0$ is
sufficiently small, then, for $0\leq j\leq2$,
\begin{align}
 &\left|\partial_a^j\left[
 \e^{-a}\left\{(1+u(a))^s(1+v(a))^r
 -1-s\alpha(a/m)^\nu-r\beta(a/m)^{\nu+1}\right\}
 \right]\right| \notag\\
 &\quad\leq
 C_{\nu,K}\e^{-a/2}(1+a)^{2\nu+2+2j}
 \left\{
  \frac{s}{m^{\nu+1}}+\frac{r}{m^{\nu+2}}
  +\left(\frac{s}{m^\nu}+\frac{r}{m^{\nu+1}}\right)^2
 \right\}.                                      \label{eq:abstract-product-bound}
\end{align}
\begin{proof}
The hypotheses and the two retained monomials imply, for $0\leq h\leq2$,
\begin{equation}
 |\partial_a^hu(a)|\leq
  \frac{C_{\nu,K}(1+a)^{\nu+h}}{m^\nu},
 \qquad
 |\partial_a^hv(a)|\leq
  \frac{C_{\nu,K}(1+a)^{\nu+1+h}}{m^{\nu+1}}.
\end{equation}
Moreover,
$s|u|+r|v|\leq C_{\nu,K}+C_{\nu,K}\delta^{\nu-1}a$ on the stated
interval.  To see the constant term explicitly, use
$(1+a)^h\leq2^{h-1}(1+a^h)$: then
\begin{equation}
 \frac{s(1+a)^{\nu+1}}{m^{\nu+1}}
 \leq C_\nu\{1+\delta^\nu a\},
 \qquad
 \frac{r(1+a)^{\nu+2}}{m^{\nu+2}}
 \leq C_\nu\{1+\delta^{\nu+1}a\},
\end{equation}
and the two retained monomials contribute at most
$C_{\nu,K}\delta^{\nu-1}a$.  Decrease $\delta$ so that the coefficient
of $a$ is at most $1/4$; hence
$s|u|+r|v|\leq C_{\nu,K}+a/4$. Expand the finite product by the binomial theorem.  Its constant and linear
parts are $1+su+rv$; every remaining term is
\begin{equation}
 \binom sk\binom r\ell u^kv^\ell,
 \qquad k+\ell\geq2.
\end{equation}
For $j=0,1,2$, differentiate these monomials by Leibniz's rule and use
$\binom sk\leq s^k/k!$, $\binom r\ell\leq r^\ell/\ell!$.  Summing the
resulting exponential series bounds all differentiated non-linear terms by
\begin{equation}
 C_{\nu,K}\e^{a/4}(1+a)^{2\nu+2+2j}
 \left(\frac{s}{m^\nu}+\frac{r}{m^{\nu+1}}\right)^2.
\end{equation}
For clarity, this summation can be organised by setting
$S_h=s|\partial_a^hu|+r|\partial_a^hv|$.  The derivative bounds give
$S_h\leq C_{\nu,K}(1+a)^{\nu+1+h}
(s/m^\nu+r/m^{\nu+1})$.  The terms of total binomial degree at least two
are bounded, for $j=0,1,2$, by $C\e^{S_0}$ times, respectively,
$S_0^2$, $S_0S_1$, and $S_1^2+S_0S_2$.  Since
$S_0\leq C_{\nu,K}+a/4$, these three bounds imply the above formula
after absorbing $\e^{C_{\nu,K}}$ into the constant.
The difference between $su+rv$ and the two displayed monomials in
\cref{eq:abstract-product-bound} is $su_1+rv_1$; its first two derivatives
are bounded by the first two terms in braces in
\cref{eq:abstract-product-bound}, with a smaller polynomial factor.
Finally, differentiating the common factor $\e^{-a}$ does not change its
size, and $\e^{-a}\e^{a/4}\leq\e^{-a/2}$.  This proves the claim.
\end{proof}

\begin{proposition}[Fixed-quotient approximation]
\label{prop:fixed-quotient-approximation}
For fixed $q\geq1$ and all sufficiently large $m$, the variable
$m^{-1/2}\Tr(U_N^m)$ has a density $f_{N,m}$ and, uniformly in
$0\leq r<m$,
\begin{align}
 \norm{f_{N,m}-g-\Lambda^{(2)}_{m,r}}_1
 \leq C_q\left\{
   \frac{s+1}{m^{\nu+1}}
  +\frac{(s+1)^2}{m^{2\nu}}
 \right\}
 +C_q\e^{-c_qm}.                                  \label{eq:fixed-quotient-L1-remainder}
\end{align}
\end{proposition}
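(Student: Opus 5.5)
We follow the fixed-power template: work on the Fourier side with the product formula \eqref{eq:fixed-quotient-product}, control $L^2$-type norms of the remainder and its Laplacian, and transfer to $L^1$ with \cref{lem:laguerre-fourier}. The transfer \eqref{eq:weighted-fourier-transfer} needs no spatial cutoff here, because $q$ is fixed and the polynomial degrees stay bounded.

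Write $\varphi_{N,m}=\e^{-a}(1+u)^s(1+v)^r$ with $1+u(a)=G_q(-a/m)$ and $1+v(a)=G_{q+1}(-a/m)$. By \eqref{eq:Gd-edge-expansion},
\[
u(a)=-\frac{(-a/m)^{\nu}}{(\nu!)^2}+(-a/m)^{\nu+1}R_q(-a/m),
\]
so $\alpha=(-1)^{\nu+1}/(\nu!)^2$. Similarly $v$ has leading coefficient $\beta=(-1)^{\nu}/((\nu+1)!)^2$. Since $R_q$ and $R_{q+1}$ are entire and $q$ is fixed, the remainders $u_1,v_1$ and their first two $a$-derivatives satisfy the hypotheses of the differentiated block-product estimate on $0\le a\le\delta m$, with $K=K_q$ and $\delta$ chosen small.

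By \cref{lem:laguerre-fourier}, the Fourier transform of $\Lambda^{(2)}_{m,r}$ is
\[
\e^{-a}\left[(-1)^qA_{m,r}\frac{a^\nu}{\nu!}+(-1)^{q+1}B_{m,r}\frac{a^{\nu+1}}{(\nu+1)!}\right].
\]
This matches $\e^{-a}\{s\alpha(a/m)^\nu+r\beta(a/m)^{\nu+1}\}$ exactly: for the first term, $s\alpha m^{-\nu}=(-1)^{q}s/(m^\nu(\nu!)^2)=(-1)^qA_{m,r}/\nu!$, and the second term checks the same way. So on $\{a\le\delta m\}$ the frequency remainder $\mathcal E=\varphi_{N,m}-\e^{-a}-\F\Lambda^{(2)}_{m,r}$ is exactly the bracket in \eqref{eq:abstract-product-bound}.

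By \eqref{eq:radial-laplacian-a}, $|\Delta_t\mathcal E|\le|\partial_a\mathcal E|+a|\partial_a^2\mathcal E|$. The estimate \eqref{eq:abstract-product-bound} gives, on the central region,
\[
|\mathcal E|+|\Delta\mathcal E|\le C_q\e^{-a/2}(1+a)^{2\nu+7}\Bigl\{\frac{s}{m^{\nu+1}}+\frac{r}{m^{\nu+2}}+\Bigl(\frac{s}{m^\nu}+\frac{r}{m^{\nu+1}}\Bigr)^2\Bigr\}.
\]
Integrating in $t\in\R^2$ yields an $L^2$ bound of the same order. Because $r\le m$, the braces are at most $C\{(s+1)/m^{\nu+1}+(s+1)^2/m^{2\nu}\}$.

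The complementary region $a>\delta m$ is the main obstacle. There we need a smooth cutoff, since the radial Laplacian must make sense. Take $\chi(a/(\delta m))$ with $\chi$ equal to $1$ on $[0,1/2]$ and supported in $[0,1]$, and apply \cref{lem:laguerre-fourier} separately to the central part $\chi\mathcal E$ and to the tail part $(1-\chi)\mathcal E$.

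In the tail, $\e^{-a}$ and $\F\Lambda^{(2)}$ with their derivatives are $O(\e^{-c m})$. For $\varphi_{N,m}$ itself I would use the Courteaut--Johansson--Lambert bounds \eqref{eq:CJL-plateau-exact}--\eqref{eq:CJL-power-exact} in the variable $\rho=|t|/\sqrt m$, where $\rho\gtrsim\sqrt\delta\,\sqrt{m}\cdot$const. On the plateau, each of the $m$ factors has modulus at most $\sqrt2\,\e^{-c_q}$ with $c_q$ now bounded below once $\rho\ge\rho_0$. This is the delicate point: the $\sqrt2$ loss must be beaten, which needs $\rho$ large relative to $q$. For $\rho$ beyond a $q$-dependent constant, $(q/\rho)^{q}$ gives decay $\rho^{-Nq/2}$ in total, hence $\e^{-c_qm}$ and integrability.

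For the intermediate range $\sqrt{\delta}\lesssim\rho\lesssim C_q$, the power bound alone may not suffice. I would instead use the Fredholm form \eqref{eq:BO-H}, or the trivial bound $|\Psi_d(\rho)|<1$ away from $0$, which holds because $\Tr(U_d)$ is non-degenerate, so $|\Psi_d|\le1-c_q$ on compact $\rho$-ranges away from zero. That gives $|\varphi_{N,m}|\le(1-c_q)^{m}$ there.

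Derivatives of $\varphi_{N,m}$ in the tail are handled by writing $\partial_a$ of the product as a sum over factors. Each differentiated factor $\Psi_d'$ is bounded polynomially, because $\Psi_d$ is a characteristic function of a bounded variable with $|\Tr U_d|\le d$. The $m$-fold product of the undifferentiated factors still carries the $\e^{-cm}$ or $\rho^{-Nq/2}$ decay, so the polynomial loss of order $m^2$ is absorbed.

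Finally, integrability of $\varphi_{N,m}$ gives the existence of the density $f_{N,m}$ by Fourier inversion. Combining the central and tail estimates in \eqref{eq:weighted-fourier-transfer} then gives \eqref{eq:fixed-quotient-L1-remainder}.
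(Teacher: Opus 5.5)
Your proof is correct. Its core matches the paper's proof:
\begin{itemize}
\item the same identification of $\F(g+\Lambda^{(2)}_{m,r})$ with $\e^{-a}\{1+s\alpha(a/m)^\nu+r\beta(a/m)^{\nu+1}\}$ (your signs and coefficients check);
\item the same use of \eqref{eq:abstract-product-bound} on $a\leq\delta m$, followed by \eqref{eq:radial-laplacian-a};
\item the same compactness bound $|\psi_d|\leq\kappa_q<1$ on an intermediate annulus.
\end{itemize}

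The one genuine difference is at large frequencies. The paper derives $|\partial_u^\alpha\psi_d(u)|\leq C_q(1+|u|)^{-1/2}$ for every factor from fixed-order Bessel decay, as in \eqref{eq:fixed-block-large-frequency}. It then bounds the whole differentiated product by $C_qm^2C_q^m(1+|u|)^{-m/2}$. You instead use only the trivial bounds $|\partial_u^\alpha\psi_d|\leq d^{|\alpha|}$, valid because $|\Tr U_d|\leq d$, on the at most two differentiated factors. All the decay then comes from the remaining $m-2$ factors via the power bound \eqref{eq:CJL-power-exact}, which gives $((q+1)/\rho)^{(N-2q-2)/2}$ for $\rho\geq7(q+1)/4$. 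This replaces the paper's determinant-expansion analysis of Bessel derivatives with the imported estimate, and it works equally well.

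Some statements should be corrected, though none breaks the argument:
\begin{itemize}
\item The plateau bound \eqref{eq:CJL-plateau-exact} is not merely delicate here; it is useless. Its exponent saturates at $d^2/8.2$ as $\rho\to\infty$, and for $d=1$ one has $\sqrt2\,\e^{-1/8.2}>1$ for every $\rho$. So for $q=1$ what fails is the block dimension, not the size of $\rho$.
\item The total decay from the power bound is $\rho^{-N/2}$, since the block dimensions sum to $N$, not $\rho^{-Nq/2}$.
\item The condition $a>\delta m$ means $\rho>2\sqrt\delta$, not $\rho\gtrsim\sqrt{\delta m}$.
\end{itemize}
Since your annulus and power bounds already cover all $\rho\geq2\sqrt\delta$, simply drop the plateau step.

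The smooth cutoff is harmless but unnecessary. $\mathcal E$ is smooth on all of $\R^2$, being a characteristic function of a bounded variable minus Gaussian-times-polynomial terms. So $\norm{\mathcal E}_2$ and $\norm{\Delta_t\mathcal E}_2$ can be split directly over the frequency regions, as the paper does.
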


\begin{proof}
We use the weighted Fourier transfer in
\cref{lem:laguerre-fourier}.  Expanding
\cref{eq:Gd-edge-expansion} at $-a/m$ shows that the Fourier transform of
$g+\Lambda^{(2)}_{m,r}$ is
\begin{align}
 \e^{-a}\left\{
 1+
 \frac{(-1)^q s}{m^\nu(\nu!)^2}a^\nu
 +\frac{(-1)^{q+1}r}{m^{\nu+1}((\nu+1)!)^2}
  a^{\nu+1}
 \right\}.                                      \label{eq:fixed-quotient-linear-fourier}
\end{align}
Equation \ref{eq:general-laguerre-fourier} supplies one factor
$1/k!$ and the coefficients in
\cref{eq:fixed-quotient-AB} supply the other. Thus it remains to estimate the product in
\cref{eq:fixed-quotient-product} after the two displayed leading
block terms have been removed. We first work in the central region $0\leq a\leq\delta_qm$, where
$\delta_q>0$ is fixed and sufficiently small.  Since the two functions $R_q,R_{q+1}$ in \cref{eq:Gd-edge-expansion} and their first two
derivatives are bounded on a fixed disc, put $x=a/m$ and write
\begin{align*}
 G_q(-x)&=1+\alpha_qx^\nu+x^{\nu+1}\tilde R_q(x),
 &\alpha_q&=\frac{(-1)^q}{(\nu!)^2},\\
 G_{q+1}(-x)&=1+\alpha_{q+1}x^{\nu+1}
                  +x^{\nu+2}\tilde R_{q+1}(x),
 &\alpha_{q+1}&=\frac{(-1)^{q+1}}{((\nu+1)!)^2}.
\end{align*}
For $0\leq j\leq2$, differentiation gives
\begin{align}
 \left|\partial_a^j\{G_q(-a/m)-1-\alpha_q(a/m)^\nu\}\right|
 &\leq \frac{C_q(1+a)^{\nu+1+j}}{m^{\nu+1}},
                                                        \label{eq:fixed-q-Taylor-derivative}\\
 \left|\partial_a^j\{G_{q+1}(-a/m)-1
              -\alpha_{q+1}(a/m)^{\nu+1}\}\right|
 &\leq \frac{C_q(1+a)^{\nu+2+j}}{m^{\nu+2}}.          \label{eq:fixed-qplus-Taylor-derivative}
\end{align}
Apply the block-product estimate \eqref{eq:abstract-product-bound} with
\[
 u(a)=G_q(-a/m)-1,
 \quad v(a)=G_{q+1}(-a/m)-1,
 \quad \alpha=\alpha_q,
 \quad \beta=\alpha_{q+1}.
\]
The hypotheses are precisely
\cref{eq:fixed-q-Taylor-derivative,eq:fixed-qplus-Taylor-derivative}.
It follows, for $0\leq j\leq2$, that
\begin{align}
 |\partial_a^j\mathcal E_{m,r}(a)|
 &\leq C_q\e^{-a/2}(1+a)^{2\nu+2+2j}
 \left\{
  \frac{s}{m^{\nu+1}}
 +\frac{r}{m^{\nu+2}}
 +\left(\frac{s}{m^\nu}
        +\frac{r}{m^{\nu+1}}\right)^2
 \right\},                             \label{eq:fixed-quotient-central-derivatives}
\end{align}
where $\mathcal E_{m,r}$ is the difference between
\cref{eq:fixed-quotient-product} and
\cref{eq:fixed-quotient-linear-fourier}.
Thus \cref{eq:fixed-quotient-central-derivatives} is the complete uniform
product estimate, including the endpoint cases $r=0$ and $s=1$.

Since polynomial multiples of $\e^{-c_qa}$ are square integrable,
\cref{eq:fixed-quotient-central-derivatives,eq:radial-laplacian-a} imply
\begin{align}
 \norm{\mathcal E_{m,r}}_{L^2(a\leq\delta_qm)}
 +\norm{\Delta_t\mathcal E_{m,r}}_{L^2(a\leq\delta_qm)}
 &\leq C_q\left\{
  \frac{s}{m^{\nu+1}}
 +\frac{r}{m^{\nu+2}}
 +\left(\frac{s}{m^\nu}
        +\frac{r}{m^{\nu+1}}\right)^2
 \right\}.                                      \label{eq:fixed-quotient-central-H2}
\end{align}
We next verify that extending this estimate to all frequencies costs only
$O_q(\e^{-c_qm})$. Put $u=t/\sqrt m=(u_1,u_2)\in\R^2$ and define
\begin{equation}
 \psi_d(u)=\mathsf H_d(-|u|^2/4),
 \qquad d\in\{q,q+1\}.
\end{equation}
These are the characteristic functions of $\Tr(U_d)$.  On every compact
annulus $\varepsilon\leq|u|\leq R$,
$|\psi_d(u)|<1$; equality in the triangle inequality would force
$u\cdot\Tr(U_d)$ to be constant modulo $2\pi$, which it is not.
Compactness therefore gives a common $\kappa_{q,\varepsilon,R}<1$.
For a multi-index $\alpha=(\alpha_1,\alpha_2)
\in\mathbb Z_{\geq0}^2$, write
$|\alpha|=\alpha_1+\alpha_2$ and
$\partial_u^\alpha
=\partial_{u_1}^{\alpha_1}\partial_{u_2}^{\alpha_2}$. The fixed-size Bessel determinant
\cref{eq:H-definition} and the standard fixed-order Bessel bounds give,
for $|\alpha|\leq4$,
\begin{equation}\label{eq:fixed-block-large-frequency}
 |\partial^{\alpha}_{u}\psi_d(u)|
 \leq C_q(1+|u|)^{-1/2}.
\end{equation}
At $z=-|u|^2/4$ the entries in \cref{eq:H-definition} satisfy
$I_k(\ii|u|)=\ii^kJ_k(|u|)$; the phases are removed by diagonal
conjugation.  Expanding the fixed $d\times d$ determinant, every derivative
of order at most four is a finite sum of products of fixed-order Bessel
functions and their derivatives.  The recurrence relations reduce these
derivatives to neighbouring fixed orders, and the standard fixed-order
estimate \cite[Eq.~(10.17.3)]{kn:dlmf10} makes each such factor
$O_q(|u|^{-1/2})$ for $|u|\geq1$.  The determinant expansion is therefore
$O_q(|u|^{-d/2})$, hence $O_q(|u|^{-1/2})$; boundedness on $|u|\leq1$
follows from analyticity.  This proves \cref{eq:fixed-block-large-frequency}.
The exponent $1/2$ is sufficient and is sharp for the one-dimensional
block. Fix the constant $C_q\geq1$ in
\cref{eq:fixed-block-large-frequency}, and choose
$\tau_q>2\sqrt{\delta_q}$ so large that
$C_q^2/(1+\tau_q)<1$. Put
\begin{equation}
\kappa_q:=\kappa_{q,\,2\sqrt{\delta_q},\,\tau_q}<1.
\end{equation}
In \cref{eq:fixed-quotient-product}, the two block counts add to $m$.
On $2\sqrt{\delta_q}\leq|u|\leq\tau_q$, apply the
compact-annulus bound to every factor. After at most two
derivatives, at most two factors lose their bound by $\kappa_q$,
so the resulting bound is $C_qm^2\kappa_q^{m-2}$.
For $|u|\geq\tau_q$, keeping the constant contributed by every
factor, Leibniz's rule gives, for $|\alpha|\leq2$,
\begin{equation}\label{eq:fixed-quotient-product-tail}
 \left|\partial_u^\alpha
   \{\psi_q(u)^s\psi_{q+1}(u)^r\}\right|
 \leq C_qm^2C_q^m(1+|u|)^{-m/2}.
\end{equation}
Consequently,
\begin{align*}
 C_q^{2m}\int_{\tau_q}^\infty u(1+u)^{-m}\dd u
 &\leq
 \frac{C_q^{2m}(1+\tau_q)^{2-m}}{m-2}
 \leq C_q\e^{-c_qm}.
\end{align*}
Changing variables $t=\sqrt m\,u$ introduces only polynomial factors in
$m$; derivatives in $t$ introduce compensating powers of $m^{-1/2}$.
Thus the $L^2$ norms of the exact characteristic function and its
Laplacian on $a>\delta_qm$ are $O_q(\e^{-c_qm})$.  The Gaussian and the two
polynomial comparison modes have the same bound by incomplete Gamma
estimates.  Hence
\begin{equation}\label{eq:fixed-quotient-global-H2}
 \norm{\mathcal E_{m,r}}_2+\norm{\Delta_t\mathcal E_{m,r}}_2
 \leq C_q\left\{
  \frac{s+1}{m^{\nu+1}}
 +\frac{(s+1)^2}{m^{2\nu}}
 \right\}
 +C_q\e^{-c_qm},
\end{equation}
where we used $r<m$.  The same large-frequency estimate makes the
characteristic function integrable for large $m$, and therefore proves
existence of $f_{N,m}$. Lastly apply \cref{eq:weighted-fourier-transfer} to
\cref{eq:fixed-quotient-global-H2}.  Fourier inversion of \cref{eq:fixed-quotient-linear-fourier}
gives $g+\Lambda^{(2)}_{m,r}$, proving
\cref{eq:fixed-quotient-L1-remainder}.
\end{proof}

\begin{proof}[Proof of \cref{thm:fixed-quotient}]
Let
\[
 \eta_{m,r}=\frac{s+1}{m^\nu}.
\]
Because $1\leq s\leq m$, the right side of
\cref{eq:fixed-quotient-L1-remainder} is $o_q(\eta_{m,r})$:
the two algebraic error ratios are at most $m^{-1}$ and
$2m^{-q}$, respectively.  By polar coordinates,
\begin{align*}
 \norm{\Lambda^{(2)}_{m,r}}_1
 &=
 \int_0^\infty\e^{-u}
 \left|
   (-1)^qA_{m,r}L_\nu(u)
  +(-1)^{q+1}B_{m,r}L_{\nu+1}(u)
 \right|\dd u.
\end{align*}
The reverse triangle inequality for norms, followed by
\cref{prop:fixed-quotient-approximation}, now gives
\cref{eq:fixed-quotient-two-mode} after division by two.

If $r/m\to\theta<1$, then
$A_{m,r}\sim(1-\theta)/(m^q\Gamma(q+2))$ and
$B_{m,r}=O(m^{-q-1})$, proving
\cref{eq:fixed-quotient-interior}.  If $s\to c$, then
\begin{equation}
 m^{q+1}A_{m,r}\to\frac{c}{\Gamma(q+2)},
 \qquad
 m^{q+1}B_{m,r}\to\frac1{\Gamma(q+3)},
\end{equation}
which gives \cref{eq:fixed-quotient-endpoint}.
\end{proof}

\begin{remark}[The first strip and the threshold]
\label{rem:fixed-quotient-first-strip}
For $q=1$, the theorem covers $N/2<m\leq N$.
If $r/m\to\theta<1$, the leading comparison function is
$-(2m-N)g(x)L_2(|x|^2)/(2m^2)$, reflecting the exact
fourth-moment defect
\begin{equation}
 \mathbb E|\Tr(U_N^m)|^4=2m^2-(2m-N).
\end{equation}
At $N=m$ one has $r=0$, and the leading term in
\cref{eq:fixed-quotient-two-mode} reduces to
$\frac1{4N}\int_0^\infty\e^{-u}|L_2(u)|\dd u$, exactly the constant in
\cref{thm:high-power}. Thus the below-threshold and above-threshold
answers meet without a normalisation gap.
\end{remark}

\section{Joint growth of power and block dimension}
\label{sec:joint-growth}

\ifdefined\PedagogicalVersion\PedJointGrowth\fi

We prove \cref{thm:joint-growth}.  Put $s=m-r$ and
\begin{equation}\label{eq:joint-A}
 A_{N,m}:=\frac{s}{m^\nu\Gamma(\nu+1)}.
\end{equation}
The displacement determinant gives, exactly as in
\cref{cor:H-factorization},
\begin{equation}\label{eq:joint-fredholm-product}
 \varphi_{N,m}(t)
 =\e^{-a}D_q(\rho)^sD_{q+1}(\rho)^r,
 \qquad a=\frac{|t|^2}{4},\quad \rho=\frac{|t|}{\sqrt m}.
\end{equation}

Let $T_d(\rho)=\Tr(Q_d\mathcal K(\rho)Q_d)$ and define the signed comparison function associated with the full linear Fredholm contribution by
\begin{equation}\label{eq:joint-linear-frequency}
 \lambda^{[1]}_{N,m}(t)
 :=
 -\e^{-a}\{sT_q(\rho)+rT_{q+1}(\rho)\},
 \qquad
 \Lambda^{[1]}_{N,m}:=\F^{-1}\lambda^{[1]}_{N,m}.
\end{equation}
We now show that the $\ell=1$ moment corrections in
\eqref{eq:charlier-correction}, summed with the weights
$(-a/m)^k/(k!)^2$ for the characteristic function, give
$\lambda^{[1]}_{N,m}$ in \eqref{eq:joint-linear-frequency}.

Put $\mu_k=(m^k)$ and
$K_k=K_{N,mk}^{\Ch}(\mu_k,\mu_k)$, with $K_0=0$.
The multiplicities in \eqref{eq:z-wick-partitions-def} are then
$a_m=k$ and $a_j=0$ for $j\ne m$.
For these latter indices, \eqref{eq:charlier-generating} gives
$P_{c_j}(0;1/j)=(-1/j)^{c_j}$.
Expanding the squared factor $P_{c_m}(k;1/m)^2$ using
\eqref{eq:charlier-generating}, and summing over the remaining
entries $c_j$, $j\ne m$, of the multi-index $\mathbf c$ in
\eqref{eq:charlier-kernel}, gives
\begin{equation}
 K_k=[u^{mk-N-1}]
 \frac{1}{(1-u)^2}
 \sum_{j=0}^{k}\binom{k}{j}^{2}j!m^j
 u^{mj}(1-u^m)^{2(k-j)}.
\end{equation}
Coefficients with negative index vanish.
Put $b=a/m$, so that $\rho=2\sqrt b$. The Bessel generating
function gives
\begin{equation}
 \sum_{n\geq0}\frac{(-b)^n}{(n!)^2}
       (v+v^{-1}-2)^n
 =
 \sum_{h\in\mathbb Z}(-1)^h I_h(\rho)^2v^h.
\end{equation}
Summing the preceding coefficient formula over $k=j+n$ and
using this identity with $v=u^m$, we obtain
\begin{align*}
 \sum_{k\geq0}K_k\frac{(-b)^k}{(k!)^2}
 &=
 \e^{-a}[u^{-N-1}]
 \frac{\sum_{h\in\mathbb Z}
       (-1)^h I_h(\rho)^2u^{mh}}
      {(1-u)^2}\\
 &=\e^{-a}\sum_{h>q}(mh-N)(-1)^hI_h(\rho)^2\\
 &=\e^{-a}\{sT_q(\rho)+rT_{q+1}(\rho)\}.
\end{align*}
Here the last equality uses
$mh-N=s(h-q)+r(h-q-1)$ for $h>q$ and
\eqref{eq:trace-bessel-square}.
These calculations are coefficient-wise in $b$.
Since the first Charlier--Ewens band contributes $-K_k$ to the
$2k$th moment, its generating function is exactly
$\lambda^{[1]}_{N,m}$. Subtracting it from the full expansion
identifies the sum of the higher bands with the non-linear
Fredholm remainder. Finally set the leading one-mode signed comparison function
\begin{equation}\label{eq:joint-leading-density}
 \Lambda^{(0)}_{N,m}(x)
 :=
 (-1)^qA_{N,m}g(x)L_\nu(|x|^2).
\end{equation}

\begin{proposition}[Joint-growth approximation]
\label{prop:joint-growth-approximation}
As $m,q\to\infty$, uniformly in $0\le r<m$,
\begin{equation}\label{eq:joint-first-band-bound}
 \norm{\Lambda^{[1]}_{N,m}-\Lambda^{(0)}_{N,m}}_1
 \leq
 CA_{N,m}\nu^{-1/4}\left(\frac1m+\frac1\nu\right).
\end{equation}
For all sufficiently large $m,q$, the law of
$m^{-1/2}\Tr(U_N^m)$ has a continuous density $f_{N,m}$ and
\begin{equation}\label{eq:joint-nonlinear-L1}
 \norm{f_{N,m}-g-\Lambda^{[1]}_{N,m}}_1
 =o(A_{N,m}\nu^{-1/4}),
\end{equation}
uniformly in $r$.
\end{proposition}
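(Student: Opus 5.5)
For \eqref{eq:joint-first-band-bound} I will work with the explicit Laguerre expansion of $\Lambda^{[1]}_{N,m}$. The expansion of $T_d$ in \eqref{eq:trace-bessel-square} and \eqref{eq:bessel-square-series} is taken at $x=\rho^2/4=a/m$, and Lemma~\ref{lem:laguerre-fourier} turns $\e^{-a}a^k/k!$ into $g\,L_k(|\cdot|^2)$. Together these write $\Lambda^{[1]}_{N,m}$ as
\[
 g\sum_k \beta_k L_k(|x|^2),
\]
with coefficients of the same shape as \eqref{eq:fixed-density-bhj} and $p$ replaced by $m$. The $q$-blocks have $h=1,j=0$ coefficient exactly $(-1)^qA_{N,m}$, which produces $\Lambda^{(0)}_{N,m}$.

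The remaining $q$-block terms carry the ratios \eqref{eq:block-j-ratio} and \eqref{eq:block-h-ratio} with $p=m$. The $j\geq1$ terms therefore have total mass at most $CA_{N,m}/m$, and the $h\geq2$ terms at most $CA_{N,m}/(m\nu)$. The $(q+1)$-block terms have leading coefficient $A_+=r/(m^{\nu+1}(\nu+1)!)\leq A_{N,m}\cdot r/(s m\nu)$.

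This last bound is not uniform when $s$ is small. I would absorb it differently: the $(q+1)$-block $L_{\nu+1}$ coefficient and the $q$-block $L_{\nu+1}$ coefficient ($j=1$) combine, and \eqref{eq:laguerre-shift} lets me compare $L_{\nu+1}$ with $L_\nu$. I actually suspect the uniform-in-$r$ claim needs $s\geq1$ together with $r/(sm\nu)\leq 1/\nu$, which does hold since $r<m$ and $s\geq1$. So the $(q+1)$-block mass is at most $CA_{N,m}/\nu$.

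Each Laguerre mode has $L^1$ norm at most $C\nu^{-1/4}$ by \eqref{eq:uniform-laguerre-L1}. Summing gives \eqref{eq:joint-first-band-bound}.

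For \eqref{eq:joint-nonlinear-L1} I would copy the proof of Proposition~\ref{prop:fixed-power-approximation}, tracking the $m$-dependence.

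\textbf{Step 1: central region $\rho\leq 2c_0\nu$.} Lemma~\ref{lem:bessel-determinant-linearization} applies to each of the $m$ factors. Here I cannot expand a fixed number of factors. Instead I use \eqref{eq:common-logdet}:
\[
 \log\bigl(D_q^sD_{q+1}^r\bigr)=-L-\mathfrak R,\qquad L=sT_q+rT_{q+1},\qquad |\mathfrak R|\leq C(s|T_q|^2+r|T_{q+1}|^2).
\]
Then
\[
 \bigl|\e^{-L-\mathfrak R}-1+L\bigr|\leq C\bigl(|L|^2+|\mathfrak R|\bigr)\e^{|L|+|\mathfrak R|}.
\]
The bound $|T_d|\leq C_{\mathrm{tr}}/(d+1)$ only gives $|L|\leq Cm/\nu$, which is not small. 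This is the main obstacle, and I would handle it with the factorial bound \eqref{eq:common-trace-factorial}. In the scaling $\rho^2=4a/m$ it shows
\[
 |L|\leq C\frac{s\,a^{\nu}}{m^{\nu}\nu!^2}\e^{Ca/(m\nu)},
\]
which is small for $a$ up to a large multiple of $\nu$. That covers the Gaussian bulk where $\e^{-a}$ carries the mass. On the rest of the central disc, $\e^{-a}$ times the crude bound $\e^{Cm/\nu}$ is negligible, because $a\gtrsim\nu^2m$ there dominates $m/\nu$.

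In the bulk, the quadratic error $\e^{-a}(|L|^2+|\mathfrak R|)$ has $L^2$ norm bounded by majorant masses, as in \eqref{eq:q-majorant-mass} and \eqref{eq:q-trace-L2}. This gives $CA_{N,m}\nu^{-5/4}\cdot(\text{small})$ via $|T_q|\leq C/\nu$.

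\textbf{Step 2: outer region.} I use \eqref{eq:CJL-plateau-exact} and \eqref{eq:CJL-power-exact}, with the $m$ factors giving decay $\e^{-cmq^2}$ and $(\nu/\rho)^{N/2}$. This is far below $A_{N,m}^2$, since $A_{N,m}\geq m^{-\nu}/\nu!$ and $\log(1/A_{N,m})\lesssim\nu\log(m\nu)\ll m\nu^2$.

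\textbf{Step 3: passing to $L^1$.} The spatial truncation of Proposition~\ref{prop:fixed-power-approximation} converts the $L^2$ bound to $L^1$. I take $R^2\asymp\log(1/A_{N,m})$ and use the sub-Gaussian tail \eqref{eq:fixed-subgaussian-tail}, which holds verbatim. The Laguerre tail needs the bound on the total coefficient mass from the first part. Cauchy--Schwarz inside the disc gives $R\,A_{N,m}\nu^{-5/4}$, and this is $o(A_{N,m}\nu^{-1/4})$ provided $\log(m\nu)\ll\nu^2$. If that fails, I would instead sharpen Step 1 to an $L^2$ bound with the extra factor $1/\sqrt{\log(m\nu)}$ coming from the smallness of $|L|$.
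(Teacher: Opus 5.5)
Your argument for \eqref{eq:joint-first-band-bound} is the paper's. You use the same coefficient ratios with $p=m$, note that $r/(sm(\nu+1))\leq 1/(\nu+1)$ because $s\geq1$, and finish with \eqref{eq:uniform-laguerre-L1}. The gap is in \eqref{eq:joint-nonlinear-L1}. As written, your estimates only work under a relation between the growth rates of $m$ and $q$, which the statement explicitly excludes.

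First, you mishandle the part of the central disc outside your bulk $a\leq K\nu$. The disc $\rho\leq 2c_0\nu$ is the region $a\leq c_0^2 m\nu^2$. So on $K\nu<a\ll m\nu^2$ it is false that $a\gtrsim \nu^2 m$, and $\e^{-a}\e^{Cm/\nu}$ is huge near $a\approx K\nu$ once $m\gg\nu^2$. A bulk of fixed size $K\nu$ cannot work anyway. On $a>K\nu$, bounds that do not carry a factor $A_{N,m}$ are at best of size $\e^{-cK\nu}$, while $A_{N,m}$ can be as small as $m^{-\nu}/\nu!$.

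The idea you are missing is that \eqref{eq:common-trace-factorial} controls the exponent on the whole disc. Write it in $y=\rho/(2\nu)\leq c_0$ and apply Stirling; since $y\e^{1+y^2}\leq y/c_0\leq 1$, this gives $|T_q|+|T_{q+1}|\leq C\nu^{-1}(y\e^{1+y^2})^{2\nu}\leq Cy^2/\nu$. As $a=m\nu^2y^2$, it follows that $|L|\leq Ca/\nu^3$ and $|\mathfrak R|\leq Ca/\nu^4$ everywhere on the disc. Hence $\e^{-a+|L|+|\mathfrak R|}\leq \e^{-3a/4}$, and no splitting is needed.

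Second, bounding the quadratic remainder through $|T_q|\leq C/\nu$ only gives an $L^2$ error of order $A_{N,m}\nu^{-5/4}$. After Cauchy--Schwarz on $|x|\leq R$ with $R^2\asymp\log(1/A_{N,m})\asymp\nu\log(m\nu)$, this becomes $A_{N,m}\nu^{-1/4}\sqrt{\log(m\nu)/\nu}$. So you would actually need $\log m=o(\nu)$, not $\log(m\nu)\ll\nu^2$. Your fallback does not say where an extra factor would come from.

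The $\nu^{-5/4}$ is inherited from the fixed-power proof, where it came from truncating the linear term. Here $\Lambda^{[1]}_{N,m}$ contains the entire linear Fredholm contribution. So the frequency-side error $\mathcal E_{N,m}=\varphi_{N,m}-\e^{-a}-\lambda^{[1]}_{N,m}$ is genuinely quadratic in $A_{N,m}$. To exploit this, use $|\mathfrak R|\leq C(s|T_q|+r|T_{q+1}|)^2$, which holds because $s,r$ are non-negative integers. Combine it with your own factorial display rewritten as $s|T_q|\leq CA_{N,m}a^\nu\e^{2a/(m\nu)}/\nu!$, and its analogue $r|T_{q+1}|\leq CA_{N,m}\nu^{-1}a^{\nu+1}\e^{2a/(m(\nu+1))}/(\nu+1)!$.

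With the exponent control above, this gives $|\mathcal E_{N,m}|\leq CA_{N,m}^2\e^{-a/2}\{a^\nu/\nu!+\nu^{-1}a^{\nu+1}/(\nu+1)!\}^2$ on the disc. That yields an $L^2$ bound of $C^\nu A_{N,m}^2$. Since $A_{N,m}\leq m^{1-\nu}/\nu!$, one has $R\nu^{1/4}C^\nu A_{N,m}\to0$ for arbitrary rates, which is what the spatial truncation needs.

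Finally, on $\rho>2c_0\nu$ you must also bound the comparison term $\e^{-a}L$, which is no longer a polynomial times $\e^{-a}$. The paper uses $|T_d(\rho)|\leq\sum_{n\geq1}nI_n(\rho)^2\leq\rho\e^{2\rho}$ for this.
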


\begin{proof}
Use \cref{eq:trace-bessel-square,eq:bessel-square-series} with $p=m$.
After Fourier inversion by \cref{lem:laguerre-fourier}, the absolute
coefficient attached to the $q$ block, displacement $h\geq1$, and Bessel
degree $j\geq0$ is
\begin{equation}\label{eq:joint-bhj}
 b_{h,j}
 =
 \frac{sh}{m^{q+h+j}}\,
 \frac{(2(q+h)+j+1)_j}
 {j!(q+h+j)!}.
\end{equation}
In particular, $b_{1,0}=A_{N,m}$. We divide through as follows:
\begin{align}
 \frac{b_{h,j+1}}{b_{h,j}}
 &=
 \frac{(2q+2h+2j+2)(2q+2h+2j+1)}
 {m(j+1)(2q+2h+j+1)(q+h+j+1)}
 \leq\frac4{m(j+1)},                              \label{eq:joint-j-ratio}\\
 \frac{b_{h,0}}{A_{N,m}}
 &=\frac{h}{m^{h-1}}
   \frac{(q+1)!}{(q+h)!}
 \leq\frac{h}{(m\nu)^{h-1}}.                     \label{eq:joint-h-ratio}
\end{align}
The combined contribution from the $r$ blocks of dimension $q+1$
begins with a coefficient of magnitude
$r/[m^{\nu+1}\Gamma(\nu+2)]$, and
\begin{equation}\label{eq:joint-long-ratio}
 \frac{r}{m^{\nu+1}\Gamma(\nu+2)}
 \bigg/A_{N,m}
 =\frac{r}{sm(\nu+1)}
 \leq\frac1{\nu+1}.
\end{equation}
The remaining coefficients satisfy the same two ratio bounds.
Now apply \cref{eq:uniform-laguerre-L1}.  Summing
\cref{eq:joint-j-ratio} for $j\geq1$, then
\cref{eq:joint-h-ratio} for $h\geq2$, and lastly
\cref{eq:joint-long-ratio}, proves
\cref{eq:joint-first-band-bound}.  Notice that the estimate is uniform
even when $s=m-r=1$.

We next control all non-linear terms in the Fredholm determinant expansion uniformly. Write
$A=A_{N,m}$ and first work with characteristic functions.
Set
\begin{equation}\label{eq:joint-frequency-error}
 \mathcal E_{N,m}(t)
 =\varphi_{N,m}(t)-\e^{-a}-\lambda^{[1]}_{N,m}(t).
\end{equation}
For $\rho\leq2c_0\nu$, use the logarithmic form in
\cref{lem:bessel-determinant-linearization}:
\begin{align}
 D_q^sD_{q+1}^r
 &=\exp\{-L-R\},                                  \label{eq:joint-log-product}\\
 L&=sT_q+rT_{q+1},
 &R&=s\mathfrak r_q+r\mathfrak r_{q+1},            \label{eq:joint-LR}\\
 |R|&\leq C\{s|T_q|^2+r|T_{q+1}|^2\}.             \label{eq:joint-R-bound}
\end{align}
This avoids expanding a growing number of determinant factors.

We make the uniformity of the exponent explicit.  Put
$y=\rho/(2\nu)\leq c_0$.  From
\cref{eq:common-trace-factorial} and Stirling's lower bound,
\begin{equation}\label{eq:joint-y-trace}
 |T_q(\rho)|+|T_{q+1}(\rho)|
 \leq \frac C\nu\{y\e^{1+y^2}\}^{2\nu}.
\end{equation}
Since $y\e^{1+y^2}\leq y/c_0$ on $[0,c_0]$, the right side is at most
$Cy^2/\nu$.  As $a=m\rho^2/4=m\nu^2y^2$, this gives
\begin{equation}\label{eq:joint-exponent-small}
 |L|\leq\frac{Ca}{\nu^3},
 \qquad
 |R|\leq\frac C\nu\{s|T_q|+r|T_{q+1}|\}
       \leq\frac{Ca}{\nu^4}.
\end{equation}
In particular, $|L|+|R|\leq a/4$ for all sufficiently large $q$, uniformly
in $m$ and $r$.
The sharper factorial form of the same estimate is
\begin{align}
 s|T_q(\rho)|
 &\leq CA\frac{a^\nu}{\nu!}
       \exp\!\left(\frac{2a}{m\nu}\right),       \label{eq:joint-sharp-q-trace}\\
 r|T_{q+1}(\rho)|
 &\leq \frac{CA}{\nu}\frac{a^{\nu+1}}{(\nu+1)!}
       \exp\!\left(\frac{2a}{m(\nu+1)}\right).  \label{eq:joint-sharp-qplus-trace}
\end{align}
For the second estimate, the ratio of its original prefactor to the displayed
one is $r\nu/[sm(\nu+1)]\leq1$, ensuring uniformity even when $s=1$.
Since $s,r$ are nonnegative integers, \cref{eq:joint-R-bound} gives
\begin{equation}
 |R|\leq C\{s|T_q|^2+r|T_{q+1}|^2\}
 \leq C\{s|T_q|+r|T_{q+1}|\}^2.
\end{equation}
Writing $z=L+R$, the definition of the error becomes
\begin{equation}\label{eq:joint-quadratic-decomposition}
 \mathcal E_{N,m}(t)
 =\e^{-a}\{(\e^{-z}-1+z)-R\}.
\end{equation}
By \cref{eq:joint-exponent-small},
$|z|\leq C\{s|T_q|+r|T_{q+1}|\}$ and $|z|\leq a/4$
for all sufficiently large $q$.
We therefore apply
$|\e^{-z}-1+z|\leq\frac12|z|^2\e^{|z|}$
and then the trace estimates
\cref{eq:joint-sharp-q-trace,eq:joint-sharp-qplus-trace}. As
$m\nu\to\infty$, the harmless exponentials in those two trace estimates
are absorbed into $\e^{-a/2}$, and we obtain
\begin{equation}\label{eq:joint-central-pointwise}
 |\mathcal E_{N,m}(t)|
 \leq CA^2\e^{-a/2}
 \left\{
  \frac{a^\nu}{\nu!}
  +\frac1\nu\frac{a^{\nu+1}}{(\nu+1)!}
 \right\}^2.
\end{equation}
As $\dd t=4\pi\dd a$ for radial integrals, extending the integral to
$a=\infty$ and using Gamma integrals gives
\begin{align*}
 \norm{\mathcal E_{N,m}}_{L^2(\rho\leq2c_0\nu)}^2
 &\leq CA^4\int_0^\infty\e^{-a}
 \left\{\frac{a^\nu}{\nu!}
 +\frac1\nu\frac{a^{\nu+1}}{(\nu+1)!}\right\}^4\dd a\\
 &\leq C^{2\nu}A^4,
\end{align*}
where the last step follows term by term from
$\Gamma(4\nu+j+1)/(\nu!)^4\leq C^{2\nu}\nu^{O(1)}$ for $j \leq 4$ and Stirling's formula.
Absorbing the polynomial factor into the exponential constant yields
\begin{equation}\label{eq:joint-central-L2}
 \norm{\mathcal E_{N,m}}_{L^2(\rho\leq2c_0\nu)}
 \leq C^\nu A^2.
\end{equation}
We next use $\rho$ consistently on the complementary plane.  Let
$\Lambda_q=\e^{\eta q^2}$, where $\eta>0$ is a sufficiently small absolute
constant. The exact Courteaut--Johansson--Lambert estimates are
\cref{eq:CJL-plateau-exact,eq:CJL-power-exact}.  On
$2c_0\nu\leq\rho\leq\Lambda_q$ they give
\begin{equation}
 |\varphi_{N,m}(t)|
 \leq 2^{m/2}
 \exp\!\left\{-\frac{sq^2+r(q+1)^2}{18}\right\}.
\end{equation}
After including the $t$-area $O(m\Lambda_q^2)$,
\begin{equation}\label{eq:joint-plateau-integral}
 \int_{2c_0\nu\leq\rho\leq\Lambda_q}
 |\varphi_{N,m}(t)|^2\dd t
 \leq C\e^{-cmq^2}.
\end{equation}
For $\rho\geq\Lambda_q$, \cref{eq:CJL-power-exact} gives
\begin{align}
 |\varphi_{N,m}(t)|^2
 &\leq
 \left(\frac q\rho\right)^{sq}
 \left(\frac{q+1}\rho\right)^{r(q+1)}
 \leq\left(\frac{q+1}\rho\right)^N,\\
 \int_{\rho\geq\Lambda_q}|\varphi_{N,m}(t)|^2\dd t
 &\leq\frac{2\pi m}{N-2}(q+1)^N\Lambda_q^{2-N}
 \leq C\e^{-cmq^2}.                              \label{eq:joint-power-tail-integral}
\end{align}

We control the linear comparison on the same $\rho$-scale.  The Bessel
recurrence
$nI_n(\rho)=\frac\rho2(I_{n-1}(\rho)-I_{n+1}(\rho))$ and
$\sum_{n\in\mathbb Z}I_n(\rho)=\e^\rho$ imply
\begin{equation}\label{eq:joint-linear-global-bound}
 |T_d(\rho)|
 \leq\sum_{n\geq1}nI_n(\rho)^2
 \leq\rho\e^{2\rho}.
\end{equation}
Thus
\begin{equation}
 \e^{-a}|L|\leq m\rho\exp(-m\rho^2/4+2\rho).
\end{equation}
Its squared integral from $\rho=2c_0\nu$ onward is
$O(\e^{-cmq^2})$, and the Gaussian tail is smaller.  Together with
\cref{eq:joint-plateau-integral,eq:joint-power-tail-integral}, this proves
\begin{equation}\label{eq:joint-frequency-tail}
 \norm{\mathcal E_{N,m}}_{L^2(\rho>2c_0\nu)}
 \leq C\e^{-cmq^2}.
\end{equation}
The square root of the power-tail bound is integrable for large $N$, so
$\varphi_{N,m}\in L^1$ and Fourier inversion now proves that the law has a
continuous density $f_{N,m}$.  Plancherel and
\cref{eq:joint-central-L2,eq:joint-frequency-tail} give
\begin{equation}\label{eq:joint-global-L2-explicit}
 \norm{f_{N,m}-g-\Lambda^{[1]}_{N,m}}_2
 \leq C^\nu A^2+C\e^{-cmq^2}.
\end{equation}

It remains to compare both errors uniformly with the leading scale.  Put
\begin{equation}
 R_{N,m}=\sqrt{8\log(A^{-1})}.
\end{equation}
As $s\geq1$, Stirling's formula gives
\begin{equation}\label{eq:joint-logA-bound}
 -\log A\leq C\nu(\log m+\log\nu).
\end{equation}
Moreover,
\begin{align}
 R_{N,m}\nu^{1/4}C^\nu A&\longrightarrow0,
                                                        \label{eq:joint-central-scale-comparison}\\
 \frac{R_{N,m}\nu^{1/4}\e^{-cmq^2}}{A}&\longrightarrow0. 
                                                        \label{eq:joint-far-scale-comparison}
\end{align}
For the second limit, explicitly,
\begin{equation}
 \frac{mq^2}{\nu(\log m+\log\nu)}\longrightarrow\infty
\end{equation}
for arbitrary $m,q\to\infty$; the first follows from
$A\leq m^{1-\nu}/\Gamma(\nu+1)$.  Hence
\cref{eq:joint-global-L2-explicit} implies
\begin{equation}\label{eq:joint-density-L2}
 \norm{f_{N,m}-g-\Lambda^{[1]}_{N,m}}_2
 =o(A\nu^{-1/4}R_{N,m}^{-1}).
\end{equation}

On $|x|\leq R_{N,m}$ use Cauchy--Schwarz and
\cref{eq:joint-density-L2}.  Outside this disc, the moment-generating
argument in \cref{eq:fixed-subgaussian-tail} is uniform in $m$, so the law
and the Gaussian have mass $O(\e^{-R_{N,m}^2/2})$. To conclude,
\cref{eq:joint-j-ratio,eq:joint-h-ratio,eq:joint-long-ratio} show that the total absolute
coefficient mass of $\Lambda^{[1]}_{N,m}$ is $O(A)$; Laguerre
orthogonality bounds its mass outside this disc by
$O(A\e^{-R_{N,m}^2/2})$. As
$\e^{-R_{N,m}^2/2}=A^4$, these tails are negligible and
\cref{eq:joint-nonlinear-L1} follows.
\end{proof}

\begin{proof}[Proof of \cref{thm:joint-growth}]
By \cref{prop:joint-growth-approximation},
\begin{equation}
 \norm{f_{N,m}-g}_1
 =
 A_{N,m}\int_0^\infty\e^{-u}|L_\nu(u)|\dd u
+o(A_{N,m}\nu^{-1/4}).
\end{equation}
Divide by two and use \cref{lem:laguerre-L1} to obtain
\cref{eq:joint-laguerre-form,eq:joint-explicit-form}.
\end{proof}

\section{The independent-phase regime \texorpdfstring{$m_N\geq N$}{mN at least N}}
\label{sec:high-power}

\ifdefined\PedagogicalVersion\PedHighPower\fi

We now prove \cref{thm:high-power}. This regime has a different
probabilistic mechanism, so only the Fourier--Laguerre transfer from
\cref{lem:laguerre-fourier} is shared with the fixed-power proof. By Rains's power map theorem \eqref{eq:rains-decomposition}, for every $m\geq N$
the eigenvalues of $U_N^m$ are independent and uniform on the unit circle.
Therefore
\begin{equation}\label{eq:rains-high-power}
 \frac{\Tr(U_N^m)}{\sqrt N}
 \mathrel{\overset{\mathrm d}=}
 S_N:=\frac1{\sqrt N}\sum_{j=1}^N\xi_j,
 \qquad \xi_j=\e^{\ii\Theta_j},
\end{equation}
where the $\Theta_j$ are independent and uniform on $[0,2\pi]$.
Let $J=J_0$.  Then
\begin{equation}
 \Phi_N(t):=\mathbb E\e^{\ii t\cdot S_N}
 =J(|t|/\sqrt N)^N.
\end{equation}
Define
\begin{equation}\label{eq:high-remainder}
 R_N(t)=
 J(|t|/\sqrt N)^N
 -\e^{-|t|^2/4}\left(1-\frac{|t|^4}{64N}\right).
\end{equation}

\paragraph{A Fourier remainder estimate.}
\begin{equation}\label{eq:high-H2}
 \norm{R_N}_2+\norm{\Delta R_N}_2=O(N^{-2}).
\end{equation}

\begin{proof}
Choose $\delta>0$ below the first positive zero of $J$.  The expansion
\begin{equation}
 \log J(s)=-\frac{s^2}{4}-\frac{s^4}{64}+s^6\eta(s^2)
\end{equation}
holds for $|s|\leq\delta$, with $\eta$ analytic.  Put $y=|t|^2$,
$T_N=N^{1/8}$,
\begin{equation}
 a_N(y)=\frac{y^2}{64N},\qquad
 b_N(y)=\frac{y^3}{N^2}\eta(y/N),\qquad
 u_N=-a_N+b_N.
\end{equation}
Using the exact Taylor remainder
\begin{equation}
 \e^u=1+u+u^2\int_0^1(1-\theta)\e^{\theta u}\dd\theta
\end{equation}
and differentiating this identity gives, for $j=0,1,2$ and
$0\leq y\leq N^{1/4}$,
\begin{equation}
 \left|
  \frac{\dd^j}{\dd y^j}
  \left\{
   J(\sqrt{y/N})^N-\e^{-y/4}(1-a_N(y))
  \right\}
 \right|
 \leq \frac{C}{N^2}P_j(y)\e^{-y/8}.
\end{equation}
Here, $P_0,P_1,P_2$ are fixed polynomials with non-negative
coefficients, independent of $N$. The radial identity \cref{eq:radial-laplacian-a}, with $a=y/4$, controls
$|t|\leq T_N$. For $T_N\leq|t|\leq\delta\sqrt N$, analyticity gives
$\log J(s)\leq-cs^2$, $|(\log J)'(s)|\leq Cs$, and
$|(\log J)''(s)|\leq C$.  Thus the function and its Laplacian are bounded by
a fixed polynomial times $\e^{-c|t|^2}$, giving
$O(\e^{-cN^{1/4}})$.

For $s=|t|/\sqrt N$, the Bessel equation gives exactly
\begin{equation}
 \Delta_t(J(s)^N)
 =(N-1)J(s)^{N-2}J'(s)^2-J(s)^N.
\end{equation}
Choose $C_0,s_0$ so that
$|J(s)|+|J'(s)|\leq C_0s^{-1/2}$ for $s\geq s_0$, and then choose
$M>\max\{\delta,s_0,C_0^2\}$.  On $\delta\leq s\leq M$,
$\sup|J(s)|=\rho<1$, so the contribution is polynomial in $N$ times
$\rho^N$.  On $s\geq M$,
\begin{equation}
 |J(s)^N|+|\Delta_t(J(s)^N)|
 \leq CN C_0^Ns^{-N/2},
\end{equation}
whose squared integral is bounded by
\begin{equation}
 \frac{CN^3M^2}{N-2}\left(\frac{C_0^2}{M}\right)^N.
\end{equation}
The Gaussian comparison has exponentially small tails.  Combining these estimates proves \cref{eq:high-H2}.
\end{proof}

\begin{proof}[Proof of \cref{thm:high-power}]
The large-frequency estimate used to prove \eqref{eq:high-H2} also shows,
for all sufficiently large $N$, that
\begin{equation}
 \Phi_N\in L^1(\R^2).
\end{equation}
Fourier
inversion therefore gives a continuous density for $S_N$; denote it by
$f_{S_N}$.  Let $r_N=\F^{-1}R_N$.  By
\cref{eq:weighted-fourier-transfer,eq:high-H2},
$\norm{r_N}_1=O(N^{-2})$.  The case $k=2$ of
\cref{eq:general-laguerre-fourier} gives
\begin{equation}
 f_{S_N}(x)-g(x)
 =
 -\frac1{2N}L_2(|x|^2)g(x)+r_N(x).
\end{equation}
Hence
\begin{equation}
 d_{\TV}(S_N,Z)
 =
 \frac1{4N}\int_0^\infty\e^{-u}|L_2(u)|\dd u+O(N^{-2}).
\end{equation}
The roots of $L_2(u)=1-2u+u^2/2$ are $2\pm\sqrt2$, and
\begin{equation}
 \int_0^\infty\e^{-u}|L_2(u)|\dd u
 =
 4\e^{-2}\bigl(\sqrt2\cosh(\sqrt2)-\sinh(\sqrt2)\bigr).
\end{equation}
Now use \cref{eq:rains-high-power}.
\end{proof}

\section{Local Poisson approximation for spectra of \texorpdfstring{$U^m$}{U to the m}}
\label{sec:local-poisson}

\ifdefined\PedagogicalVersion\PedLocalPoisson\fi
We now prove \cref{thm:local-poisson}. The local mechanism
is different to those in the previous sections: Rains's decomposition \eqref{eq:rains-decomposition} produces independent CUE blocks, and,
when their largest intensity tends to zero, a fixed microscopic interval
sees every block, up to a smaller-order error, as a rare zero-or-one point
source. The leading correction to the local count distribution involves
the second Charlier polynomial.

Let $\mathcal N(B)$ be the space of finite simple counting measures on $B$,
equipped with its counting-measure sigma-algebra. We use the total-variation
distance defined in \eqref{eq:intro-TV-definition}. For a finite signed
measure $\sigma$, $\norm{\sigma}_{\mathrm{var}}$ denotes its variation norm;
we do not write $d_{\TV}$ when one of the arguments is signed.

\subsection{Spectral block decomposition at the local scale}

Recall the unfolded process $\Xi_{N,m}$ from
\eqref{eq:intro-unfolded-process}.
Fix $L>0$ and an interval $B\subset\mathbb T_N$ of length $L$.  When
$N>L$, rotation invariance makes its location immaterial.

With $q,r$ as in \eqref{eq:euclidiean-division}, define
\begin{equation}\label{eq:local-block-sizes}
 d_a=
 \begin{cases}
  q+1,&1\leq a\leq r,\\
  q,&r<a\leq m.
 \end{cases}
\end{equation}
Blocks of size zero are empty. Let $X_a$ denote the restriction to $B$
of the $a$th independent unfolded CUE block of dimension $d_a$. Thus, for every Borel set $A\subseteq B$, $X_a(A)$ counts the eigenangles of block $a$ whose unfolded positions $N\theta/(2\pi)$ lie in $A$.
Rains's power-map theorem \eqref{eq:rains-decomposition} identifies
$\Xi_{N,m}\restriction_B$ in law with $\sum_{a=1}^mX_a$.
We work with this representation throughout the section.
Each $X_a$ belongs to $\mathcal N(B)$ almost surely.
Before restriction, a block of dimension $d$ has kernel \cite[Eq.~(1.8)]{kn:soshnikov00}
\begin{equation}\label{eq:local-block-kernel}
 K^{\mathrm{CUE}}_{N,d}(x,y)
 =
 \frac1N\sum_{\ell=0}^{d-1}
 \e^{2\pi\ii\ell(x-y)/N},
\end{equation}
with respect to Lebesgue measure on $\mathbb T_N$; the $a$th block therefore has kernel $K^{\mathrm{CUE}}_{N,d_a}$.
Its intensity is constant:
\begin{equation}\label{eq:local-block-weight}
 w_a:=K^{\mathrm{CUE}}_{N,d_a}(x,x)=\frac{d_a}{N},
 \qquad
 \sum_{a=1}^m w_a=1.
\end{equation}
Consequently, the quantities in \cref{eq:main-local-gamma} can equivalently
be written as
\begin{equation}\label{eq:local-power-sums}
 \gamma_{j;N,m}=\sum_{a=1}^m w_a^j,\qquad
 \beta_{N,m}=\sum_{a=1}^m w_a^2,\qquad
 \varepsilon_{N,m}=\max_a w_a.
\end{equation}
In particular,
\begin{align}
 \beta_{N,m}
 &=
 \frac{r(q+1)^2+(m-r)q^2}{N^2} \notag\\
 &=
 \frac1m+\frac{r(m-r)}{mN^2}.                    \label{eq:local-beta-exact}
\end{align}
The second identity also holds when $m>N$: then $q=0$, $r=N$, and
$\beta_{N,m}=1/N$.

The power sums in \eqref{eq:local-power-sums} have a direct probabilistic
meaning; a window of length $L$ receives mean $Lw_a$ from block $a$.
Replacing the blocks by independent Bernoulli sources gives a
Poisson--binomial total count with parameter power sums
$L^j\gamma_{j;N,m}$. The sum of squared block intensities $\beta_{N,m}$
determines the leading Poisson approximation error. The cubic sum controls the next term in the
convolution logarithm, while $\gamma_{4;N,m}$ controls the total error in replacing
the CUE blocks by independent Bernoulli uniform processes.  Finally,
$\varepsilon_{N,m}$ is the largest individual block intensity and is the
small parameter needed for uniform Taylor expansion.

This hierarchy also explains the unrestricted condition in the theorem.  If
$m\leq N$, then $\varepsilon_{N,m}\asymp1/m$; if $m>N$, it is $1/N$.
Thus $N,m\to\infty$ is exactly what makes every source rare, regardless of
the relative rate of the two parameters.  Moreover,
\begin{equation}
\label{eq:local-power-sum-bounds}
 \gamma_{3;N,m}\leq\varepsilon_{N,m}\beta_{N,m},
 \qquad
 \gamma_{4;N,m}\leq\varepsilon_{N,m}^2\beta_{N,m},
 \qquad
 \beta_{N,m}\leq\varepsilon_{N,m}.
\end{equation}
Hence every remainder on the right of
\eqref{eq:main-local-poisson} is $o(\beta_{N,m})$.  The theorem's leading
term is therefore uniform across the transition $m\asymp N$, not merely on
fixed-ratio subsequences.

For later use, let $P_2(\,\cdot\,;L)$ be the monic Charlier polynomial
from \cref{eq:charlier-generating}.  Explicitly,
\begin{equation}\label{eq:local-P2}
 P_2(k;L)=(k-L)^2-k,
 \qquad
 A(L)=\frac14\mathbb E|P_2(Z_L;L)|.
\end{equation}

\subsection{Poisson--Charlier expansion for sparse block counts}

Let $\pi_L$ denote the law of $Z_L\sim\operatorname{Pois}(L)$ on
$\mathbb Z_{\geq0}$, put
\begin{equation}
 \mathfrak d=\delta_1-\delta_0,
\end{equation}
and write $*$ for convolution of signed measures on $\mathbb Z$.

\begin{proposition}[Poisson--Charlier transfer.]
For all $j,k\geq0$,
\begin{equation}\label{eq:poisson-charlier-transfer}
 (\pi_L*\mathfrak d^{*j})(k)
 =
 \pi_L(k)\frac{P_j(k;L)}{L^j}.
\end{equation} 
\end{proposition}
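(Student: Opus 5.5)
We prove \eqref{eq:poisson-charlier-transfer} by comparing generating functions in an auxiliary variable that records $j$, using the Charlier generating function \eqref{eq:charlier-generating} with $\alpha=L$. Fix $k\geq0$. Since $\mathfrak d^{*j}=\sum_{i=0}^j\binom ji(-1)^{j-i}\delta_i$, the left side is the finite sum
\begin{equation*}
 (\pi_L*\mathfrak d^{*j})(k)=\sum_{i=0}^{j\wedge k}\binom ji(-1)^{j-i}\pi_L(k-i),
\end{equation*}
so no convergence issue arises for fixed $j,k$.

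We multiply by $s^j/j!$ and sum over $j\geq0$. This is a formal, and in fact absolutely convergent, power series in $s$. Writing $j=i+l$ and using $\binom{i+l}{i}/(i+l)!=1/(i!\,l!)$, the double sum factorises:
\begin{equation*}
 \sum_{j\geq0}\frac{s^j}{j!}(\pi_L*\mathfrak d^{*j})(k)
 =\e^{-s}\sum_{i=0}^{k}\frac{s^i}{i!}\,\e^{-L}\frac{L^{k-i}}{(k-i)!}
 =\e^{-s}\,\e^{-L}\frac{(L+s)^k}{k!},
\end{equation*}
where the last step is the binomial theorem. We then factor out $\pi_L(k)=\e^{-L}L^k/k!$ to obtain
\begin{equation*}
 \pi_L(k)\,\e^{-s}\Bigl(1+\frac sL\Bigr)^k.
\end{equation*}

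Now we substitute $t=s/L$. This gives $\pi_L(k)\,\e^{-Lt}(1+t)^k$, which by \eqref{eq:charlier-generating} with $x=k$ and $\alpha=L$ equals
\begin{equation*}
 \pi_L(k)\sum_j P_j(k;L)\,t^j/j!=\pi_L(k)\sum_j P_j(k;L)\,L^{-j}s^j/j!.
\end{equation*}
Comparing coefficients of $s^j/j!$ on the two sides yields the claim. Both sides are entire in $s$, since for fixed $k$ they are a polynomial times an exponential, so coefficient extraction is legitimate.

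There is no real obstacle. The only point needing care is bookkeeping: $\mathfrak d=\delta_1-\delta_0$ (rather than $\delta_0-\delta_1$) gives the sign $\e^{-s}$, which matches the $\e^{-\alpha t}$ in the monic Poisson-orthogonal convention. As a sanity check, $j=1$ gives $\pi_L(k-1)-\pi_L(k)=\pi_L(k)(k-L)/L$, which agrees with $P_1(k;L)=k-L$.
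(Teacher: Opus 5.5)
Your proof is correct. It is a dual form of the paper's generating-function argument rather than the same computation.

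The paper fixes $j$ and takes the probability generating function in $z$ over $k$. There convolution becomes multiplication and $\mathfrak d$ becomes $z-1$, so the left side is immediately $\e^{L(z-1)}(z-1)^j$. The right side comes from multiplying the Charlier generating function \eqref{eq:charlier-generating} by $\pi_L(k)z^k$, summing over $k$, and reading off the coefficient of $t^j/j!$.

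You instead fix $k$ and take the exponential generating function in $s$ over $j$. This costs you the explicit binomial expansion of $\mathfrak d^{*j}$. In exchange, the right side needs no summation over $k$. Your computation amounts to
$\sum_{j}\frac{s^j}{j!}(\pi_L*\mathfrak d^{*j})=\pi_L*\exp_*(s\mathfrak d)=\pi_{L+s}$, with $\pi_{L+s}(k)=\e^{-L-s}(L+s)^k/k!$. So the Charlier polynomials appear directly as Taylor coefficients of the likelihood ratio $\pi_{L+s}(k)/\pi_L(k)=\e^{-s}(1+s/L)^k$, which is exactly the defining generating function after $t=s/L$.

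Both arguments extract coefficients from the same double generating function $\e^{(L+s)(z-1)}$, in opposite orders. The paper's version fits the transform viewpoint. Yours works pointwise in $k$ and makes transparent the link with the convolution exponential $\pi_L=\exp_*(L\mathfrak d)$ used in \cref{lem:first-poisson-charlier}.
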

\begin{proof}
The generating function of the left side of
\cref{eq:poisson-charlier-transfer} is
\begin{equation}
 \e^{L(z-1)}(z-1)^j.
\end{equation}
On the other hand, multiply \cref{eq:charlier-generating}, with
$\alpha=L$, by $\pi_L(k)z^k$, sum over $k$, and compare coefficients of
$t^j/j!$.  This gives
\begin{equation}
 \sum_{k\geq0}\pi_L(k)P_j(k;L)z^k
 =
 L^j(z-1)^j\e^{L(z-1)},
\end{equation}
which proves the identity.
\end{proof}

For a sum of independent Bernoulli variables, the variance is smaller
than the mean by $\sum_a p_a^2$, where $p_a$ are the success probabilities. The next lemma shows how this difference appears in
the leading correction to the Poisson law and bounds the remainder.

\begin{lemma}[First Poisson--Charlier correction]
\label{lem:first-poisson-charlier}
Let $I_1,\ldots,I_s$ be independent Bernoulli variables with parameters
$p_1,\ldots,p_s$.  Suppose
\begin{equation}
 \sum_{a=1}^s p_a=L,
 \qquad
 p_*:=\max_a p_a\leq\frac14,
\end{equation}
put $S=\sum_a I_a$, and define
\begin{equation}
 \lambda_j=\sum_{a=1}^s p_a^j,\qquad j\geq2.
\end{equation}
Then
\begin{equation}\label{eq:first-poisson-charlier}
 \left|
 d_{\TV}\!\left(S,Z_L\right)
 -
 \frac{\lambda_2}{4L^2}\,
 \mathbb E|P_2(Z_L;L)|
 \right|
 \leq C_L(\lambda_3+\lambda_2^2).
\end{equation}
\end{lemma}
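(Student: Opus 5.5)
The plan is to expand the law of $S$ around $\pi_L$ in powers of $\mathfrak d=\delta_1-\delta_0$ and apply the Poisson--Charlier transfer \eqref{eq:poisson-charlier-transfer} to each term. Each Bernoulli law can be written as $\delta_0+p_a\mathfrak d$. Its ratio to the Poisson factor is then
\begin{equation*}
 (\delta_0+p_a\mathfrak d)*\exp_*(-p_a\mathfrak d)
 =\delta_0-\tfrac{p_a^2}{2}\mathfrak d^{*2}+\rho_a,
\end{equation*}
as formal power series in the convolution algebra, where $\pi_{p_a}=\exp_*(p_a\mathfrak d)$. Equivalently, at the level of generating functions, $(1+p_a w)\e^{-p_aw}=\exp\{\log(1+p_aw)-p_aw\}$ with $w=z-1$. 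Convolving over $a$ gives
\begin{equation*}
 \mathcal L(S)=\pi_L*\exp_*\Bigl(\sum_{j\geq2}\tfrac{(-1)^{j+1}}{j}\lambda_j\mathfrak d^{*j}\Bigr),
\end{equation*}
whose exponent has leading term $-\tfrac{\lambda_2}{2}\mathfrak d^{*2}$.

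Expanding the exponential gives
\begin{equation*}
 \mathcal L(S)=\pi_L-\tfrac{\lambda_2}{2}\pi_L*\mathfrak d^{*2}+\pi_L*\mathcal T,
\end{equation*}
where $\mathcal T$ collects the terms $\tfrac{\lambda_3}{3}\mathfrak d^{*3}$, $\tfrac{\lambda_2^2}{8}\mathfrak d^{*4}$, and everything of higher order. By \eqref{eq:poisson-charlier-transfer}, the second term is the signed measure $-\tfrac{\lambda_2}{2L^2}\pi_L P_2(\cdot;L)$. Its variation norm is $\tfrac{\lambda_2}{2L^2}\mathbb E|P_2(Z_L;L)|$, and half of that is the claimed main term. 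The reverse triangle inequality then reduces the lemma to the bound
\begin{equation*}
 \norm{\pi_L*\mathcal T}_{\mathrm{var}}\leq C_L(\lambda_3+\lambda_2^2).
\end{equation*}

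The remainder bound is the main obstacle, since the expansion has infinitely many terms. I would work coefficientwise with the power series $\mathcal T(w)$ in $w=z-1$. Write its coefficients as $\tau_j$, so that $\pi_L*\mathcal T=\sum_{j\geq3}\tau_j\pi_L*\mathfrak d^{*j}$. I would then bound
\begin{equation*}
 \norm{\pi_L*\mathfrak d^{*j}}_{\mathrm{var}}\leq 2^j,
\end{equation*}
which is trivial since $\norm{\mathfrak d}_{\mathrm{var}}=2$. It would also be better to use $L^{-j}\mathbb E|P_j(Z_L;L)|\leq C_L^j\sqrt{j!}$ from Charlier orthogonality and Cauchy--Schwarz, though the crude bound $2^j$ may already suffice. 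The remaining step is to show $\sum_{j\geq3}|\tau_j|2^j\leq C(\lambda_3+\lambda_2^2)$. For this I would dominate the exponent coefficientwise: the series $\sum_{j\geq2}\lambda_j|w|^j/j$ evaluated at $|w|=2$ is at most $\lambda_2\cdot 2\sum_k(2p_*)^k$, which is finite because $p_*\leq1/4$. It is also bounded in terms of $L$, since $\lambda_2\leq p_*L$.

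A majorant-series argument then finishes the bound. The exponential of the majorant $E$, with its constant and $\lambda_2$-linear terms removed, is at most $\e^{E}$ times the sum of the cubic-and-higher part of the exponent, which is $O(\lambda_3)$ because $\lambda_j\leq p_*^{j-3}\lambda_3$, and $E^2/2$, which is $O(\lambda_2^2)$. This yields the constant $C_L\approx C\e^{CL}$. If instead the sharper Charlier-norm bound is used, I would take a majorant radius larger than $2$ and check convergence, which again requires $p_*$ small; the hypothesis $p_*\leq1/4$ ensures $2p_*\leq1/2$, so the geometric sums converge at radius $2$.
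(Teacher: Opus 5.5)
Your proposal is correct and follows essentially the same route as the paper. Both write $\mathcal L(S)=\pi_L*\exp_*(\eta)$, extract $\eta_2=-\tfrac{\lambda_2}{2}\mathfrak d^{*2}$ through the Poisson--Charlier transfer and the reverse triangle inequality, and bound the remainder using $\lambda_j\le p_*^{j-3}\lambda_3$, $\lambda_2\le p_*L$ and $\norm{\mathfrak d^{*j}}_1=2^j$. Your coefficientwise majorant at $|w|=2$ is just the generating-function form of the paper's $\ell^1(\mathbb Z)$ Banach-algebra bound $\norm{\exp_*(\eta)-\delta_0-\eta}_1\le\tfrac12\norm{\eta}_1^2\e^{\norm{\eta}_1}$. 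The same absolute convergence at radius $2$ is what upgrades your formal identity to an identity of signed measures, and the optional Charlier-norm refinement is not needed.
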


\begin{proof}
Let $\mathcal M=\ell^1(\mathbb Z)$ be the commutative Banach algebra of
finite signed measures, with convolution as multiplication and norm
$\norm{\mu}_1=\sum_k|\mu(k)|$. We regard $\mathcal L(S)$, the probability law of $S$, as an
element of $\mathcal M$. Write
\begin{equation}
 \exp_*(\mu)=\sum_{n\geq0}\frac{\mu^{*n}}{n!}.
\end{equation}
Here $\mu^{*n}$ denotes the $n$-fold convolution of $\mu$,
with $\mu^{*0}=\delta_0$; $\exp_*$ is the convolution exponential.
The Bernoulli law with parameter $p_a$ is
$\delta_0+p_a\mathfrak d$. As
$\norm{p_a\mathfrak d}_1=2p_a\leq\frac12$, its convolution logarithm is
absolutely convergent. We write $\mathop{*}_{a=1}^s\mu_a$ for the repeated convolution
$\mu_1*\cdots*\mu_s$. Independence therefore gives
\begin{align}
 \mathcal L(S)
 &=
 \mathop{*}_{a=1}^s(\delta_0+p_a\mathfrak d) \notag\\
 &=
 \pi_L*\exp_*(\eta),
 \qquad
 \eta=
 \sum_{j\geq2}\frac{(-1)^{j+1}\lambda_j}{j}
 \mathfrak d^{*j},                               \label{eq:bernoulli-convolution-log}
\end{align}
because $\pi_L=\exp_*(L\mathfrak d)$.

Split
\begin{equation}
 \eta_2=-\frac{\lambda_2}{2}\mathfrak d^{*2},
 \qquad
 \eta_{\geq3}=\eta-\eta_2.
\end{equation}
For $j\geq3$,
$\lambda_j\leq p_*^{\,j-3}\lambda_3$, while
$\norm{\mathfrak d^{*j}}_1=2^j$.  Hence
\begin{equation}\label{eq:eta-ge3}
 \norm{\eta_{\geq3}}_1
 \leq
 \lambda_3\sum_{j\geq3}\frac{2^jp_*^{\,j-3}}j
 \leq C\lambda_3.
\end{equation}
Also $\norm{\eta_2}_1=2\lambda_2$.  The exact exponential remainder in a
Banach algebra gives
\begin{equation}
 \norm{\exp_*(\eta)-\delta_0-\eta}_1
 \leq\frac12\norm{\eta}_1^2\e^{\norm{\eta}_1}.
\end{equation}
As $\lambda_3\leq p_*\lambda_2$ and
$\lambda_2\leq p_*L$, the exponential factor is bounded by a constant
depending only on $L$. Since $\pi_L$ is a probability measure, the triangle inequality gives
$\norm{\pi_L*\sigma}_1\leq\norm{\pi_L}_1\norm{\sigma}_1
=\norm{\sigma}_1$ for every finite signed measure $\sigma$ on $\mathbb Z$.
Combining this with \cref{eq:eta-ge3} and the preceding exponential
remainder bound, we obtain
\begin{equation}\label{eq:poisson-charlier-signed-remainder}
 \norm{
  \mathcal L(S)-\pi_L*(\delta_0+\eta_2)
 }_1
 \leq C_L(\lambda_3+\lambda_2^2).
\end{equation}
By the transfer identity \eqref{eq:poisson-charlier-transfer},
\begin{equation}\label{eq:poisson-second-mode}
 (\pi_L*\eta_2)(k)
 =
 -\frac{\lambda_2}{2L^2}\pi_L(k)P_2(k;L).
\end{equation}
Thus, for the signed measure of total mass one
$\sigma_2=\pi_L*(\delta_0+\eta_2)$,
\begin{equation}\label{eq:poisson-second-mode-variation}
 \frac12\norm{\sigma_2-\pi_L}_{\mathrm{var}}
 =
 \frac{\lambda_2}{4L^2}\mathbb E|P_2(Z_L;L)|.
\end{equation}
The reverse triangle inequality for the variation norm, together with
\cref{eq:poisson-charlier-signed-remainder,%
eq:poisson-second-mode-variation}, proves
\cref{eq:first-poisson-charlier}.
\end{proof}

\subsection{Configuration-level approximation of CUE blocks}

We next extend the Bernoulli count approximation to the full point
configuration in $B$. Replacing the CUE blocks in \eqref{eq:rains-decomposition} by independent processes
that are empty or contain one uniform point introduces a total
error of $o(\beta_{N,m})$, so it does not affect the leading
total-variation term.

\begin{proposition}[Sparse-block approximation]
\label{prop:sparse-block-approximation}
Let $X_d$ be a CUE block of dimension $d$, unfolded onto
$\mathbb T_N$ and restricted to $B$, and put $w=d/N$.  Assume $Lw\leq1$.
Let $Y_w$ be empty with probability $1-Lw$ and otherwise consist of one
uniform point of $B$.  Then
\begin{equation}\label{eq:sparse-cue-block}
 d_{\TV}\!\left(X_d, Y_w\right)
 \leq\frac{\pi^2L^4}{18}w^4.
\end{equation}
\end{proposition}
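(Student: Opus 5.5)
The plan is to compare the two processes through their Janossy densities on $B$ and to bound everything by the second factorial moment $M:=\mathbb E[X_d(B)(X_d(B)-1)]=\int_{B^2}\rho_2$. Here $\rho_k$ are the correlation functions of the determinantal process with kernel \eqref{eq:local-block-kernel}, so $\rho_1\equiv w$ and $\rho_2(x,y)=w^2-|K^{\mathrm{CUE}}_{N,d}(x,y)|^2\geq0$. For configurations on $B$ one has
\[
 2\,d_{\TV}(X_d,Y_w)=\bigl|j_0-(1-Lw)\bigr|+\int_B|j_1(x)-w|\dd x+\mathbb P(X_d(B)\geq2),
\]
where $j_0,j_1$ are the Janossy densities of $X_d$. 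The right side uses that $Y_w$ has Janossy densities $1-Lw$ and $w$ in degrees $0$ and $1$, and none beyond. The hypothesis $Lw\leq1$ only ensures that $Y_w$ is well defined.

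\emph{Step 1 (Bonferroni inequalities).} Because all $\rho_k\geq0$, the alternating inclusion--exclusion series for Janossy densities in terms of correlation functions give one-sided truncation bounds:
\[
 0\leq j_0-(1-Lw)\leq \tfrac12 M,\qquad \rho_1(x)-\int_B\rho_2(x,y)\dd y\leq j_1(x)\leq\rho_1(x),\qquad \mathbb P(X_d(B)\geq2)\leq\tfrac12 M.
\]
The last bound follows from $\mathbf 1_{\{n\geq2\}}\leq n(n-1)/2$. Integrating the middle inequality gives $\int_B|j_1-w|\leq M$. Adding the three pieces yields $d_{\TV}(X_d,Y_w)\leq\frac12(\frac12M+M+\frac12M)=M$.

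\emph{Step 2 (bounding $\rho_2$).} With $\theta=2\pi(x-y)/N$, write
\[
 N^2|K^{\mathrm{CUE}}_{N,d}|^2=\Bigl|\sum_{\ell<d}\e^{\ii\ell\theta}\Bigr|^2=\sum_{\ell,\ell'}\cos((\ell-\ell')\theta).
\]
Apply $\cos u\geq1-u^2/2$ to get $d^2-N^2|K|^2\leq\frac{\theta^2}{2}\sum_{\ell,\ell'}(\ell-\ell')^2=\frac{\theta^2}{2}\cdot\frac{d^2(d^2-1)}{6}\leq\frac{\theta^2d^4}{12}$. Hence
\[
 \rho_2(x,y)\leq \frac{w^4}{12}(2\pi)^2(x-y)^2=\frac{\pi^2w^4}{3}(x-y)^2.
\]
Here $x-y$ is measured as a real difference inside the interval $B$ of length $L<N$.

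\emph{Step 3 (integration).} Since $\int_{B^2}(x-y)^2\dd x\dd y=L^4/6$, we get $M\leq\pi^2L^4w^4/18$, which is \eqref{eq:sparse-cue-block}.

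The main point requiring care is Step 1. One must check that the Janossy--correlation inclusion--exclusion truncates with the correct signs, which relies only on positivity of the factorial moment measures. One must also check that the three contributions add up to exactly $M$ rather than a larger multiple, since the stated constant $\pi^2/18$ leaves no slack. The sum-of-squares identity $\sum_{\ell,\ell'<d}(\ell-\ell')^2=d^2(d^2-1)/6$ is routine.
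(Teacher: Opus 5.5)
Your proposal is correct and follows essentially the same route as the paper. Both arguments decompose the total variation by cardinality sectors, bound it by the second factorial moment $\mathbb E[X_d(B)(X_d(B)-1)]$, and then use the determinantal two-point cancellation with $1-\cos u\leq u^2/2$ and $\int_{B^2}(x-y)^2\dd x\dd y=L^4/6$. The only differences are cosmetic: the paper computes the sector discrepancies exactly, getting $d_{\TV}=\mathbb E[X_d(B)\mathbf 1_{\{X_d(B)\geq2\}}]$ before bounding, where you use Bonferroni truncations, and it groups your double sum over $(\ell,\ell')$ by $h=\ell-\ell'$.
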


\begin{proof}
We first consider a general point-process observation.  Let $X$ be a finite
simple point process on $B$ with intensity measure $\mu$, suppose
$p=\mu(B)\leq1$, and put $M=X(B)$.  Let $Y$ be empty with probability
$1-p$ and otherwise contain one point with law $\mu/p$, with the statement
understood trivially if $p=0$.

Let $Q_n$ be the restriction of the law of $X$ to the cardinality sector
$\{\xi:\xi(B)=n\}$, and let $Q_1^\sharp$ be the corresponding
subprobability measure on $B$ in the one-point sector. The intensity
measure decomposes as
\begin{equation}
 \mu=Q_1^\sharp+\zeta,
 \qquad
 \zeta(A)=\mathbb E[X(A)\mathbf1_{\{M\geq2\}}].
\end{equation}
The singleton-sector measure of $Y$ is $\mu$, whereas that of $X$ is
$Q_1^\sharp$.  Moreover,
\begin{equation}
 Q_0(\mathcal N(B))-(1-p)
 =
 \mathbb E[(M-1)\mathbf1_{\{M\geq2\}}].
\end{equation}
Because different cardinality sectors are disjoint, their variation
norms add, and hence
\begin{align}
 2d_{\TV}(X,Y)
 &=
 \mathbb E[(M-1)\mathbf1_{\{M\geq2\}}]
 +\zeta(B)+\mathbb P(M\geq2) \notag\\
 &=
 2\mathbb E[M\mathbf1_{\{M\geq2\}}].
\end{align}
It follows that
\begin{equation}\label{eq:general-zero-one-bound}
 d_{\TV}(X,Y)
 \leq\mathbb E[M(M-1)].
\end{equation}
For the CUE block, the intensity is $w\,\dd x$, so the zero-or-one process
in this construction is exactly $Y_w$.  The determinantal two-point
formula and \cref{eq:local-block-kernel} give
\begin{equation}\label{eq:cue-factorial-moment}
 \mathbb E[X_d(B)(X_d(B)-1)]
 =
 \int_{B^2}\left(w^2-|K^{\mathrm{CUE}}_{N,d}(x,y)|^2\right)\dd x\,\dd y.
\end{equation}
Writing $t=x-y$ and expanding the squared Dirichlet kernel,
\begin{align}
 w^2-|K^{\mathrm{CUE}}_{N,d}(x,y)|^2
 &=
 \frac2{N^2}\sum_{h=1}^{d-1}
 (d-h)\left\{1-\cos\!\left(\frac{2\pi ht}{N}\right)\right\}
 \notag\\
 &\leq
 \frac{4\pi^2t^2}{N^4}
 \sum_{h=1}^{d-1}(d-h)h^2 \notag\\
 &=
 \frac{\pi^2t^2d^2(d^2-1)}{3N^4}
 \leq\frac{\pi^2t^2}{3}w^4.                     \label{eq:cue-kernel-cancellation}
\end{align}
By rotation invariance and $L<N$, we may represent $B$ by the
interval $[0,L]\subset[0,N)$. All differences $x-y$ in the above formula are then ordinary real differences with $x,y\in[0,L]$.
Finally,
\begin{equation}
 \int_{B^2}(x-y)^2\dd x\,\dd y=\frac{L^4}{6}.
\end{equation}
Substitution into \cref{eq:general-zero-one-bound,%
eq:cue-factorial-moment,eq:cue-kernel-cancellation} proves the result.
\end{proof}
Assume $N>L$ and $L\varepsilon_{N,m}\leq1/4$, as in
\cref{thm:local-poisson}. Then $Lw_a\leq1/4$ for every $a$.
For each block $a$, let $Y_a$ be empty with probability $1-Lw_a$
and otherwise contain one point chosen uniformly from $B$, as in
\cref{prop:sparse-block-approximation}. Choose these processes
independently across blocks, and set
\begin{equation}\label{eq:local-superpositions}
 X=\sum_{a=1}^mX_a=\Xi_{N,m}\!\restriction_B,
 \qquad
 Y=\sum_{a=1}^mY_a,
 \qquad
 S=Y(B).
\end{equation}
The total-variation distance between the sums $X$ and $Y$ is bounded
by that between the corresponding collections of block processes.
Independence and \eqref{eq:sparse-cue-block} therefore give
\begin{align}
 d_{\TV}(X,Y)
 &\leq \sum_{a=1}^m d_{\TV}(X_a,Y_a) \notag\\
 &\leq \frac{\pi^2L^4}{18}\sum_{a=1}^m w_a^4
 =\frac{\pi^2L^4}{18}\gamma_{4;N,m}.
 \label{eq:local-configuration-approximation}
\end{align}

\paragraph{Reduction to the count.}
With $Y$ and $S$ as in \cref{eq:local-superpositions},
\begin{equation}\label{eq:common-conditional-locations}
 d_{\TV}\!\left(Y,\Pi_B\right)
 =
 d_{\TV}\!\left(S,Z_L \right).
\end{equation}
\begin{proof}
Conditional on $S=k$, the $k$ points of $Y$ are independent and uniform
on $B$.  This holds for each fixed set of successful Bernoulli sources and
therefore remains true after mixing over that set.  Conditional on
$Z_L=k$, the Poisson process $\Pi_B$ has the same law.  As distinct
cardinality sectors of $\mathcal N(B)$ are disjoint, the variation
distance between the two mixtures is exactly the variation distance
between their mixing weights.
\end{proof}

\subsection{Uniform conclusion across power scales}

\begin{proof}[Proof of \cref{thm:local-poisson}]
The count $S=Y(B)$ in \cref{eq:local-superpositions} is a sum of
independent Bernoulli variables with success probabilities $p_a=Lw_a$.
Consequently,
\begin{equation}
 \sum_a p_a=L,\qquad
 \lambda_2=L^2\beta_{N,m},\qquad
 \lambda_3=L^3\gamma_{3;N,m}.
\end{equation}
The assumption $L\varepsilon_{N,m}\leq\frac14$ permits the application of
\cref{lem:first-poisson-charlier}.  Together with
the reduction \eqref{eq:common-conditional-locations} and
\cref{eq:local-P2}, it gives
\begin{equation}\label{eq:bernoulli-source-leading}
 d_{\TV}(Y,\Pi_B)
 =
 A(L)\beta_{N,m}
 +O_L(\gamma_{3;N,m}+\beta_{N,m}^2).
\end{equation}
The reverse triangle inequality for a metric, followed by
\cref{eq:local-configuration-approximation}, now yields
\begin{align}
 \left|
 d_{\TV}(X,\Pi_B)
 -A(L)\beta_{N,m}
 \right|
 \leq
 C_L\left(
 \gamma_{3;N,m}+\beta_{N,m}^2+\gamma_{4;N,m}
 \right).                                      \label{eq:local-process-leading}
\end{align}
Let $W=X(B)=\Xi_{N,m}(B)$ be the actual local count.
The total-variation distance between the counts $W$ and $S$ is
bounded by that between the full configurations $X$ and $Y$:
\begin{equation}
 d_{\TV}(W,S)\leq d_{\TV}(X,Y).
\end{equation}
Another application of the reverse triangle inequality, using
\cref{lem:first-poisson-charlier,%
eq:local-configuration-approximation}, gives the corresponding count
bound.  Adding it to \cref{eq:local-process-leading} proves
\cref{eq:main-local-poisson}.

It remains to verify the uniform asymptotic assertion; recall \eqref{eq:local-power-sum-bounds} and the fact that
\begin{equation}\label{eq:local-epsilon-bound}
 \varepsilon_{N,m}
 \leq\frac1m+\frac1N.
\end{equation}
This means that the largest non-empty block has size at most $N/m+1$.
Thus the right side of \cref{eq:main-local-poisson} is at most
$C_L\varepsilon_{N,m}\beta_{N,m}$.  If $N,m\to\infty$, then
$\varepsilon_{N,m}\to0$, with no condition on $m/N$.  Since $A(L)>0$,
division by $\beta_{N,m}$ proves both asserted asymptotics.
\end{proof}

The identity \eqref{eq:local-beta-exact} describes
$\beta_{N,m}$ across all power regimes. If $m=o(N)$ and $m\to\infty$, then
$\beta_{N,m}\sim1/m$.  If $m\asymp N$, the same formula covers every fixed
quotient and transitions between block dimensions.  If $m\geq N$, then
$\beta_{N,m}=1/N$ and the spectrum of $U^m$ is already a binomial point
process of $N$ independent uniform points; its remaining local discrepancy
from Poisson is the fixed-total-number correction.  Fixed $m$ is excluded:
then the local limit is a superposition of finitely many repulsive CUE bulk
processes rather than a Poisson process. Independence of the blocks and \eqref{eq:cue-factorial-moment} give
\begin{align}
 \mathbb E\Xi_{N,m}(B)&=L, \qquad  
 \operatorname{Var}\bigl(\Xi_{N,m}(B)\bigr)
 =L-L^2\beta_{N,m}+O_L(\gamma_{4;N,m}).
 \label{eq:local-count-variance}
\end{align}
The remainder follows from \eqref{eq:cue-kernel-cancellation}
and \eqref{eq:local-power-sums}. As $\gamma_{4;N,m}=o(\beta_{N,m})$ by
\eqref{eq:local-power-sum-bounds},
\begin{equation}
 L-\operatorname{Var}\bigl(\Xi_{N,m}(B)\bigr)
 \sim L^2\beta_{N,m}
 \qquad\text{as }N,m\to\infty.
\end{equation}
In particular, the count variance is smaller than the Poisson
variance $L$ for all sufficiently large $N,m$.
\vspace{-4mm}
\paragraph{Numerical evidence.}
We compute the distribution of the local eigenvalue count $W=\Xi_{N,m}(B)$ from the finite CUE kernels and Rains's decomposition, and examine its deviation from the Poisson probabilities $\pi_L(k)=\mathbb P(Z_L=k)$. For $L=4$ and $N=4m$, the left plot compares the rescaled differences $\beta_{N,m}^{-1}[\mathbb P(W=k)-\pi_L(k)]$ with the predicted leading correction $-\frac12\pi_L(k)[(k-L)^2-k]$. The values for $m=128$ lie closer to this profile than those for $m=16$. The right plot shows $d_{\TV}(W,Z_L)/(A(L)\beta_{N,m})$ approaching one for $L=0.25,1,4$, with both fixed and growing block dimensions. These comparisons support the predicted probability correction and leading total-variation constant for the parameter sequences tested.

\begin{figure}[H]
    \centering
\includegraphics[width=16cm]{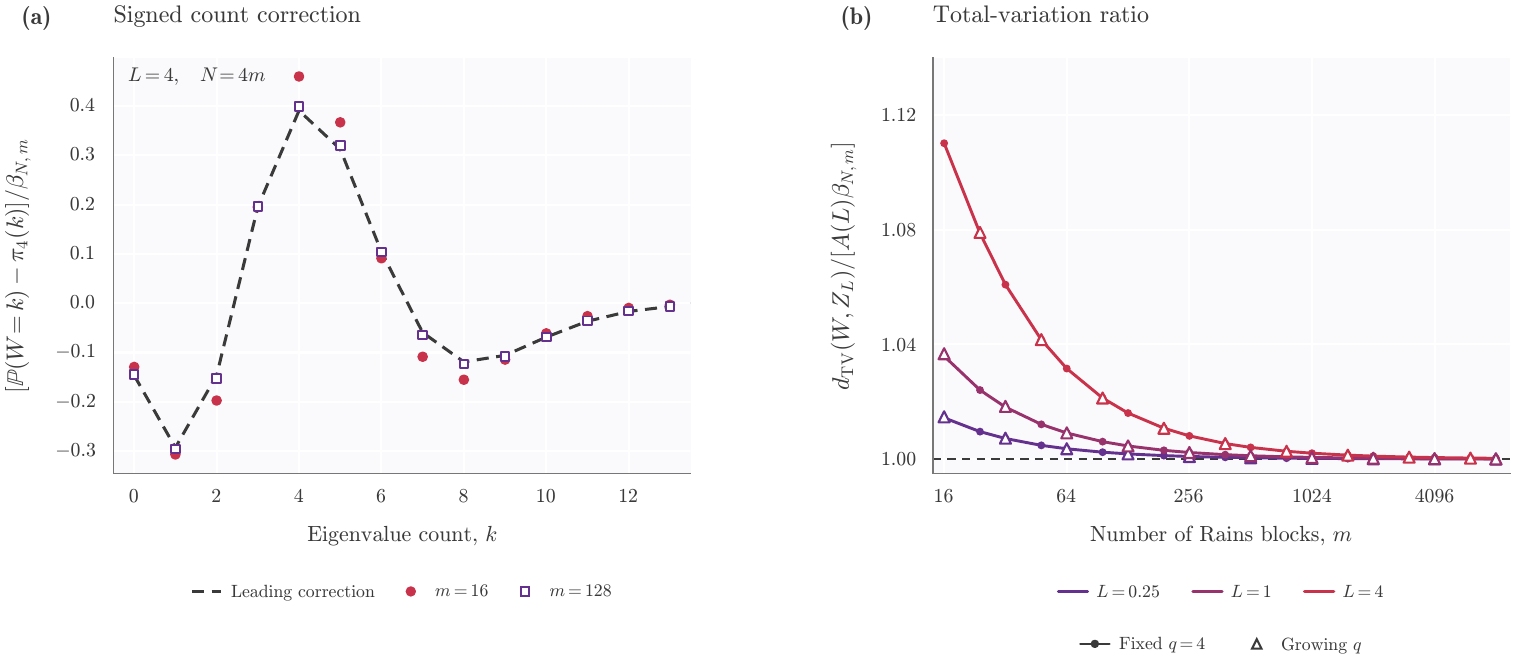}
    \caption{Local Poisson approximation for eigenvalue counts. (a) Rescaled probability differences compared with the predicted first correction (dashed curve). (b) Actual total-variation error divided by $A(L)\beta_{N,m}$, approaching the predicted limit one.}
    \label{fig:numerical-data}
\end{figure}

\section*{Acknowledgments}
This research was supported by the Heilbronn Institute for Mathematical Research. I would like to thank Nina Snaith and Brian Conrey for introducing me to this problem and for the numerous helpful discussions throughout my PhD relating to this topic.

\ifdefined\PedagogicalVersion\PedConclusionGuide\fi

\begingroup
\small
\bibliographystyle{plain}
\bibliography{ref}

@ARTICLE(kn:adlervanmoerbeke01,
    AUTHOR = "M. Adler and P. van Moerbeke",
    TITLE = "Integrals over classical groups, random permutations, {Toda} and {Toeplitz} lattices",
    JOURNAL = "Comm. Pure Appl. Math.",
    VOLUME = "54",
    YEAR = "2001",
    PAGES = "153--205",
    DOI = "10.1002/1097-0312(200102)54:2<153::AID-CPA2>3.0.CO;2-5")

@ARTICLE(kn:arratiabarbourtavare92,
    AUTHOR = "R. Arratia and A. D. Barbour and S. Tavar{\'e}",
    TITLE = "Poisson process approximations for the {Ewens} sampling formula",
    JOURNAL = "Ann. Appl. Probab.",
    VOLUME = "2",
    NUMBER = "3",
    YEAR = "1992",
    PAGES = "519--535",
    DOI = "10.1214/aoap/1177005647"
)

@ARTICLE(kn:basorwidom00,
    AUTHOR = "E. L. Basor and H. Widom",
    TITLE = "On a {Toeplitz} determinant identity of {Borodin} and {Okounkov}",
    JOURNAL = "Integral Equations Operator Theory",
    VOLUME = "37",
    YEAR = "2000",
    PAGES = "397--401",
    DOI = "10.1007/BF01192828")

@ARTICLE(kn:bornemann10,
    AUTHOR = "F. Bornemann",
    TITLE = "On the numerical evaluation of distributions in random matrix theory: A review",
    JOURNAL = "Markov Processes Relat. Fields",
    VOLUME = "16",
    YEAR = "2010",
    PAGES = "803--866")

@ARTICLE(kn:borodinokounkov00,
    AUTHOR = "A. Borodin and A. Okounkov",
    TITLE = "A {Fredholm} determinant formula for {Toeplitz} determinants",
    JOURNAL = "Integral Equations Operator Theory",
    VOLUME = "37",
    YEAR = "2000",
    PAGES = "386--396",
    DOI = "10.1007/BF01192827")

@ARTICLE(kn:bumdia02,
    AUTHOR = "D. Bump and P. Diaconis",
    TITLE = "Toeplitz Minors",
    JOURNAL = "Journal of Combinatorial Theory, Series A",
    VOLUME = "97",
    YEAR = "2002",
    PAGES = "252--271")

@BOOK(kn:chihara1978,
    AUTHOR = "T. S. Chihara",
    TITLE = "An Introduction to Orthogonal Polynomials",
    SERIES = "Mathematics and its Applications",
    VOLUME = "13",
    PUBLISHER = "Gordon and Breach",
    ADDRESS = "New York",
    YEAR = "1978"
)

@ARTICLE(kn:collinssniady06,
    AUTHOR = "B. Collins and P. {\'S}niady",
    TITLE = "Integration with respect to the {H}aar measure on unitary, orthogonal and symplectic group",
    JOURNAL = "Commun. Math. Phys.",
    VOLUME = "264",
    YEAR = "2006",
    PAGES = "773--795",
    DOI = "10.1007/s00220-006-1554-3")

@ARTICLE(kn:courteautjohanssonlambert24,
    AUTHOR = "K. Courteaut and K. Johansson and G. Lambert",
    TITLE = "From {Berry--Esseen} to super-exponential",
    JOURNAL = "Electron. J. Probab.",
    VOLUME = "29",
    YEAR = "2024",
    PAGES = "1--48",
    NOTE = "Paper No. 11",
    DOI = "10.1214/23-EJP1068")

@MISC(kn:decreusefondvasseur18,
    AUTHOR = "L. Decreusefond and A. Vasseur",
    TITLE = "{Stein}'s method and {Papangelou} intensity for {Poisson} or {Cox} process approximation",
    HOWPUBLISHED = "arXiv preprint arXiv:1807.02453",
    YEAR = "2018",
    DOI = "10.48550/arXiv.1807.02453"
)

@ARTICLE(kn:diaconisevans01,
    AUTHOR = "P. Diaconis and S. N. Evans",
    TITLE = "Linear functionals of eigenvalues of random matrices",
    JOURNAL = "Trans. Amer. Math. Soc.",
    VOLUME = "353",
    YEAR = "2001",
    PAGES = "2615--2633",
    DOI = "10.1090/S0002-9947-01-02800-8")

@ARTICLE(kn:diasha,
    AUTHOR = "P. Diaconis and M. Shahshahani",
    TITLE = "On the eigenvalues of random matrices",
    JOURNAL = "Journal of Applied Probability",
    VOLUME = "31A",
    YEAR = "1994",
    PAGES = "49--62",
    DOI = "10.1017/S0021900200106989")

@ARTICLE(kn:doblerstolz11,
    AUTHOR = "C. D{\"o}bler and M. Stolz",
    TITLE = "{Stein}'s method and the multivariate {CLT} for traces of powers on the classical compact groups",
    JOURNAL = "Electron. J. Probab.",
    VOLUME = "16",
    YEAR = "2011",
    PAGES = "2375--2405",
    NOTE = "Paper No. 86",
    DOI = "10.1214/EJP.v16-960")

@ARTICLE(kn:duitsjohansson10,
    AUTHOR = "M. Duits and K. Johansson",
    TITLE = "Powers of large random unitary matrices and {Toeplitz} determinants",
    JOURNAL = "Trans. Amer. Math. Soc.",
    VOLUME = "362",
    YEAR = "2010",
    PAGES = "1169--1187")

@ARTICLE(kn:dyson62iii,
    AUTHOR = "F. J. Dyson",
    TITLE = "Statistical theory of the energy levels of complex systems. {III}",
    JOURNAL = "J. Math. Phys.",
    VOLUME = "3",
    YEAR = "1962",
    PAGES = "166--175",
    DOI = "10.1063/1.1703775")

@BOOK(kn:folland99,
    AUTHOR = "G. B. Folland",
    TITLE = "Real Analysis: Modern Techniques and Their Applications",
    EDITION = "Second",
    PUBLISHER = "John Wiley and Sons, Inc.",
    ADDRESS = "New York",
    YEAR = "1999")

@ARTICLE(kn:forwit04,
    AUTHOR = "P. J. Forrester and N. S. Witte",
    TITLE = "Application of the $\tau$-function theory of {P}ainlev{\'e} equations to random matrices: ${P}_{VI}$, the {JUE}, {C}y{UE}, c{JUE} and scaled limits",
    JOURNAL = "Nagoya Math. J.",
    VOLUME = "174",
    YEAR = "2004",
    PAGES = "29--114")

@ARTICLE(kn:fulman12,
    AUTHOR = "J. Fulman",
    TITLE = "{Stein}'s method, heat kernel, and traces of powers of elements of compact {Lie} groups",
    JOURNAL = "Electron. J. Probab.",
    VOLUME = "17",
    YEAR = "2012",
    PAGES = "1--16",
    NOTE = "Paper No. 66",
    DOI = "10.1214/EJP.v17-2251")

@ARTICLE(kn:gessel90,
    AUTHOR = "I. M. Gessel",
    TITLE = "Symmetric functions and {$P$}-recursiveness",
    JOURNAL = "J. Combin. Theory Ser. A",
    VOLUME = "53",
    YEAR = "1990",
    PAGES = "257--285")

@ARTICLE(kn:jiangmatsumoto15,
    AUTHOR = "T. Jiang and S. Matsumoto",
    TITLE = "Moments of traces of circular beta-ensembles",
    JOURNAL = "Ann. Probab.",
    VOLUME = "43",
    NUMBER = "6",
    YEAR = "2015",
    PAGES = "3279--3336",
    DOI = "10.1214/14-AOP960")

@ARTICLE(kn:johansson97,
    AUTHOR = "K. Johansson",
    TITLE = "On random matrices from the compact classical groups",
    JOURNAL = "Ann. of Math.",
    VOLUME = "145",
    YEAR = "1997",
    PAGES = "519--545",
    DOI = "10.2307/2951843")

@ARTICLE(kn:johanssonlambert21,
    AUTHOR = "K. Johansson and G. Lambert",
    TITLE = "Multivariate normal approximation for traces of random unitary matrices",
    JOURNAL = "Ann. Probab.",
    VOLUME = "49",
    YEAR = "2021",
    PAGES = "2961--3010",
    DOI = "10.1214/21-AOP1520")

@BOOK(kn:macdonald1995,
    AUTHOR = "I. G. Macdonald",
    TITLE = "Symmetric Functions and Hall Polynomials",
    EDITION = "Second",
    PUBLISHER = "Oxford University Press",
    ADDRESS = "Oxford",
    YEAR = "1995")

@BOOK(kn:meckes19,
    AUTHOR = "E. S. Meckes",
    TITLE = "The Random Matrix Theory of the Classical Compact Groups",
    SERIES = "Cambridge Tracts in Mathematics",
    VOLUME = "218",
    PUBLISHER = "Cambridge University Press",
    YEAR = "2019")

@MISC(kn:dlmf10,
    AUTHOR = "{NIST Digital Library of Mathematical Functions}",
    TITLE = "Chapter 10: {Bessel} Functions",
    URL = "https://dlmf.nist.gov/10",
    HOWPUBLISHED = "\url{https://dlmf.nist.gov/10}")

@ARTICLE(kn:pasturvasilchuk04,
    AUTHOR = "L. Pastur and V. Vasilchuk",
    TITLE = "On the moments of traces of matrices of classical groups",
    JOURNAL = "Commun. Math. Phys.",
    VOLUME = "252",
    YEAR = "2004",
    PAGES = "149--166",
    DOI = "10.1007/s00220-004-1231-3")

@ARTICLE(kn:rains97,
    AUTHOR = "E. M. Rains",
    TITLE = "High powers of random elements of compact {L}ie groups",
    JOURNAL = "Prob. Theory Relat. Fields",
    VOLUME = "107",
    YEAR = "1997",
    PAGES = "219--241",
    DOI = "10.1007/s004400050084")

@ARTICLE(kn:rains03,
    AUTHOR = "E. M. Rains",
    TITLE = "Images of eigenvalue distributions under power maps",
    JOURNAL = "Probab. Theory Related Fields",
    VOLUME = "125",
    YEAR = "2003",
    PAGES = "522--538",
    DOI = "10.1007/s00440-002-0245-5")

@ARTICLE(kn:soshnikov00,
    AUTHOR = "A. Soshnikov",
    TITLE = "The central limit theorem for local linear statistics in classical compact groups and related combinatorial identities",
    JOURNAL = "Ann. Probab.",
    VOLUME = "28",
    YEAR = "2000",
    PAGES = "1353--1370",
    DOI = "10.1214/aop/1019160338")

@ARTICLE(kn:soshnikovwu23,
    AUTHOR = "A. Soshnikov and C. Wu",
    TITLE = "A note on cumulant technique in random matrix theory",
    JOURNAL = "Entropy",
    VOLUME = "25",
    NUMBER = "5",
    YEAR = "2023",
    PAGES = "725",
    DOI = "10.3390/e25050725")

@BOOK(kn:szego75,
    AUTHOR = "G. Szeg{\H{o}}",
    TITLE = "Orthogonal Polynomials",
    SERIES = "American Mathematical Society Colloquium Publications",
    VOLUME = "23",
    EDITION = "Fourth",
    PUBLISHER = "American Mathematical Society",
    YEAR = "1975")

@ARTICLE(kn:weingarten78,
    AUTHOR = "D. Weingarten",
    TITLE = "Asymptotic behavior of group integrals in the limit of infinite rank",
    JOURNAL = "J. Math. Phys.",
    VOLUME = "19",
    YEAR = "1978",
    PAGES = "999--1001",
    DOI = "10.1063/1.523807")
\endgroup
\end{document}